\documentclass{article}
\usepackage[T1]{fontenc}
\usepackage[utf8]{inputenc}
\usepackage{amsmath}											
\usepackage{amsfonts}
\usepackage{amssymb}
\usepackage{amsthm}
\usepackage{bbm}
\usepackage[normalem]{ulem}
\usepackage{caption}
\usepackage{subcaption}
\usepackage{comment}
\usepackage{esint}
\usepackage[a4paper, total={6.5in, 8in}]{geometry}
\usepackage{graphicx}
\usepackage{mathtools}
\usepackage[final]{showlabels}
\usepackage{xcolor}
\usepackage[section]{placeins}

\DeclareMathOperator{\Op}{Op}
\DeclareMathOperator{\Tr}{Tr}
\DeclareMathOperator{\tr}{tr}
\DeclareMathOperator{\sgn}{\text{sign}}
\DeclareMathOperator{\supp}{supp}

\newcommand{\norm}[1]{\left\lVert#1\right\rVert}
\newtheorem{proposition}{Proposition}

\newtheorem{theorem}[proposition]{Theorem}
\newtheorem{remark}{Remark}
\newtheorem{corollary}[proposition]{Corollary}
\numberwithin{proposition}{section}

\usepackage{tikz}
\usetikzlibrary{backgrounds}

\iftrue

\newcommand{\tb}[1]{{{{#1}}}}

\newcommand{\tm}[1]{{{{#1}}}}

\newcommand{\sq}[1]{{{{#1}}}}

\newcommand{\pc}[1]{{#1}}
\fi

\newcommand{\eps}{\varepsilon}
\newcommand{\Rm}{\mathbb{R}}
\newcommand{\Cm}{\mathbb{C}}
\newcommand{\Zm}{\mathbb{Z}}
\newcommand{\Nm}{\mathbb{N}}

\newcommand{\dint}{\displaystyle\int}

\newcommand{\aver}[1]{\langle #1 \rangle}
\newcommand{\sign}{\text{sign}}

\newcommand{\fs}{\mathfrak{S}}

\newcommand{\kone}{k_1}
\newcommand{\ktwo}{k_2}
\newcommand{\tx}{\tilde{x}}
\newcommand{\ty}{\tilde{y}}
\newcommand{\fm}{u}

\newcommand{\Hmu}{H^{(\mu)}}

\newcommand{\projP}{\bar{P}}
\newcommand{\genH}{\tilde{H}}
\newcommand{\pert}{W}
\newcommand{\Uminusone}{V}
\newcommand{\Umu}{U^{(\mu)}}

\newcommand{\anot}{\alpha_0}
\newcommand{\sigmamu}{\sigma^{(\mu)}}

\newcommand{\sbz}{\tau_{\pm,z}}
\newcommand{\invariantplus}{I_+}
\newcommand{\invariantminus}{I_-}
\newcommand{\invariantpm}{I_\pm}
\newcommand{\modfs}{\bar{\fs}}

\numberwithin{equation}{section}


\title{Mathematical models of topologically protected transport in twisted bilayer graphene}
\author{Guillaume Bal  \thanks{Departments of Statistics and Mathematics and CCAM, University of Chicago, Chicago, IL 60637; {\tt guillaumebal@uchicago.edu}} \and Paul Cazeaux \thanks{ Department of Mathematics, Virginia Tech, Blacksburg, VA 24060; {\tt cazeaux@vt.edu}} \and Daniel Massatt \thanks{ Department of Mathematics, Louisiana State University, Baton Rouge, LA 70803;  {\tt dmassatt@lsu.edu}} \and Solomon Quinn \thanks{Department of Statistics and CCAM, University of Chicago, Chicago, IL 60637; {\tt solomonquinn@uchicago.edu}}}

\begin{document}

\maketitle

\begin{abstract}
    Twisted bilayer graphene gives rise to large moir\'{e} patterns that form a triangular network upon mechanical relaxation. If gating is included, each triangular region has gapped electronic Dirac points that behave as bulk topological insulators with topological indices depending on valley index and the type of stacking. Since each triangle has two oppositely charged valleys, they remain topologically trivial.
    
    In this work, we address several questions related to the edge currents of this system by analysis and computation of continuum PDE models. 
    Firstly, we derive the bulk invariants corresponding to a single valley, and then apply a bulk-interface correspondence to quantify asymmetric  transport along the interface.
    Secondly, we introduce a valley-coupled continuum model to show how valleys are approximately decoupled in the presence of small perturbations using a multiscale expansion, and how valleys couple for larger defects. 
    Thirdly, we present a method to prove for a large class of continuum (pseudo-)differential models that a quantized asymmetric current is preserved through a junction such as a triangular network vertex.

    We support all of these arguments with numerical simulations using spectral methods to compute relevant currents and wavepacket propagation.
\end{abstract}

\section{Introduction}

Twisted bilayer graphene (tBLG) is widely studied for its unique mechanical and electronic properties including the magic angle superconductivity \cite{pablo2018,macdonald_2011}. Upon gating, it acts as host of a network of topological interface channels, which can be seen experimentally and theoretically \cite{san_jose_2013,klaus2018,carr_2018,Peeters_2018}.  tBLG is constructed by taking two periodic 2D sheets of graphene and stacking them with a relative twist, typically small. The atoms relax to minimize energy, forming large triangular regions of the energetically favorable AB and BA Bernal stacking \cite{kaxiras_2021,carr_2018, srolovitz2016}. It is relevant to note this asymmetric transport under gating is a separate phenomena from magic angle superconductivity. Indeed, the asymmetric transport phenomena only requires sufficiently small twist angles and vertical gating, i.e. by inducing a potential difference between the two layers, while superconductivity can only occur precisely
at the magic twist angles. In this section, we make use of Figure \ref{fig:tblg} for discussion of the geometry. 

The interior of one of these triangles can be considered as approximately an infinite periodic material \tm{as the side of one of these triangles scales inversely proportional to the small twist angle \cite{carr_2018}}. We note that in this work we will consider only continuum models and not lattice models, but for understanding of the purpose and origins of the continuum models in this work, it is useful to see the lattice model origins. If $\mathcal{L}$ is the lattice matrix corresponding to the periodicity of the material, we define the reciprocal lattice unit cell as $\Gamma^* := 2\pi \mathcal{L}^{-T}[0,1)^2$. Consider periodic Hamiltonian $H$ describing the infinitely extended bulk corresponding to one of the triangular regions. Its spectra, which in turn describes electronic properties, can be described by the Bloch states satisfying
\begin{equation}
    H(\xi) \psi^{(j)}(\xi) = E^{(j)}(\xi)\psi^{(j)}(\xi), \qquad \xi \in \Gamma^*
\end{equation}
where $H(\xi) = e^{-i\xi \cdot r} H e^{i\xi\cdot r}$. This expression is understood as operator composition. We assume $E^{(1)}(\xi) \leq E^{(2)}(\xi) \leq \cdots$ are the ordered eigenvalues (or energies). Gating opens a band gap at $E=0$, meaning all $E^{(j)}(\xi)$ are bounded away from $0$. A topological integer index can be associated with the operator $H$ by the formula
\begin{equation}\label{eq:IH}
    I[H] := \sum_{j\geq  n}\frac{i}{2\pi } \int_{\Gamma^*} d(\psi^{(j)},d\psi^{(j)}).
\end{equation}
Here $n$ is the smallest index such that $E^{n}(\xi) > 0$ for all $\xi \in \Gamma^*$. \tm{Many of these more complex Hamiltonians can be well approximated by simpler continuum models that capture the relevant dispersion relations \cite{Hughes_2013, fruchart_2013, witten_2016}. In the case of a bilayer graphene,} $I[H]$ can be computed by expanding only around the Dirac points $K,K' \in \Gamma^*$ as they possess the topological information \cite{Hughes_2013, Drouot:19b, fruchart_2013}. 
\pc{
Such effective moiré-scale continuum Hamiltonians for low-energy physics of bilayer graphene, in particular the Bistritzer-MacDonald model~\cite{macdonald_2011}, have been recently derived in the small twist angle regime from atomic-scale models: in~\cite{cances2022simple}, using a formal variational approximation of the tBLG Kohn-Sham Hamiltonian, and in~\cite{watson2023bistritzer}, by identifying an appropriate asymptotic parameter regime of the discrete tight-binding model.}
A local in momentum Hamiltonian $H(\xi)$ can be constructed for each of these valleys through a  $4\times 4$ model as follows:
\begin{equation}
\label{e:gated_bulk}
    H(\xi) := \begin{pmatrix} \Omega I + \xi_1 \sigma_1 + \eta \xi_2 \sigma_2 & \lambda U^* \\ \lambda U & -\Omega I + \xi_1 \sigma_1 + \eta \xi_2 \sigma_2 \end{pmatrix}.
\end{equation}
Here $\eta = 1$ for valley $K$ and $\eta = -1$ for valley $K'$, $\tm{2}\Omega\in \mathbb{R}$ measures the electrostatic potential difference between the two layers due to vertical gating, and $U \in \{A,A^*\}$ models an interlayer coupling term with strength $0\not=\lambda\in\mathbb{R}$, where $A = \begin{pmatrix} 0& 1 \\ 0 &0 \end{pmatrix}$ \cite{McCann_2013}: $A^*$ and $A$ correspond to whether the triangle has $AB$ or $BA$ stacking geometry respectively. This Hamiltonian has  been rescaled to ensure Fermi velocity is $1$ and unitless, i.e. there is no prefactor in front of the Dirac term $\xi_1\sigma_1 + \xi_2\sigma_2$. A standard result in topological insulators is the bulk-interface correspondence \cite{Hughes_2013,fruchart_2013}, which states that if two topological materials are glued together at an interface, then there is quantized edge transport given by the difference of the topological index on either side of the interface; \tm{see} \cite{bal3,B-higher-dimensional-2021,Drouot:19b,QB} for the derivation of the correspondence for differential models. In bilayer graphene the $K$ and $K'$ valleys contribute opposite signed topological charge, so gated bilayer graphene is technically trivial. However, it is understood that the two valleys in many settings decouple, with each valley Hamiltonian having possibly nontrivial topology, which is the basis of {\em valleytronics} \cite{Eugene_2013, ValleyNature_2016}. 

\medskip

In this work, we address gated tBLG via numerics and analysis of continuum models built to address three different questions corresponding to the two regions highlighted in Figure \ref{fig:tblg}. For the first question addressed in Section \ref{sec:tblg}, we use the $4\times 4$ single-valley bulk model, compute its topological invariant, and then using the corresponding interface model apply the bulk-interface correspondence  at an edge between two triangles (region $1$). In particular, we make use of the bulk-difference invariant \cite{bal3} between AB and BA stacked graphene for a single valley to find a quantized current of $\pm 2$. In Section \ref{sec:valleys}, we focus on analyzing  the separation of the $K$ and $K'$ valley at the interface (region $1$). Since the $4\times 4$ bilayer graphene model is only accurate near the valleys, we study valley coupling with a toy $2\times 2$ model that exhibits two conical Dirac points. To model defects, we consider a two-scale model to introduce fast small fluctuations, and show the valley coupling is weak in this regime. We use numerics and scattering theory to further show that as the defects are tuned to couple the valleys, then valley separation breaks and quantized edge current is lost. 
In Section \ref{sec:junctions}, we define a conductivity that quantifies the asymmetric transport through a junction (e.g. region $2$). Following pseudodifferential calculus arguments from \cite{QB}, we derive an analytic formula for this junction conductivity and show that the latter is immune to perturbations. This allows for explicit evaluations of the conductivity for both the $4\times 4$ bilayer graphene and $2 \times 2$ Dirac models.
Finally in Section \ref{sec:numerics}, we include numerical verification of results from the preceding sections using spectral methods to properly capture the Dirac operator. Direct computation of traces is used to compute currents either isolated to the edge (Region 1) and currents through a junction (Region 2), and wavepacket propagation is used to illustrate the connection between the two regions by evolving a state localized on an edge through a network junction.

\begin{figure}[ht]
\centering
\includegraphics[width=.5\textwidth]{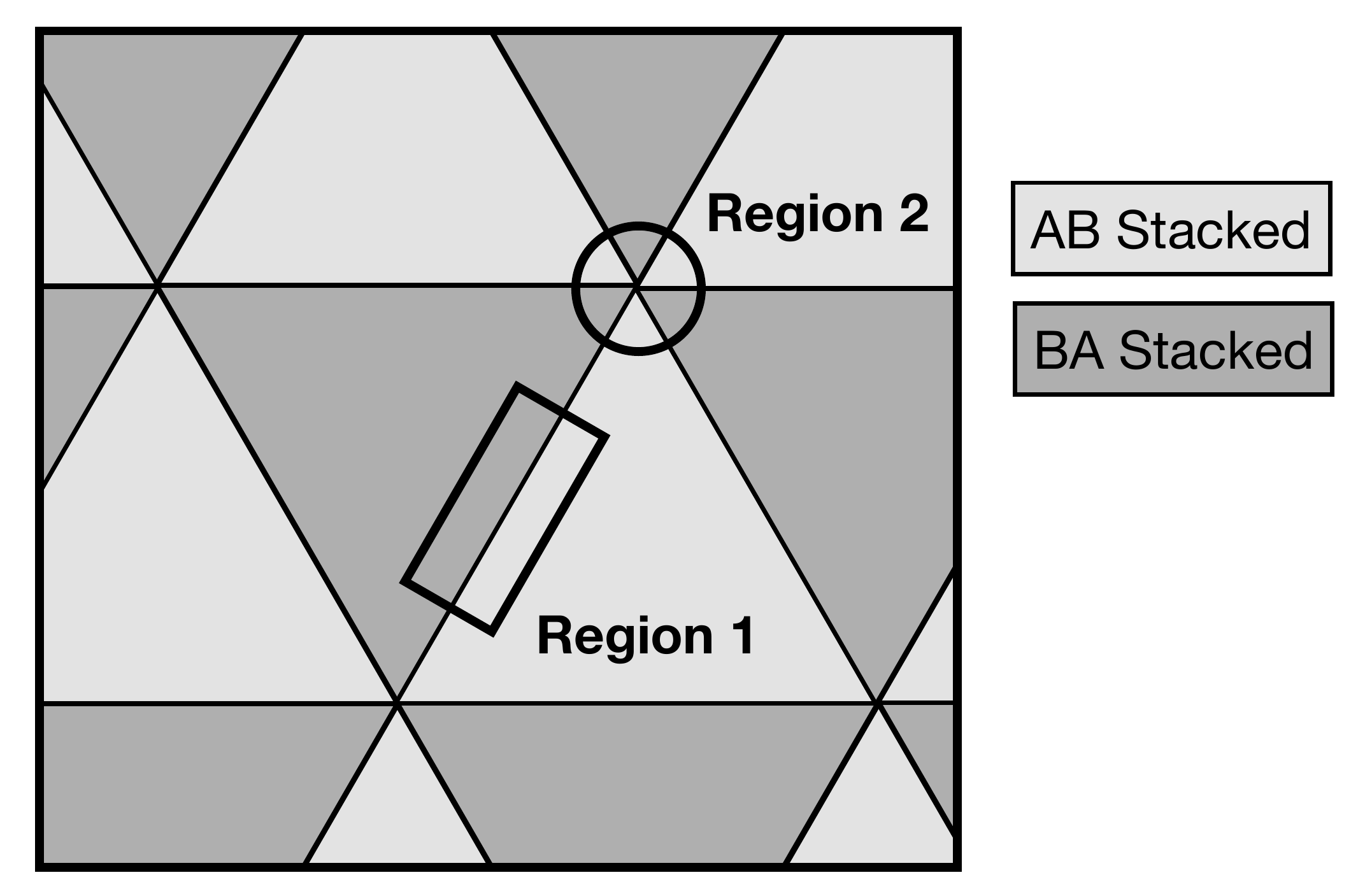}
\caption{Triangular domains form in mechanically relaxed tBLG. We highlight two regions of interest for our analysis. Region $1$ corresponds to an edge, where a standard bulk-interface correspondence will be studied as well as a continuum valley coupling model. Region 2 is at a junction, where we will analyze current conservation.}
\label{fig:tblg}
\end{figure}

\section{Gated tBLG model}
\label{sec:tblg}

In this section, we analyze region 1 from Figure \ref{fig:tblg} and use the bulk-interface correspondence to show quantized asymmetric transport for a single valley. To begin, we consider the bulk Hamiltonian in \eqref{e:gated_bulk} described by the continuum model
\[
H = \begin{pmatrix}  \Omega I + D \cdot \sigma^{(\eta)} & \lambda U^* \\ \lambda U & -\Omega I + D\cdot\sigma^{(\eta)}\end{pmatrix}.
\]
Here $\sigma^{(\eta)} := (\sigma_1,\eta\sigma_2)$, and $\xi \cdot \sigma^{(\eta)} := \xi_1\sigma_1 + \eta \xi_2\sigma_2$ where we recall $\eta \in \pm 1$ is the valley index. We will sometimes denote  $\sigma^{(1)}$ by $\sigma$. This operator in two-dimensional physical variables is an unbounded operator on $L^2(\Rm^2;\Cm^4)$. 
The operator in the Fourier variables is given as in \eqref{e:gated_bulk}  for each $\xi\in\Rm^2$ by the $4\times4$ matrix:
\[
H(\xi) = \begin{pmatrix} \Omega I+ \xi \cdot \sigma^{(\eta)} & \lambda U^* \\ \lambda U & -\Omega I+ \xi \cdot \sigma^{(\eta)}\end{pmatrix}.
\]
These models are defined on a plane $\mathbb{R}^2$, and hence are not defined over a compact manifold. To construct topological invariants, we introduce as in \cite{bal3} the bulk-difference invariant describing  the difference between AB and BA stacked graphene, the two bulks enclosing an interface. The two bulk Hamiltonians are given by
\begin{align*}
    & H_+(\xi) = \begin{pmatrix} \Omega I + \xi\cdot\sigma^{(\eta)} & \lambda A^* \\ \lambda A & -\Omega I + \xi\cdot\sigma^{(\eta)} \end{pmatrix}, 
    & H_-(\xi) = \begin{pmatrix} \Omega I + \xi\cdot\sigma^{(\eta)} & \lambda A \\ \lambda A^* & -\Omega I + \xi\cdot\sigma^{(\eta)} \end{pmatrix}.
\end{align*}

\tb{These $4\times4$ Hermitian matrices may be diagonalized with eigenvectors denoted by $\psi_\pm^{(j)}(\xi)$ for $1\leq j\leq 4$. The eigenprojectors $(\pm,\xi)\to \Pi^{(j)}_\pm(\xi) := |\psi_\pm^{(j)}(\xi)\rangle\langle\psi_\pm^{(j)}(\xi)|$ are uniquely defined as the eigenvalues are all simple  $E_1(\xi)<E_2(\xi)<0<E_3(\xi)<E_4(\xi)$ and independent of the phase $\pm$ as we will show in the appendix. Since $\xi\in\mathbb{R}^2$ is simply connected, the normalized eigenvectors $\psi_\pm^{(j)}(\xi)$, a priori defined up to a multiplicative phase, may be chosen smoothly in $\xi$; an explicit expression is presented in Appendix \ref{sec:aa}. Associated to each projector $\Pi$ is a curvature defined by ${\rm tr}\, \Pi\, d\Pi\wedge d\Pi$. We introduce the following integrals
\begin{equation}
\label{e:Wpm}
    W_\pm^j := \frac{i}{2\pi}  \int_{\mathbb{R}^2} {\rm tr}\, \Pi^{(j)}_\pm d\Pi^{(j)}_\pm \wedge d\Pi^{(j)}_\pm = \frac{i}{2\pi}  \int_{\mathbb{R}^2} d(\psi_\pm^{(j)},d\psi_\pm^{(j)})
    =\frac{i}{2\pi}  \lim_{R\to\infty} \dint_{{\mathbb S}_R^1}(\psi_\pm^{(j)},d\psi_\pm^{(j)}).
\end{equation}
Here, ${\mathbb S}_R^1$ is the circle of radius $R$ with counterclockwise orientation. The second equality can be obtained by explicit calculations. The third equality is a consequence of the Stokes theorem and the smoothness of the vector field $\psi_\pm^{(j)}(\xi)$. 

We will show that $H_\pm(\xi)$ are gaped at energy $E=0$. As a consequence, we expect $W_\pm:=W_\pm^3+W_\pm^4$ 
corresponding to the integral of the curvature of the rank-two projectors $\Pi_\pm=\chi(H_\pm>0)$ to have topological significance. A main difference with \eqref{eq:IH}, however, is that unlike $\Gamma^*$, $\mathbb{R}^2$ is not compact. When $H$ is a Dirac operator for instance, we observe that $W_\pm\in\{-\frac12,\frac12\}$ is not an integer \cite{bal3}. 
This motivates the definition of the bulk-difference invariant
\begin{equation}\label{eq:W}
    W := W_+ - W_- = W_+^3+W_+^4 - (W_-^3+W_-^4).
\end{equation}
The invariant is modeling a transition between the bulk Hamiltonians $H_+$ and $H_-$. We will show that $W$ is indeed an integral-valued invariant unlike its components $W_\pm^j$. Heuristically, this shows that it is easier to define a phase transition between insulators rather than absolute phases, which are often ill-defined for partial differential models such as Dirac equations.}


We will compare this bulk invariant to an edge current, which describes asymmetric transport along an interface separating the bulk phases and is defined next. Firstly, we introduce an edge model by making the bulk above into a continuum model with an interface:
\begin{equation}\label{eq:He}
    H_e := \begin{pmatrix} \Omega I + D \cdot \sigma^{(\eta)} & \lambda U^*(y) \\ \lambda U(y) & -\Omega I + D \cdot \sigma^{(\eta)}\end{pmatrix}
\end{equation}
with \sq{$U(y) = \frac{1}{2}((1+m(y)) A + (1-m(y)) A^*)$} and $D = (D_x,D_y) = \frac{1}{i}(\partial_x,\partial_y)$. We choose $m$ such that for large $y$ we obtain BA stacking, while for large negative $y$ we obtain AB stacking.  In order to do this, we define the switch function space $\modfs(c_1,c_2;y_1,y_2)$ 
\sq{as follows. Given $f_1,f_2 \in \mathbb{C}$ and $c_1, c_2 \in \mathbb{R}$ with $c_1 \le c_2$, we say that a function $f: \mathbb{R} \rightarrow \mathbb{R}$ belongs to $\modfs(f_1,f_2;c_1,c_2)$ if and only if 
\begin{align}\label{eq:sfdef}
    f(y) = 
    \begin{cases}
    f_1, & y < c_1\\
    f_2, & y > c_2
    \end{cases}.
\end{align}
We then define the space of \emph{smooth switch functions} by $\fs(f_1,f_2;c_1,c_2)=\modfs(f_1,f_2;c_1,c_2) \cap \mathcal{C}^\infty (\mathbb{R})$.}

We assume \sq{$m \in \fs(-1,1;-y_0, y_0)$} for some $y_0 > 0$. 
We define an edge state capturing energy function $\varphi \in \fs(0,1;-E_0,E_0)$ where $E_0 > 0$ is chosen to be smaller than the bulk band gap. We have not yet shown there is a bulk band gap, but prove this below in Theorem \ref{thm:tblg-bulk}. We note that $\varphi'$ is supported in the gap, and thus $\varphi'(H_e)$ captures edge state spectra only. Finally we define $P \in \fs(0, 1; -x_0, x_0)$ for some $x_0 > 0$, which in the limit $x_0\to0$ approximates the Heaviside function. The rate of propagation of current through the edge of $P$ is given by \cite{B-bulk-interface-2019,bal3,elbau2002equality,prodan2016bulk}
\begin{equation}\label{eq:sigmaI}
    \sigma_I := \text{Tr} \,i[H_e,P]\varphi'(H_e).
\end{equation}

The main objective of this section is to show that $2\pi \sigma_I = W = -2\eta\sgn{\Omega}$. The first equality is a bulk-interface correspondence, which has been derived in a variety of contexts \cite{B-bulk-interface-2019,bal3,elbau2002equality,prodan2016bulk,QB}, and in particular in \cite{bal3,QB} for the operators of interest here. The advantage of such a correspondence is that the explicit computation of $W$ is often simpler than that of $\sigma_I$. We will obtain the former by adapting results derived in \cite{massatt2021}.

The Fourier transform of $H_e$ can be taken along the interface, obtaining the Hamiltonian
\begin{equation}
H_e(\xi_1) = \begin{pmatrix} \Omega I + \xi_1\sigma_1 + \eta D_{y} \sigma_2 & \lambda U^*(y) \\ \lambda U(y) & -\Omega I + \xi_1 \sigma_1 + \eta D_{y} \sigma_2\end{pmatrix}.
\end{equation}
In Kronecker product notation, this yields
\begin{equation}
    H_e(\xi_1) = \Omega \sigma_3\otimes I_2 + I_2 \otimes (\xi_1\sigma_1 + \eta D_y \sigma_2) + \frac{1}{2} (\sigma_1 \otimes \sigma_1 + m(y) \sigma_2\otimes\sigma_2).
\end{equation}
In Figure \ref{fig:edge_state}, we plot the bulk band structure and superimpose the eigenvalues within the bulk band gap of a periodized version of $H_e(\xi_1)$, where we only include eigenvalues weighted towards one interface. This is necessary as the periodization creates a second domain wall with oppositely directed edge states. We then see  that there are two right-ward propagating states, as expected (two curves crossing the bulk band gap with positive group velocity $\partial E/\partial\xi$). We note that the symmetry in the edge states $E(\xi_1) = -E(-\xi_1)$ comes from the following symmetry operation. Let $K = \sigma_2 \otimes \sigma_1$. Then 
$$K^{-1}H_e(\xi_1)K = -H_e(-\xi_1).$$

To prove that $W$ is an invariant, the first step will be verifying there is indeed a bulk gap, which turns out to have a ring shape. We then use this to calculate the bulk-difference invariant $W$. 
See Figure \ref{fig:tblg_bs} to see concentration of curvature on the ring and the corresponding band structure. 
Combining the above yields our main result:

\begin{figure}[ht]
\begin{subfigure}{.5\textwidth}
\centering
\includegraphics[width=1\textwidth]{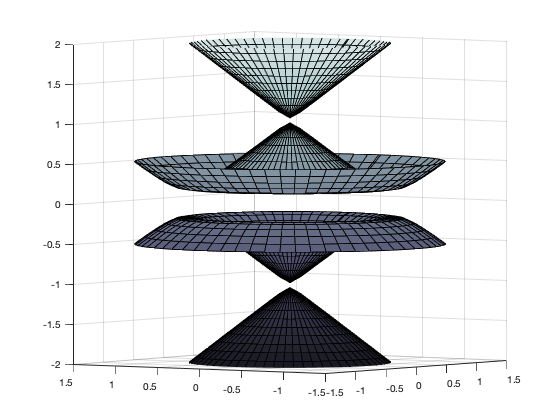}
\end{subfigure}
\begin{subfigure}{.5\textwidth}
    \centering
\includegraphics[width=1\textwidth]{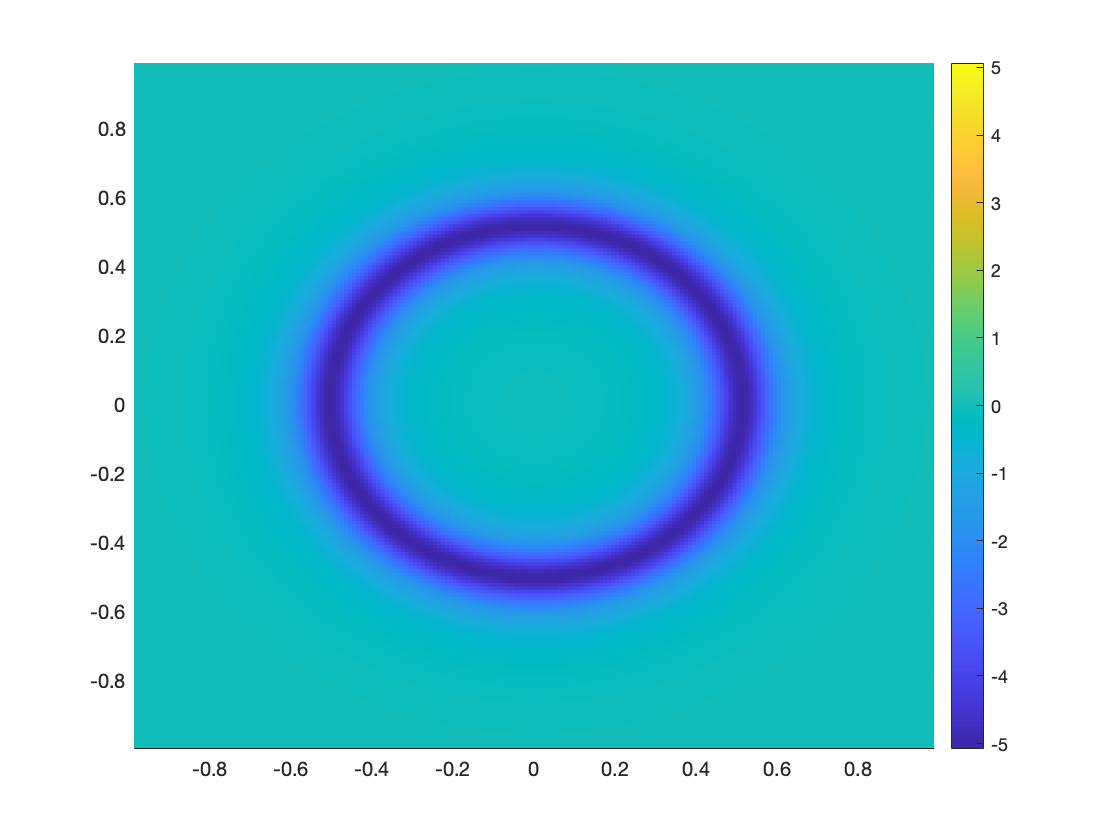}
\end{subfigure}
\caption{Here we plot the eigenvalues of $H_+(\xi)$, i.e. the band structure on the left. On the right, we plot the curvature, i.e., the integrand of $W_+^3+W_+^4$ in \eqref{e:Wpm}. Here we set $\lambda = 0.2$ and $\Omega = 1$. As expected, curvature concentrates on the line $|\xi| = 1$ for $|\lambda|$ small.
}
\label{fig:tblg_bs}
\end{figure}
\begin{figure}[htbp!]
    \centering
    \includegraphics[width=.6\textwidth]{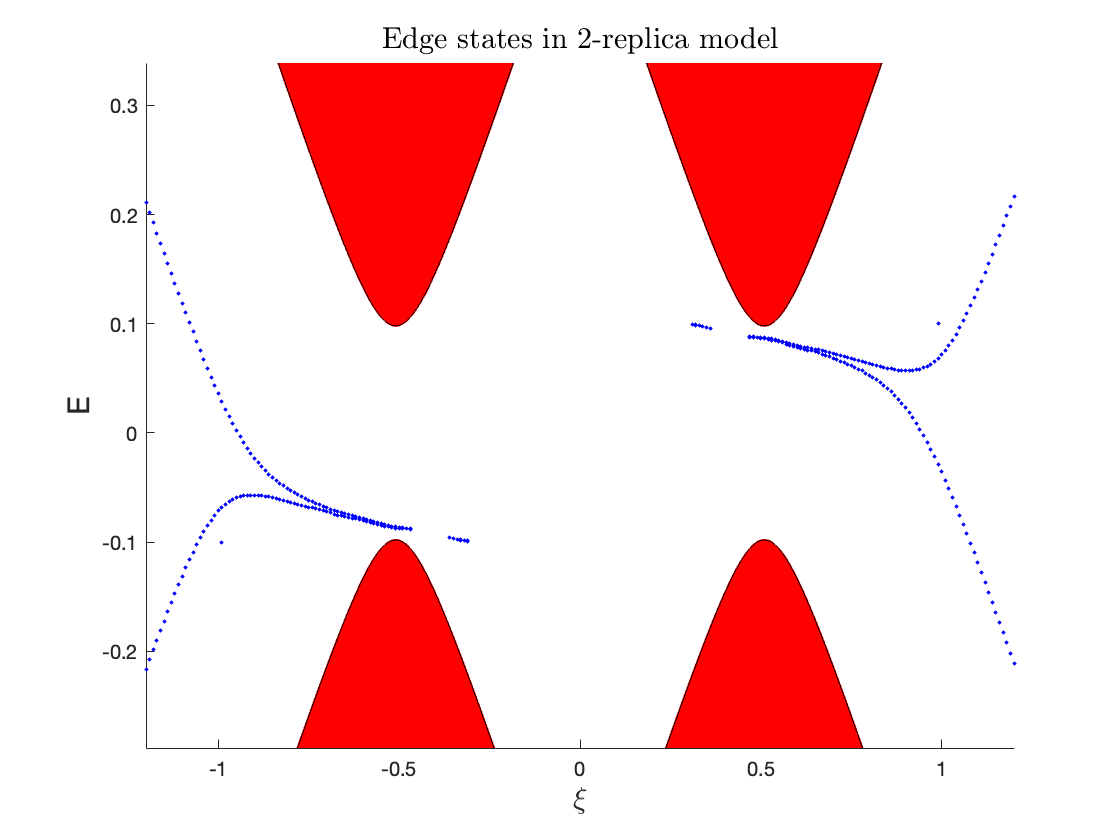}
    \caption{Bulk band structure (filled) is plotted along with calculated edge state eigenvalues (points) of a periodized $H_e(\xi_1)$. In particular, the plotted interface eigenvalues are the nearest two eigenvalues to $E = 0$.}
    \label{fig:edge_state}
\end{figure}
\begin{theorem}
\label{thm:tblg-bulk}
The bulk system $H$  has a band gap for all $\lambda \neq 0$. Further, 
\begin{equation}\label{eq:invthm1}
   \tb{-} W_\pm = \pm \eta \, \sign(\Omega).
\end{equation}
The bulk-interface correspondence then gives us
\begin{equation}\label{eq:invthm2}
    2\pi \sigma_I = W = \tb{-}2\eta \, \sign(\Omega).
\end{equation}
\end{theorem}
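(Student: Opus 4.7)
The plan is to prove the three claims of the theorem in order: the band gap assertion, the explicit evaluation of $W_\pm$, and the bulk-interface identity $2\pi\sigma_I = W$.

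For the band gap, I would compute the spectrum of $H_\pm(\xi)$ directly. Writing $D=\xi\cdot\sigma^{(\eta)}$ and exploiting the Clifford relation $D^2=|\xi|^2 I_2$, the diagonal blocks of $H_\pm(\xi)^2$ simplify to $(\Omega^2+|\xi|^2)I_2\pm 2\Omega D$, while the off-diagonal blocks take the form $\lambda\{D,U^*\}$ and $\lambda\{D,U\}$, which evaluate to explicit linear combinations of $\xi_1,\xi_2$ multiplied by fixed $2\times 2$ matrices. Reading off $\det H_\pm(\xi)$ as a polynomial in $|\xi|^2$, $\Omega$ and $\lambda$ then shows that the determinant is bounded away from $0$ whenever $\lambda\neq 0$, giving the gap. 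As a heuristic check: at $\lambda=0$ the two decoupled Dirac cones cross at $|\xi|=|\Omega|$, and turning on $\lambda$ opens this crossing into a ring-shaped gap minimum, consistent with Figure \ref{fig:tblg_bs}.

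To compute $W_\pm$, I would use the third equality in \eqref{e:Wpm}, which expresses each $W_\pm^j$ as the limit of $\tfrac{i}{2\pi}\oint_{\mathbb{S}_R^1}(\psi_\pm^{(j)},d\psi_\pm^{(j)})$ as $R\to\infty$. Because $H_\pm(\xi)$ is dominated by $I_2\otimes \xi\cdot\sigma^{(\eta)}$ at large $|\xi|$, I would expand the eigenvectors $\psi_\pm^{(j)}(\xi)$ asymptotically, diagonalizing the leading Dirac block together with the $O(|\xi|^{-1})$ correction that captures the dependence on $\Omega$, $\lambda$, and the stacking phase $\pm$. The line integral then reduces to the winding number of a single phase on $\mathbb{S}_R^1$, yielding $W_\pm^j\in\{-\tfrac12,\tfrac12\}$. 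Summing the two positive-energy bands $j=3,4$ and tracking how $\eta$ and $\sgn(\Omega)$ enter the asymptotic eigenvectors gives $-W_\pm=\pm\eta\sgn(\Omega)$; the explicit eigenvector formulas promised in Appendix \ref{sec:aa} should make this bookkeeping direct.

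Given $W_\pm$, the third claim follows from the bulk-interface correspondence for Dirac-type pseudodifferential operators in \cite{bal3,QB}, applied to $H_e$ in \eqref{eq:He}; this yields $2\pi\sigma_I=W=W_+-W_-=-2\eta\sgn(\Omega)$. The main obstacle is the second step: the half-integer values $W_\pm^j\in\{\pm\tfrac12\}$ are sensitive to the phase choice of the asymptotic eigenvectors, so each $W_\pm^j$ is not manifestly topological on $\mathbb{R}^2$. What is topological, and what the argument relies on, is that the half-integer ambiguities cancel in the difference $W=W_+-W_-$, producing an integer; this is the mechanism behind the bulk-difference invariant of \cite{bal3}, and exploiting it is the cleanest way to handle the non-compactness of the momentum domain here.
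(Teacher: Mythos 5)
Your overall skeleton---explicit spectrum for the gap, reduction of $W_\pm^j$ to a circle-at-infinity integral via the last equality in \eqref{e:Wpm}, then the bulk-interface correspondence of \cite{QB} applied to $H_e$---is the same as the paper's, but the central computational claim in your second step is wrong, and it is exactly the point the paper singles out as the surprise of this model. The limiting eigenvectors as $|\xi|\to\infty$ are \emph{not} a fixed vector times a single phase: for $\Omega>0$, $\lambda>0$, $\eta=1$ the fourth band of $H_+$ tends to $c_\infty(\hat{\bar\xi}\,\beta,\ \beta,\ \lambda,\ \lambda\hat\xi)$ with $\beta=2\Omega+\sqrt{4\Omega^2+\lambda^2}$, so different components wind with different phases and the boundary integral is a modulus-squared-weighted combination of windings, not a single winding number. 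The resulting values are $-W_+^4=\frac{\lambda^2-\beta^2}{2(\lambda^2+\beta^2)}$ and $-W_+^3=\frac{\lambda^2+3\beta^2}{2(\lambda^2+\beta^2)}$: they are gauge-independent, vary continuously with $(\Omega,\lambda)$, and do not lie in $\{\pm\tfrac12\}$ (as $\lambda\to0$ they tend to $-\tfrac12$ and $\tfrac32$); only the sum $W_+=W_+^3+W_+^4=-1$ is quantized. Consequently your closing mechanism (``each band gives $\pm\tfrac12$; the half-integer ambiguities cancel in $W_+-W_-$'') does not describe this model: each $W_\pm$ is already the integer $\mp\eta\,\sgn(\Omega)$ before any difference is taken, the non-integrality of the per-band terms comes from the non-compactness of $\mathbb{R}^2$ rather than from a phase ambiguity (the curvature form ${\rm tr}\,\Pi\,d\Pi\wedge d\Pi$ is gauge invariant), and the gluing of \cite{bal3} is used in the paper for a different statement, namely that the cross-band combinations $W_+^4-W_-^3$ and $W_+^3-W_-^4$ are integer Chern numbers.

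Two further points you would need to repair to carry out your route. First, the reduction to the circle at infinity requires a globally smooth eigenvector field on $\mathbb{R}^2$; the naive eigenvector formula divides by $E-\Omega$, which vanishes at $\xi=0$ on the band $E_3$, so a different smooth representative must be constructed there (the paper does this explicitly), and the explicit construction also uses $(\Omega+E)^{-1}$ and degenerates for $\Omega<0$---the paper obtains the case $\Omega>0$, $\lambda>0$, $\eta=1$ by direct computation and then propagates all sign and valley dependence in \eqref{eq:invthm1} through conjugation by the symmetries $S_1=\sigma_1\otimes I_2$, parity, and $S_3=\sigma_1\otimes\sigma_1$, rather than re-tracking the asymptotics. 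Second, for the gap, a determinant bounded away from zero does not by itself bound the smallest eigenvalue away from zero on the non-compact momentum space; the paper instead solves the characteristic equation, $E_\pm^2=\Omega^2+\tfrac12\lambda^2+|\xi|^2\pm\sqrt{(4\Omega^2+\lambda^2)|\xi|^2+\tfrac14\lambda^4}$, and exhibits the positive minimum $E_{\min}^2=\Omega^2\lambda^2/(4\Omega^2+\lambda^2)$, which is also the $E_0$ quoted after Theorem \ref{thm:tblg-bulk}. Your final step, verifying the ellipticity hypotheses of \cite{QB} for $H_e$ and invoking the bulk-interface correspondence to get \eqref{eq:invthm2}, coincides with the paper's.
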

\begin{figure}[htbp!]
\centering
\includegraphics[width=.6\linewidth]{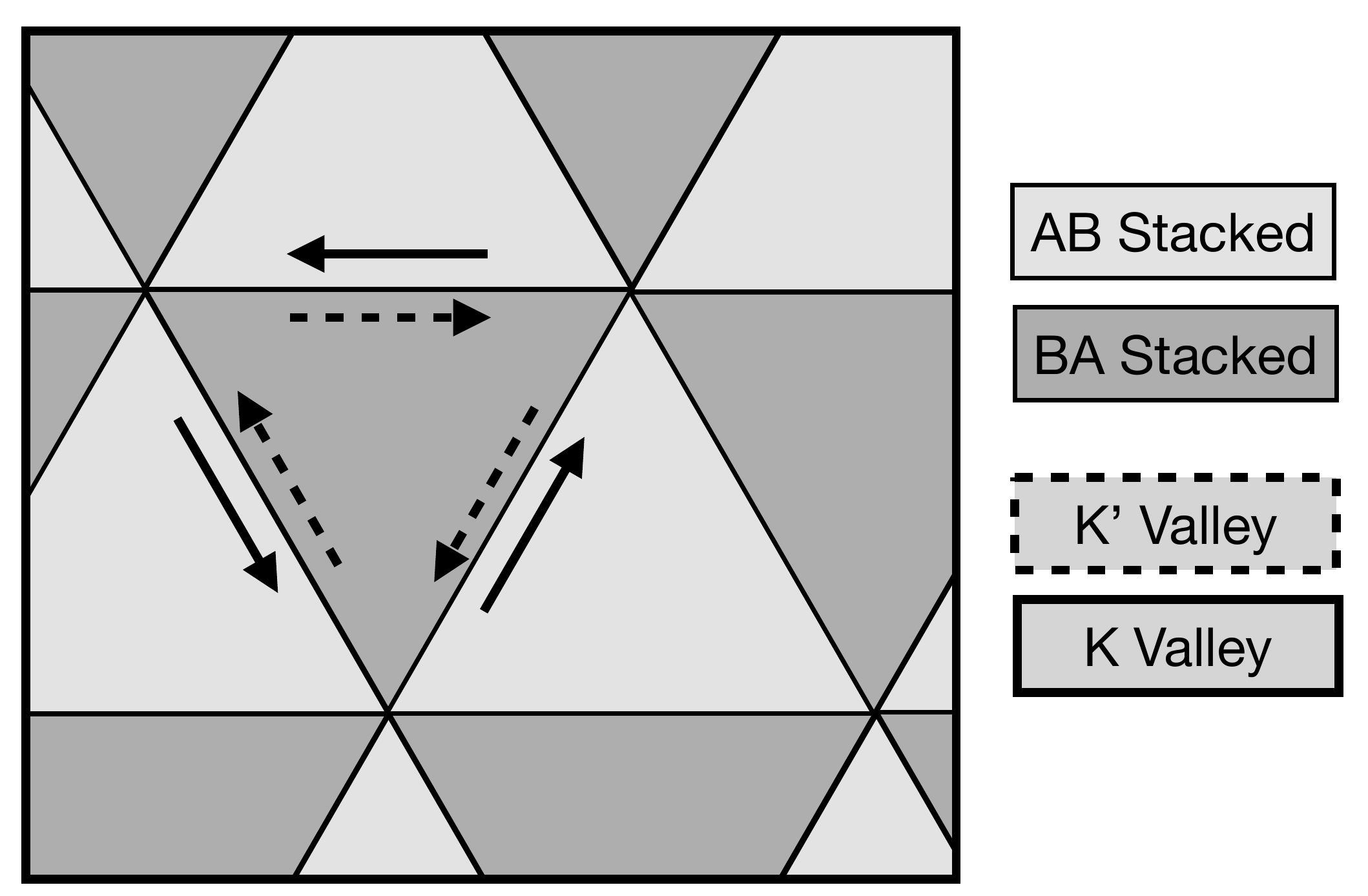}
\caption{ $K$ and $K'$ valley current directions are shown for each edge. Note that the currents per valley are always moving in opposite directions. }
\label{fig:edge_currents}
\end{figure}
\tb{ The proof of this result is postponed to Appendix A. The proof provides a more detailed expression for the components describing $W$. In particular, assume $\Omega>0$ and $\lambda>0$ for concreteness and define $\beta=2\Omega+\sqrt{4\Omega^2+\lambda^2}$. We will show that
\[
  -W^3_+= +W^3_- = \frac{\lambda^2+3\beta^2}{2(\lambda^2+\beta^2)},\quad -W^4_+ = + W^4_- = \frac{\lambda^2-\beta^2}{2(\lambda^2+\beta^2)}.
\]
This is a surprising result. If $W^j_\pm$ involved an integration over a compact domain such as $\Gamma^*$, then each of these numbers would be guaranteed to be an integer (a Chern number). We already mentioned that for the Dirac operator $W_\pm=\pm\frac12$. Here, however, the terms $W^j_\pm$ take a continuous range of values as $(\Omega,\lambda)$ vary. Moreover, we show in Appendix~\ref{sec:aa} using the gluing procedure proposed in~\cite{bal3} that $W^3_+-W^4_-$ and $W^4_+-W^3_-$ are both topological invariants (Chern numbers) with integral values (here equal to $1$). We thus obtain the peculiar result that the gluing of the projector of the fourth/third energy band of the AB Hamiltonian with (an appropriate rotation of) the projector of the third/fourth energy band of the BA Hamiltonian defines a topological invariant. The details of this computation are provided in Appendix~\ref{sec:aa}. 

The results of Appendix~\ref{sec:aa} also provide a spectral gap about $0$ of size $2E_0=\frac{2|\Omega||\lambda|}{\sqrt{4\Omega^2+\lambda^2}}$. So the above theorem applies for a spectral density $\varphi'(h)$ in \eqref{eq:sigmaI} supported in the interval $(-E_0,E_0)$.
}

We note that we assumed an ordering to the AB and BA stacking. If we switch the order, we would swap the sign of $W$, and hence $\sigma_I$. Hence each edge has an edge current dependent on the order of the AB-BA regions and also the valley index of interest (see Figure~\ref{fig:edge_currents}).

We also note that the structure of the Hamiltonians $H$ and $H_e$ are identical to the so called {\em 2-replica model} static Hamiltonian used to approximate the dynamics of the circularly polarized laser-driven graphene system, which is a Floquet Topological Insulator~\cite{massatt2021, Perez_2014}. The reason these two very different physical systems share a common Hamiltonian is that in both situations the bulk systems are described as a coupling between Dirac cones shifted relative to each other in energy. The overlapping cones' intersection forms a ring, and the coupling between these Dirac cones hybridizes the bands to form ring-shaped band gaps. We remark that we expect bulk multilayer stacked graphene with all layers either AB stacked or BA stacked form the larger {\em replica} systems corresponding to a collection of uniformly shifted Dirac cones. Here the number of replicas or shifted Dirac cones corresponds to the number of sheets, while in the Floquet setting the number corresponded to the number of included time Fourier frequencies. In keeping with the Floquet terminology, we also refer to our Hamiltonians $H$ and $H_e$ as a {\em 2-replica model} as it corresponds to two shifted Dirac cones.

\section{Valley Coupling}\label{sec:valleys}

As we saw from the previous section, both valley Hamiltonians are topologically non-trivial. However, if valley coupling is allowed, then the total asymmetric current summed over the two valleys vanishes making the total system topologically trivial.


In section \ref{subsec:valley_model}, we introduce a simplified differential operator that captures the two-valley physics, and construct one-dimensional and two-dimensional effective valley-coupling models using a multiscale expansion. In section \ref{subsec:conductivity}, we then define a valley conductivity object. In section \ref{subsec:numericalvalleys}, we use numerics to calculate currents demonstrating what regime of defects yield good valley separation, and when does valley decoupling fail. In section \ref{subsec:scattering} we use scattering theory to explain the numeric results from section \ref{subsec:numericalvalleys}.

\subsection{Valley coupling models}
\label{subsec:valley_model}

To model a two-valley system and inter-valley coupling, consider first the bulk Hamiltonian without mass
\[
H_b = \Phi(D_x)\sigma_1 + D_y \sigma_2.
\]
\tm{Here $\Phi \in C(\mathbb{R})$ is a function with exactly two zeros corresponding to two Dirac points. We will assume $\Phi$ is twice differentiable at these zeros.}
In frequency space, we denote the wavenumber pair $(\xi,\zeta)\in\mathbb{R}^2$. We have changed the notation from section \ref{sec:tblg} as now we wish to emphasize the direction along the edge $\xi$. We have
\[
H_b(\xi,\zeta) = \Phi(\xi)\sigma_1 + \zeta \sigma_2,
\]
which admits band structure.  We will assume for simplicity of presentation throughout this section that the Dirac points are located at frequencies $(\xi,\zeta) = (\pm 1,0)$.
\begin{figure}[ht]
\centering
\includegraphics[width=.6\textwidth]{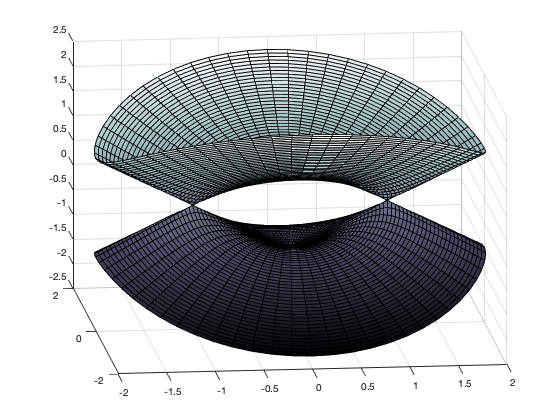}
\caption{Band structure of two-valley linear Dirac Hamiltonian, $\Phi(\xi) = 1-|\xi|$.}
\label{fig:bs_v3}
\end{figure}
For example, see Figure \ref{fig:bs_v3} for $\Phi(\xi) = |\xi| - 1$. Another natural choice for $\Phi$ leading to a differential operator $H_b$ is $\Phi(\xi)=\frac{1}{2}(\xi^2-1)$. In both cases, the valley at Dirac points $\xi = 1$ and $\xi = -1$ have topological indices given by $1$ and $-1$ respectively.
We will  primarily focus on small highly oscillatory defects as a model for distribution of small defects. 

Consider a semiclassical theory with $0<\eps\ll1$ scaling near Dirac points so that the two valleys are separated by wavenumbers of order $2\eps^{-1}$. We focus on a small energy window near $E=0$ and thus analyze an equation of the form
\begin{align}
\label{e:semiclassical}
    \biggl(\Phi(\eps D_x) \sigma_1 + \eps D_y \sigma_2 + \lambda \frac y\eps \sigma_3 + \eps V(x,\frac x\eps,\frac y\eps) - \eps E\biggr)\psi = 0.
\end{align}
In the absence of a domain wall ($\lambda=0$) and perturbing potential ($V=0$), assuming $|\Phi'(\pm 1)| = 1$, this system has a circular \tm{dispersion relation level set} at small energies as is typical for Dirac points. 

We consider a formal derivation in the limit $\eps\ll1$ of an inter-valley coupling model whose qualitative predictions will be confirmed by numerical simulations. While many steps can undoubtedly be justified more rigorously using methodology such as in \cite{weinstein2011}, we will not do so here.

The domain wall $\lambda\frac y\eps \sigma_3$ is assumed to be steep with $\lambda\gg1$ as this prevents coupling between various confined modes in $y$. Introduce the microscopic variable $Y=y/\eps$ and look for solutions of the form $\psi(Y=\frac y\eps,x)$. Applied to such functions, \eqref{e:semiclassical} becomes
\begin{align}\label{eq:multiy}
     \big(D_Y \sigma_2 +  \lambda Y \sigma_3 + \Phi(\eps D_x) \sigma_1 + \eps V(x,\frac x\eps,Y) - \eps E\big)\psi=0.
\end{align}
When $\lambda$ is large, then the two first terms dominate in the sense that 
\[
  (D_Y \sigma_2 +  \lambda Y \sigma_3)\phi_0(Y) =0,\quad \phi_0(Y) = (\lambda \pi)^{-1/4} 
  e^{-\frac12\lambda  Y^2} \cdot \frac{1}{\sqrt{2}}
  \begin{pmatrix} 1 \\ -1 \end{pmatrix},
\]
while $D_Y \sigma_2 +  \lambda Y \sigma_3$ applied to any other Hermite function provides a term of order $|\lambda|\gg1$ formally much larger than the term $\Phi(\eps D_x) \sigma_1$.

Decomposing $\psi(Y,x)=\phi_0(Y)\psi(x)$ for a scalar function $\psi(x)$ thus neglecting higher-order Hermite functions in $Y$, we obtain from \eqref{eq:multiy} (multiplying by $\phi_0^T$ and integrating in $Y$) a reduced Hamiltonian for $\psi(x)$ given by


\begin{align}\label{eq:model1d}
    \Big(-\Phi(\eps D_x) + \eps V(x,\frac x\eps) -\eps E \Big) \psi =0.
\end{align}
Here, $V(x,X)=(\phi_0(Y),V(x,X,Y)\phi_0(Y))_Y$. We also used that the current
\[
  (\phi_0,\sigma_1\phi_0)_Y=-1.
\]
In both cases, $(\cdot,\cdot)_Y$ is the usual inner product on $L^2(\Rm_Y;\Cm^2)$.

Define now multiscale functions $f(x,X=\frac x\eps)$  in the variable $x$ so that $D_x$ becomes $D_x+\eps^{-1}D_X$. Then the multiscale operator is
\begin{align}
   -\Phi(D_X) + (-\Phi'(D_X)D_x+V(x,X)-E)\eps + O( \eps^2).
\end{align}
Going to the Fourier domain $X\to K$ we obtain
\begin{align}
   -\Phi(K) + (-\Phi'(K)D_x+\hat V(x,K) * - E)\eps + O( \eps^2).
\end{align}
Here $*$ is convolution in the Fourier variable $K$.
To leading order we get
\begin{align}\label{eq:psi0}
    \Phi(K)\psi_0=0,\qquad \mbox{ i.e., }\quad \psi_0=\psi_+(x)\delta(K-1)+\psi_-(x)\delta(K+1).
\end{align}
The next order is
\begin{align}
    -\Phi(K)\psi_1 -\Phi'(K) D_x \psi_0 + \int \hat V(x,K-K')\psi_0(x,K') dK' - E\psi_0 =0.
\end{align}
We thus find the coupled equations
\begin{align}\label{eq:coupledpsi}
    -\Phi'( 1) D_x \psi_+ +\hat V(x,0)\psi_+ + \hat V(x,2) \psi_- &=E \psi_+\\
    -\Phi'(-1) D_x \psi_- +\hat V(x,0)\psi_- + \hat V(x,-2) \psi_+ &=E \psi_-.
\end{align}
This generates inter-valley coupling if by valleys we mean the wavenumbers close to $K =\pm1$.
Letting
\begin{align}
\label{e:h2}
    H_2 = \begin{pmatrix}  -\Phi'(1)D_x +\hat V(x,0) & \hat V(x,2) \\ \hat V(x, -2) & - \Phi'(-1) D_x + \hat V(x,0)\end{pmatrix},
\end{align}
we obtain the eigenproblem $H_2\psi = E \psi$, $\psi = (\psi_+,\psi_-)^T$, as a one-dimensional model incorporating valley coupling. 

We now build a two-dimensional Hamiltonian that incorporates a domain wall and valley coupling. Generalizing the above Hamiltonian to include a $y$-dependent potential,
\begin{align}
    H_2^y = \begin{pmatrix}  -\Phi'(1)D_x +\hat V(x,0,y) & \hat V(x,2,y) \\ \hat V(x, -2,y) & - \Phi'(-1) D_x + \hat V(x,0,y)\end{pmatrix},
\end{align}
we then define the $4\times4$ system
\begin{equation}
 \label{e:h4}
     H_4 = D_y I_2 \otimes \sigma_2 + \lambda y I_2 \otimes \sigma_3 + H_2^y \otimes \sigma_1.
\end{equation}
We choose this Pauli matrix representation such that $H_4$ is consistent with $H_2$ when $V$ is $y$-independent. In particular, for $e_\pm$ being the standard basis of $\mathbb{R}^2$ corresponding to valley index, $e_\pm \otimes \phi_0(y)$ are still in the kernel of the domain wall terms (first two terms), and $\phi_0(y)$ is an eigenfunction of $\sigma_1$, meaning when $V$ is $y$-independent the edge state eigenfunctions of $H_2$ naturally generalize to eigenfunctions of $H_4$. 
 
The operators in \eqref{e:h2} and \eqref{e:h4} are used for the numerical simulations of valley coupling effects.
\begin{remark}
Instead of $\eps V(x,\frac x\eps)$, we could consider a larger-amplitude lower-frequency perturbation $V(x)$ in \eqref{e:semiclassical}, another form of defect studied numerically in \cite{macdonald_junction, ValleyNature_2016}. Let us make a few brief remarks on this setting. In the semiclassical limit of large valley separation of order $2\eps^{-1}$, we derive as above the one-dimensional semiclassical problem
\begin{align}
    (-\Phi(\eps D_x) + V(x) -E)\psi=0.
\end{align}
Consider the well-studied case $\Phi(\xi)=1-\xi^2$ for concreteness and see, e.g., \cite{sjostrand} for an analysis of such semiclassical problems. 

We may find to arbitrary algebraic accuracy in powers of $\eps$ solutions of the form
\begin{align}
    \psi(x) = \sum_{\pm} e^{i S_\pm(x)/\eps} a_{\pm}(x;\eps),  \quad -\Phi(\partial_x S_\pm) + V(x) -E=0.
\end{align}
The latter eikonal equation for $S_\pm$ admits two solutions 
\begin{align}
   S_\pm(x) = \pm \int_0^x (\Phi(0) +E- V(z))^{\frac12} dz.
\end{align}
We assume $\Phi(0)+E-V(0)>0$ so that propagating modes exist at $x_0$ with local wavenumber given by $\xi=\sqrt{\Phi(0)+E-V(0)}$ at the energy of interest $E$. Assuming $\Phi(0)+E-V(x)$ remains positive for all values of $x$, then $\psi$ decomposes into two components $e^{i S_\pm(x)/\eps} a_{\pm}(x;\eps)$ that do not interact, resulting in extremely limited valley coupling. 

Now, if $\Phi(0)+E-V(x)<0$ is possible, then such values of $x$ are forbidden energetically. Points $x_0$ such that $\Phi(0)+E-V(x_0)=0$, assuming they are isolated (i.e., $V'(x_0)\not=0$), are (caustic) turning points. Any signal arriving at $x_0$ from the area where $\Phi(0)+E-V(x)>0$ will fully turn around so that the amplitudes $a_\pm$ are now fully coupled.  See, e.g., \cite{gerard1988precise} for a detailed analysis of coupling (scattering) coefficients in various semiclassical settings. 

We thus obtain a situation of extremely limited coupling if the eikonals $S_\pm$ remain real valued, and full valley-coupling at turning points if the eikonals $S_\pm$ may become complex-valued. We do not consider such a case further here.

\end{remark}

\subsection{Valley Conductivity}
\label{subsec:conductivity}
Now that we have constructed valley Hamiltonians $H_2$ and $H_4$, we define valley contributions to conductivity so that we can study the separation of valleys under varying forms of perturbation.
Recall we originally considered the conductivity object $\sigma_I= \Tr{i[H,P]\varphi'(H)}$ with $P(x)$ a regularized Heaviside function. Here we consider $H$  either $H_2$ or $ H_4$. However, for the two-valley system $\sigma_I$ will always be $0$ as when the two valleys are combined, we have a topologically trivial system. 

We thus introduce a conductivity corresponding to a single valley so we can see the separation in valley contributions to conductivity. $H_2$ and $H_4$ are naturally structured to separate the valleys, making the definition of single valley conductivities reasonably straightforward. \tm{To motivate our definition of valley conductivity,} consider a wavefunction $(\psi_+,\psi_-)^t$ and define \tm{valley switch function}
\begin{equation}\label{eq:valleyprojector}
   P_+= \begin{pmatrix}
    P & 0 \\ 0 & 0
   \end{pmatrix}.
\end{equation}
We consider the current on one side of the interface generated by $P$ via the expression $(\psi_+,\dot P_+\psi_+)= \Tr{\dot P_+ \psi_+\otimes \psi_+^*}$. If $U(t) = e^{-i Ht}$ is the evolution of the Schr\"odinger equation,
then 
\begin{equation}
    \dot P_+ := \partial_t (U(t)P_+U^*(t)) = U(t)[H,P_+]U^*(t).
\end{equation} 
We wish to rewrite including all wavefunctions corresponding to the edge. We pick $\varphi \in \fs(0,1;-E_0,E_0)$ with $[-E_0,E_0]$ strictly in the bulk gap to capture spectral interface information. In particular, $\varphi'(H)$ captures all relevant modes. 
Decomposing the Hamiltonian $H$ as $(H_{ij})$ for $1\leq i,j\leq 2$ valley indices, we obtain
\[
 [H,P_+] = \begin{pmatrix}
  [H_{11},P] & -P H_{12} \\ H_{21} P & 0
 \end{pmatrix}
\]
so that, with the same decomposition of $\varphi'(H)$, 
\[
  [H,P_+]\varphi'(H) = \begin{pmatrix}
   [H_{11},P] \varphi'_{11}(H) - P H_{12}\varphi'_{21}(H) & * \\ * & H_{21} P \varphi'_{12}(H)
  \end{pmatrix}.
\]
The matrix trace therefore gives
\begin{equation}\label{eq:valleyconductivity}
  \sigma_{I,+}(H) = \Tr i[H,P_+]\varphi'(H) = \Tr i[H_{11},P] \varphi'_{11}(H) + \Tr i[H_{21},P] \varphi'_{12}(H).
\end{equation}
The conductivity for the opposite valley is
\[
  \sigma_{I,-}(H):= \Tr i[H,P_-]\varphi'(H) = \Tr i[H_{22},P] \varphi'_{22}(H) + \Tr i[H_{12},P] \varphi'_{21}(H).
\]
For the above equation, $H$ is first-order with an off-diagonal term, which is a multiplication operator in the case of $H_2$ and $H_4$. This yields $[H_{21},P]=0$. 
Hence
\begin{align}
\sigma_{I,+}(H) = \Tr  i[H_{11},P]\varphi_{11}'(H),\qquad
\sigma_{I,-}(H) = \Tr  i[H_{22},P]\varphi_{22}'(H).
\end{align}
\subsection{Numerical example}
\label{subsec:numericalvalleys}


First, we illustrate the deviation of the valley conductivity~\eqref{eq:valleyconductivity} from the quantized value $\pm 1$ in the presence of a compactly supported potential $V$ which couples the two valleys, see~\eqref{e:semiclassical}. Consider the leading ansatz \eqref{eq:psi0} with
\begin{align}
   \psi_0(x) = e^{ix/\eps} \psi_+(x) +  e^{-ix/\eps} \psi_-(x)
\end{align}
with the two Dirac cones at frequencies $\pm \eps^{-1}$.
Plugging the ansatz into \eqref{eq:model1d}, dividing by $\eps$ and multiplying by $e^{\pm ix/\eps}$, we obtain, consistently with \eqref{eq:coupledpsi} after neglecting highly oscillatory terms $e^{\pm 2ix/\eps}D_x \psi_\pm$ and $e^{\pm 2ix/\eps}E$ that formally average to $0$ according to \eqref{eq:coupledpsi}, the following one-dimensional coupled Hamiltonian:
\begin{equation}\label{e:h2_num}
    H^\varepsilon_2 = \begin{pmatrix}  -D_x  & 0 \\  0 & D_x \end{pmatrix}
    + V(x,x/\varepsilon) \begin{pmatrix} 1 & e^{-2i  x/\varepsilon} \\ e^{2i  x/\varepsilon} & 1 \end{pmatrix},
\end{equation}
so that to leading order $H^\eps_2 \psi = E \psi$ for $\psi = (\psi_+(x) , \psi_-(x))^T$.
We choose a two-scale potential of the form:
\begin{equation}\label{eq:numpotential_1d}
    V(x, X) = V_0 \chi(x) \cos(\omega X),
\end{equation}
where $\chi(x)$ is a smooth, compactly supported function. Model~\eqref{e:h2_num} allows us to inspect the effects of the frequency of the fast component of the potential at a finite value of $\varepsilon$, such that some coupling is achieved at frequencies different from $\pm 2$, and homogenizes to~\eqref{e:h2} in the limit $\varepsilon \to 0$, when $\Phi'(\pm 1) = \pm 1$. 

Here, we perform computations on the interval $[-50,50]$ with periodic boundary conditions comparing all three models: \eqref{e:h2_num} with $\varepsilon = 1$ and potential~\eqref{eq:numpotential_1d}, and the models~\eqref{e:h2} (one-dimensional) and~\eqref{e:h4} (two-dimensional) obtained in the limit $\varepsilon \to 0$ with $\omega =2$ with respective effective potentials described by
\begin{equation}\label{eq:numpotential_h2_h4}
    \widehat{V}(x, 0) = 0, \quad  \widehat{V}(x, \pm 2) = \frac{V_0}{2} \chi(x) \quad \text{and} \quad \widehat{V}(x,0,y) = 0, \quad \widehat{V}(x,\pm 2,y) = \frac{V_0}{2} \chi(x) \xi(y).
\end{equation}
We choose functions $\chi(x)$ and $\xi(y)$ supported in the interval $[-20,20]$ and $[-10,10]$ and taking constant value $1$ in the interval $[-10,10]$ and $[-5,5]$, respectively. Details of the pseudo-spectral discretization approach are postponed to section~\ref{sec:numerics}. The resulting conductivity $\sigma^+$ is plotted on Figure~\ref{fig:valley_conductivity} as a function of $x_0$, the center of the spatial switch function~\eqref{eq:valleyprojector} and $\omega$, the spatial frequency of the defect~\eqref{eq:numpotential_1d}. In particular, we observe that at low frequencies ($\omega \approx 0$) the conductivity stays very close to the quantized value $\sigma^+ = 1$, while at $\omega = 2$, a maximum coupling between the valleys is achieved, locally reducing the conductivity measure by a large factor. Note that as $\varepsilon \to 0$, the limit model~\eqref{e:h2} shows coupling only at frequency $\omega = 2$. 
On the other hand, away from the defect the perturbation of the conductivity decays extremely fast regardless of the frequency $\omega$. 

As a second experiment, we compare the valley conductivity as a function of perturbation amplitude $V_0$ and energy $E$ for Hamiltonians~\eqref{e:h2} and~\eqref{e:h4}, that is by numerically computing the conductivity~\eqref{eq:valleyconductivity} when shifting the Hamiltonian as $H \to H - E$. We observe that the effect of the intervalley coupling term is to open a band gap at the Dirac cones $E = 0$, reducing the conductivity in a small range of energies around zero for both the one-dimensional, two-valley Hamiltonian~\eqref{e:h2} and the two-dimensional, two-valley Hamiltonian~\eqref{e:h4}, showing excellent qualitative agreement between the two models.

\begin{figure}[ht!]\centering
\includegraphics[width=.4\textwidth]{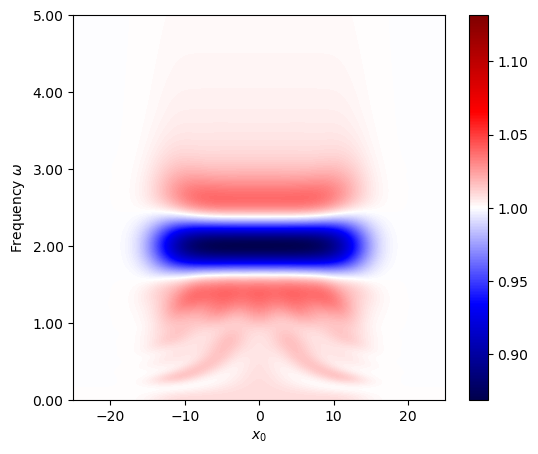}
\hspace{.5in}
\includegraphics[width=.4\textwidth]{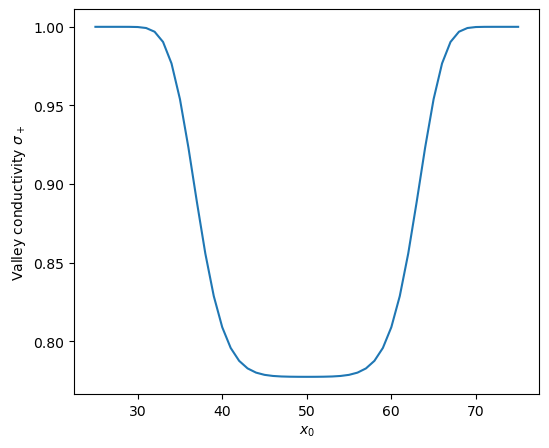}\\
\includegraphics[width=.4\textwidth]{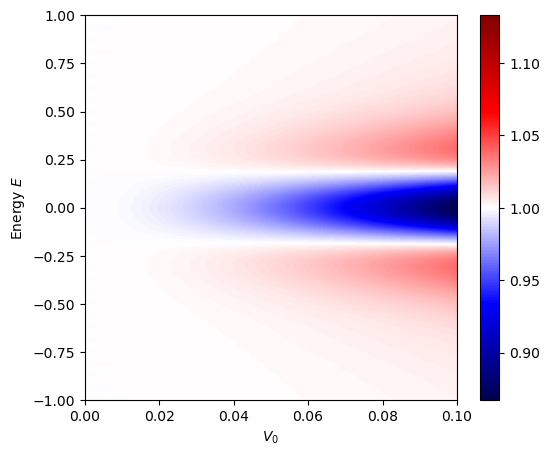}
\hspace{.5in}
\includegraphics[width=.4\textwidth]{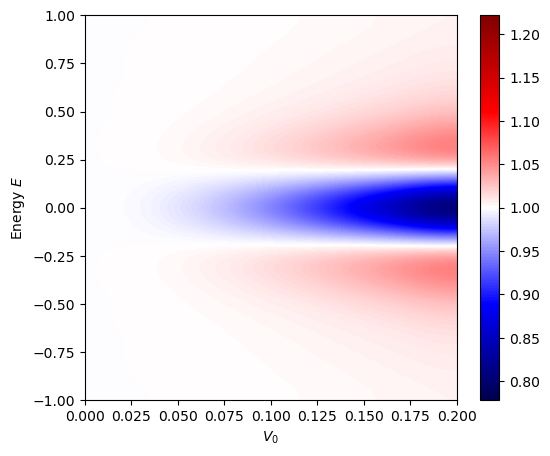}

\caption{Valley conductivity $\sigma_{I,+}$ using the 1D models~\eqref{e:h2_num},~\eqref{e:h2} (left) and 2D model~\eqref{e:h4} (right), as a function of: spatial switch function center $x_0$ and frequency $\omega$ of the perturbation with amplitude $V_0 = 0.1$, using model~\eqref{e:h2_num} (top left); spatial switch function center $x_0$, with a perturbation with amplitude $V_0 = 0.2$, using model~\eqref{e:h2} (top right); as a function of amplitude $V_0$ and energy $E$ with frequency $\omega = 2$ and at $x_0 = 0$ for the 1D model~\eqref{e:h2} (bottom left) and 2D model~\eqref{e:h4} (bottom right).} \label{fig:valley_conductivity}
\end{figure}

\subsection{One dimensional scattering theory}
\label{subsec:scattering}

The conductivity $\sigma_{I,+}$ in Fig.\ref{fig:valley_conductivity} (left) is close to $1$ for $\omega$ away from $2$ as expected from minimal valley coupling and \pc{drops significantly} for $\omega=2$, which is consistent with strong valley-coupling. For intermediate values of $\omega$, we also observe a conductivity larger than $1$. We now briefly consider a simplified one-dimensional model that exhibits such a behavior.

We revisit the one dimensional scattering problem \eqref{eq:model1d}, which we rewrite for $\Phi(\xi)=\frac{1}{2}(\xi^2-1)$ at $\eps=1$ and for a slightly different expression for the potential as the Helmholtz equation
\begin{align}\label{eq:helm1d}
    (D_x^2 +V(x)-E) \psi=0.
\end{align}
Assume $E=k^2>0$ so that when $V=0$, we obtain $e^{\pm i k x}$ as standing (propagating) modes associated with each valley. We recall that $-D_x^2=\partial^2_x$. In the presence of $V\not=0$, these modes are coupled as described by \eqref{e:h2} in the semiclassical regime. We wish to derive an explicit expression for valley conductivities using a scattering matrix formalism in the setting where $P'=\delta_{x_0}$ for $x_0\in\Rm$ and where $\varphi'=\delta_E$ so that we estimate conductivities at fixed $x_0$ and energy $E$.

The two valley conductivities estimate the signals moving from left to right proportional to $e^{ikx}$ and from right to left proportional to $e^{-ikx}$, respectively. Let us derive explicit expressions for those in terms of coefficients of scattering matrices.  We wish to compute $\sigma_I^\pm[x_0,E]$ for $\pm$ corresponding to left moving and right moving signals, respectively.

Consider \eqref{eq:helm1d} with $E=k^2$ and $V$ compactly supported on $[x_L,x_R]$. We may then construct a standard scattering matrix such that $\psi(x<x_L)=e^{ikx}+R_+e^{-ikx}$ and $\psi(x>x_R)=T_+e^{ikx}$ for an incoming plane wave from the left of the support of $V$ and such that $\psi(x<x_L)=T_-e^{-ikx}$ and $\psi(x>x_R)=e^{-ikx}+R_-e^{ikx}$ for an incoming plane wave from the right of the support of $V$. 

Let us now decompose $V$ into two subcomponents $V_{L,R}$ supported on $[x_L,x_0]$ and $[x_0,x_R]$, respectively when $x_L<x_0<x_R$. For each subcomponent, define the scattering matrices
\[
  S^L=\begin{pmatrix}
   R_+^L & T_-^L \\ T_+^L & R_-^L
  \end{pmatrix},\qquad \mbox{ and }\qquad  S^R=\begin{pmatrix}
   R_+^R & T_-^R \\ T_+^R & R_-^R
  \end{pmatrix},
\]
where the coefficients $R$ and $T$ are interpreted as above. 

These two matrices could be appropriately combined to construct a scattering matrix on the interval $[x_L,x_R]$ called $S$; we do not need the details of this object.

Both $S^L$ and $S^R$ as well as $S$ for the whole interval are unitary scattering matrix so that $SS^*=S^*S=I$.
In particular, we observe that $|T_+|=|T_-|$ and $|R_+|=|R_-|$.

Consider the (unique) solution $\psi_L(x)$ of \eqref{eq:helm1d} at energy $E$ equal to $e^{ikx}+Re^{-ikx}$ at $x_L$ and let $(u^L,v^L)$  be defined such that $\psi_L(x_0)=u^L e^{ikx_0}+v^Le^{-ikx_0}$. For this to hold, we formally assume that $V=0$ at $x_0$.
Using the above matrices, we find
\[
  u^L = T^L_+ + R^L_- v^L,\quad v^L = R^R_+ u^L,\qquad u^L=\frac{T_+^L}{1-R^L_-R^R_+}, \quad v^L=\frac{ R^R_+ T_+^L}{1-R^L_-R^R_+}.
\]
Considering the solution equal to $e^{-ikx}+Re^{ikx}$ at $x_R$ instead, we find for $(u^R,v^R)$ the solution at $x_0$
\[
   v^R=T_-^R + R_+^R u^R,\quad u^R=R^L_- v^R,\qquad v^R=\frac{T_-^R}{1-R_-^LR_+^R},\quad u^R=\frac{R_-^LT_-^R}{1-R_-^LR_+^R}.
\]
The above corresponding solutions $\psi_L(x)$ and $\psi_R(x)$ of \eqref{eq:helm1d} are orthogonal in the following sense. \tb{Assume an inner product} $(u,v)_M= (2M)^{-1} \int_{-M}^M \bar u v dx$ on the interval $(-M,M)$. The contributions of that integral on the support of $V$ vanishes in the limit $M\to\infty$. The two solutions constructed above are of the form
\[
  e^{ikx} +R_+ e^{-ikx} \ \mbox{ and } \ T_+e^{ikx}
\]
for the first solution on the left and right of the support of $V$ and 
\[
  T_-e^{-ikx} \ \mbox{ and } \ e^{-ikx} + R_-e^{ikx}
\]
for the second solution in the same sense.
The inner product of these two functions for $k\not=0$ is asymptotically given by $R_+^*T_-+T_+^*R_-=0$. In other words, these two solutions correspond to two simple branches of absolutely continuous spectrum (parametrized by $E$) of the operator that live in orthogonal subsets of the Hilbert space.  

The conductivity associated to the signal moving from left to right, which we associate with one of the valleys is given by the signal at $x_0$ moving to the right for the mixture of states having energy $E$. Thanks to the above orthogonality, the trace (sum over densities of orthogonal modes evaluated at $x_0$) is given by
\[
  \sigma_I^+[x_0,E] :=  |u^L|^2+|u^R|^2 =  \frac{|T_+^L|^2+|R^L_-|^2|T_-^R|^2}{|1-R^L_-R^R_+|^2}.
\]
The conductivity associated to the signal moving from the right to the left is given by
\[
  \sigma_I^-[x_0,E] := -( |v^L|^2+|v^R|^2) =  - \frac{|T_-^R|^2+|R^R_+|^2|T_+^L|^2}{|1-R^L_-R^R_+|^2}.
\]
The sum of the nominators is given by
\[
   |T_+^L|^2+(1-|T^L_-|^2)|T_-^R|^2 - |T_-^R|^2- (1-|T^R_+|^2)|T_+^L|^2=0.
\]
So, we do verify that $\sigma_I^++\sigma_I^-=0$ as it should be for a globally topologically trivial material.

When $x_0$ is outside the support of $V$, say $x_0\geq x_R$, then $T^R=1$ and $R^R=0$ so that 
\[
  \sigma_I^+ = |T_+|^2 + |R_-|^2 =1.
\]
Similarly, $\sigma_I^-$ would be equal to $-1$ outside of the support of $V$. It is therefore only on the support of the fluctuation $V$ that we expect to observe spatial variations (in $x_0$) of the valley conductivities.

Let us consider a simplified setting. Assume that $x_0=\frac12(x_L+x_R)$ and that $V$ is even about $x_0$. Then by symmetry, $R_\pm^L=R^R_\mp$ and $T_\pm^L=T_\mp^R$ so that 
\[
  \sigma_I^+ = \dfrac{|T^L_+|^2(1+|R_-^L|^2)}{|1-(R_-^L)^2|^2} = \dfrac{(1-|R_-^L|^2)(1+|R_-^L|^2)}{|1-(R_-^L)^2|^2}= \dfrac{1-|R_-^L|^4}{|1-(R_-^L)^2|^2}.
\]
Denoting $R^L_-=e^{i\theta} r$, we find
\[
  \sigma_I^+ = \dfrac{1-r^4}{|1-e^{2i\theta}r^2|^2}  \approx 1+2\cos 2\theta\ r^2,
\]
for $r$ small (corresponding to $V$ small). The term $\cos 2\theta$ is positive or negative depending on the value of $\theta$ we obtain from the scattering theory. When $\theta=\frac\pi2$ so that $R_-^L$ is purely imaginary, then we observe the largest drop in conductivity. When $\theta=0$ however, we observe an increase in the local conductivity for small $V$ before it decreases of large values of $V$. These observations are consistent with what we obtained in numerical simulations in Fig.\ref{fig:valley_conductivity} and show that in the presence of intervalley-coupling, the conductivities associated to left-going and right-going modes are spatially dependent and hence not protected topologically.

As a concrete example, assume $V=V_0\chi_{|[-a,a]}$ as a barrier potential for the Left domain above leading to the computation of $R_-^L$. With $q=\sqrt{k^2-V_0}$ assumed positive, standard calculations in the scattering theory of the one-dimensional Schr\"odinger equation then yield
\[
  R^L_- =  \frac{V_0 \sin(2qa) e^{-2ika}}{(2k^2-V_0)\sin(2qa)+2ikq  \cos(2qa)}.
\]
When the width $a$ is small, then to leading order in $a$, the phase of $R^L_-$ is close to $e^{-i\frac \pi 2}$. It may however, take arbitrary values for larger width $a$. Note that $R^L_-=0$ when $2qa=n\pi$ for $n\in\Nm$.




\section{Junction topology}\label{sec:junctions}


This section addresses asymmetric transport 
through a junction in the $xy-$plane. 
We introduce a junction conductivity and compute it by means of a Fedosov-H\"ormander formula, which extends what exists for the flat interface case~\cite{bal3, QB}.
Pseudo-differential calculus will be used throughout; see Appendix B for notation and background.

\sq{Although our main result will apply to a more general class of models and geometries, we use Region 2 from Figure~\ref{fig:tblg} as a motivating example.
As shown, the region selects one vertex from the underlying periodic hexagonal structure.
We push all other vertices to infinity (recall that the side length of each triangle in Figure~\ref{fig:tblg} is inversely proportional to the twist angle, with the latter assumed to be small), so that our model consists of six slices alternating between AB and BA stacking.

The described geometry supports surface waves that propagate along the interfaces between the AB and BA regions. These waves have a preferred direction, e.g. moving from left to right. (For numerical examples, see Section~\ref{sec:numerics}.) The role of the junction conductivity is to quantify the net current associated with these waves at a particular energy. One can define the conductivity so that it measures the contribution from any combination of interfaces (more details below Corollary~\ref{cor:invP}; see also Figure~\ref{fig:H1}). 

Our main result (Theorem \ref{thm:bic} below) is a bulk-interface correspondence, which equates the conductivity to the difference of two integrals depending only on the associated insulators.
In particular, the conductivity does not depend on local properties of the material, such as the way the insulators are joined together at each interface.
Theorem \ref{thm:bic} allows for straight-forward evaluations of the conductivity using previously developed theory for interface models, e.g. \cite{massatt2021,QB}.

The proof of Theorem \ref{thm:bic} requires 
familiar stability properties of the junction conductivity, resembling those of \cite{B-higher-dimensional-2021,bal3,QB,quinn2022asymmetric}.
We 
tie the conductivity to a Fredholm index (Theorem \ref{thm:idx}), with the latter known to be quantized and invariant with respect to a large class of perturbations.
This implies a conservation law (Corollary \ref{cor:invP}) showing (among other things) 
that the currents entering and leaving the junction are the same. 
We also prove invariance of the junction conductivity with respect to the corresponding density of states (Theorem \ref{thm:invvarphi}), semiclassical rescaling (Theorem \ref{thm:h}), and relatively bounded (Theorem \ref{thm:bdd}) and compact (Theorem \ref{thm:compact}) perturbations of the Hamiltonian.
We conclude the section with an application to a $2 \times 2$ Dirac model, which is then analyzed numerically in Section \ref{sec:numerics}.

\subsection{Junction conductivity for tBLG}
Let us now introduce more mathematical detail. The junction Hamiltonian corresponding to region 2 from Figure \ref{fig:tblg} is
}
\begin{align}\label{eq:TBGH}
    \Op (\sigma) =
    H = \begin{pmatrix} \Omega I + D \cdot \sigma^{(\eta)} & \lambda U^*(x,y) \\ \lambda U(x,y) & -\Omega I + D \cdot \sigma^{(\eta)}\end{pmatrix},
\end{align}
where the notation on the left-hand side is defined in Appendix B, and $U(x,y) := \frac{1}{2} ((1+\tilde{m}(x,y)) A + (1-\tilde{m}(x,y)) A^*)$. Here, $\tilde{m} \in \mathcal{C}^\infty$ is positive in regions of AB stacking, negative in regions of BA stacking, and zero on the boundaries between regions. We assume that $\tilde{m} \in \{-1,1\}$ away from these boundaries. 
\sq{Recall \eqref{eq:He}, where the analogous Hamiltonian for an interface (rather than junction) model is defined.}

To construct such a function $\tilde{m}$ explicitly,
define $f_\Theta: \mathbb{T} \rightarrow \mathbb{T}$ by $f_\Theta(\theta) = \sin (3 \theta)$ and
let $f \in \mathcal{C}^\infty (\mathbb{R}^2)$ such that $f(x,y) = r f_\Theta (\theta)$ for all $r \ge 1$, where $(r,\theta)$ are the polar coordinates corresponding to $(x,y)$.
For concreteness, take $f(x,y) := \chi (r) r f_\Theta (\theta)$, where $\chi \in \fs (0,1;\eps,1)$ and $0<\eps<1$. \sq{(Recall the definition of $\fs$ below \eqref{eq:sfdef}.)}
An example of such a function $f$ is plotted in Figure \ref{fig:fg} (left panel).
Since $\chi$ is smooth and vanishes near the origin, it is immediate that $f$ is smooth.
Let $m \in \fs (-1, 1)$ be monotonically increasing with $m(0) = 0$, 
and define $\tilde{m} (x,y) := m(f(x,y))$. \sq{Here, $\fs (a_1,a_2)$ is the union of $\fs (a_1,a_2;c_1,c_2)$ over all $c_1 \le c_2$.}
This definition \sq{of $\tilde{m}$} is consistent with the above constraints, as
$f_\Theta > 0$ if and only if $2k\pi/3 < \theta < (2k+1)\pi/3$ for some $k \in \{0,1,2\}$.
Since $m$ and all of its derivatives are bounded, it follows that $\sigma \in S^1_{1,0}$ with the right-hand side defined in Appendix B. 
Moreover, the structure of $H$ (with $D \cdot \sigma^{(\eta)}$ on the diagonal and no other derivatives) implies the existence of 
a constant $c>0$ such that $|\sigma_{\min} (x, y, \xi, \zeta)| \ge c \aver{\xi, \zeta} - 1$, where $\sigma_{\min}$ is the smallest magnitude eigenvalue of $\sigma$, and for any vector $u$, we define $\aver{u}=\sqrt{1+|u|^2}$.
We see that when $f(x,y)$ is sufficiently large (resp. small), 
$\sigma = \sigma_+$ (resp. $\sigma = \sigma_-$), with $\Op (\sigma_\pm) := H_\pm$ and $H_\pm$ defined in section \ref{sec:tblg}.
By Appendix A, there exists $E>0$ such that $\sigma_\pm$ (equivalently $H_\pm$) has a spectral gap in the interval $(E_1, E_2) := (-E,E)$.

Analogous to \eqref{eq:sigmaI} and \cite{B-bulk-interface-2019,B-higher-dimensional-2021,bal3,QB}, we define the \emph{junction conductivity} by
\begin{align}\label{eq:sigmaIjunction}
    \sigma_I (H,P) := \Tr i[H,P] \varphi'(H),
\end{align}
where $\varphi \in \fs (0,1;-E,E)$. 
The operator (of point-wise multiplication by) $P$ is still a regularized indicator function,
but it can no longer depend on only one variable.
Indeed, for the operator on the right-hand side of \eqref{eq:sigmaIjunction} to be trace-class, it is necessary that $\nabla P$ decay to $0$ 
along the level curves of $f$ (otherwise, the symbol of $[H,P] \varphi'(H)$ would not decay in $\aver{x,y}$). 
This condition is satisfied if we set $P(x,y) = \chi_p (g(x,y))$, where $\chi_p \in \fs (0,1)$ and $g \in \mathcal{C}^\infty (\mathbb{R}^2)$ such that $C_1\aver{x,y} \le \aver{f,g} \le C_2 \aver{x,y}$ for some positive constants $C_1$ and $C_2$.
To construct such a function $g$ explicitly, we can for example
define $g_\Theta : \mathbb{T} \rightarrow \mathbb{T}$ by $g_\Theta (\theta) = \cos (\theta) + \cos (\pi/6) = \cos (\theta) + \sqrt{3}/2$ (note that $g_\Theta < 0$ if and only if $5\pi/6<\theta <7\pi/6$) and take $g(x,y) = \chi(r) r g_\Theta (\theta)$.
See Figure \ref{fig:fg} (center panel) for a plot of $g$.
Since the zeros of $f_\Theta$ and $g_\Theta$ are disjoint, it is clear that the growth condition for $\aver{f,g}$ is satisfied.
As we will show in detail below (see Proposition \ref{trclass}), the symbol of $\varphi '(H)$ (resp. $[H,P]$) decays rapidly in $\aver{f(x,y),\xi,\zeta}$ (resp. $\aver{g(x,y)}$), meaning that $\sigma_I$ in \eqref{eq:sigmaIjunction} is well defined.

\begin{figure}
    \begin{subfigure}{.33\textwidth}
    \includegraphics[scale=0.14]{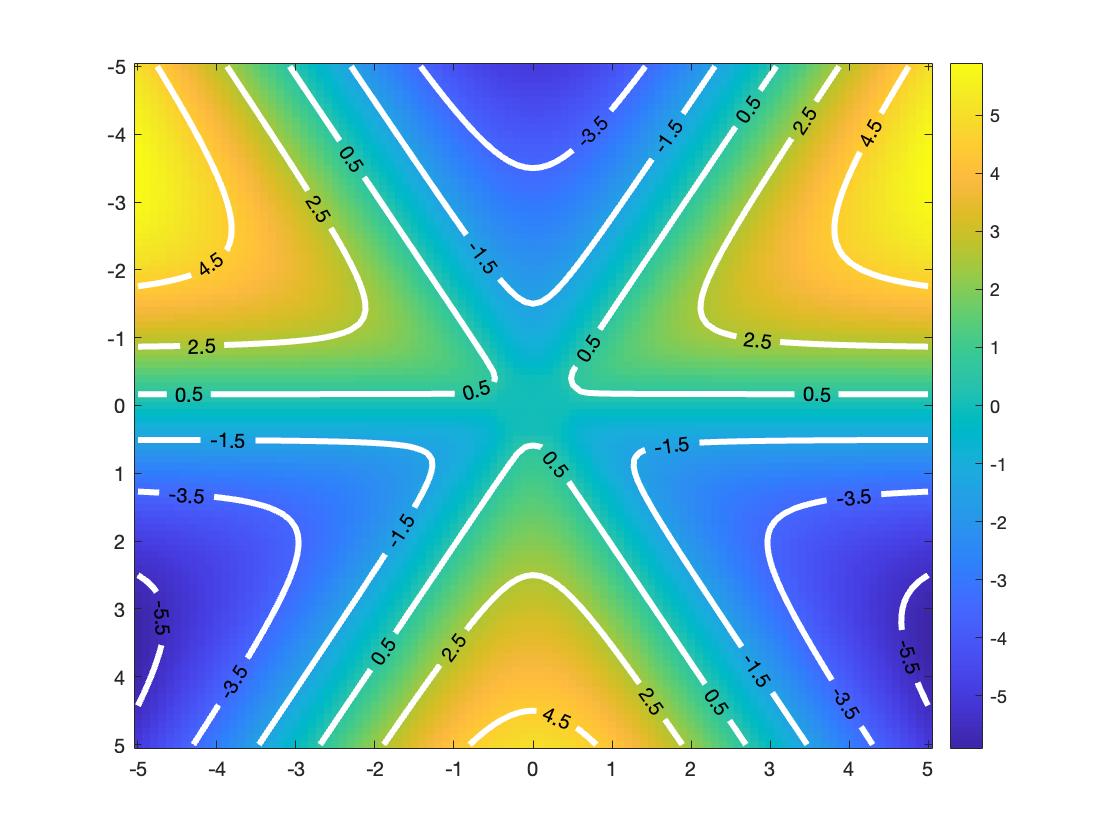}
    \end{subfigure}
    \begin{subfigure}{.33\textwidth}
    \includegraphics[scale=0.14]{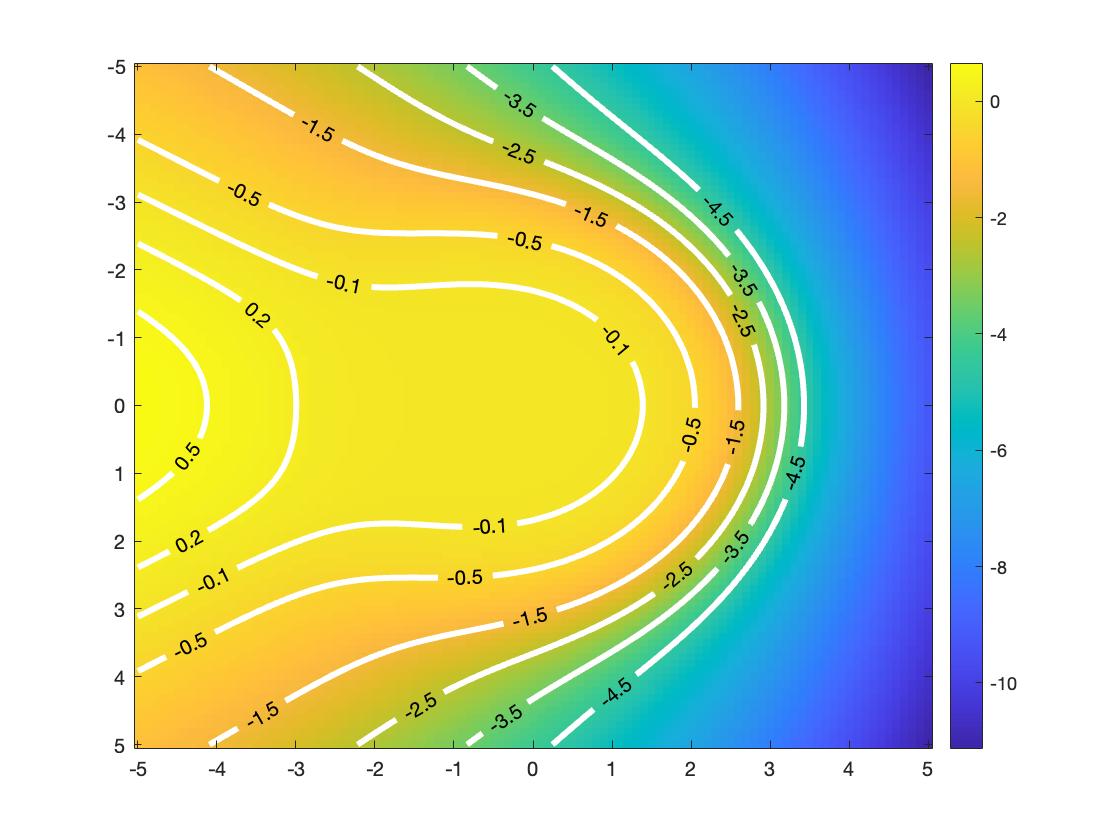}
    \end{subfigure}
    \begin{subfigure}{.33\textwidth}
    \includegraphics[scale=0.14]{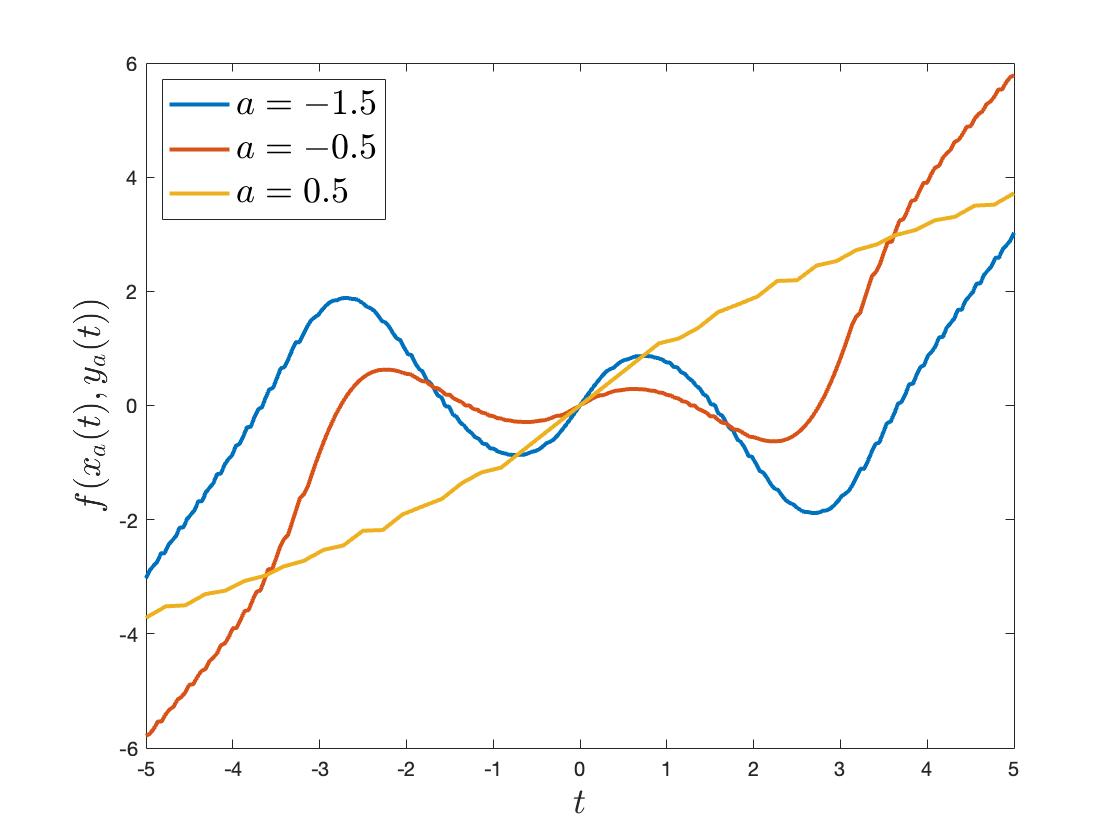}
    \end{subfigure}
\caption{Plots of $f$ (left) and $g$ (center) in the $xy-$plane, with several level curves included. The right panel shows $f$ evaluated along the level curves $\{g = a\}$ for three values of $a$. The curves are parametrized by the variable $t$, with the convention that $y_a (t) \rightarrow \pm \infty$ as $t \rightarrow \pm \infty$.}
\label{fig:fg}
\end{figure}

Due to the factor of $\chi$ in the definition of $g$, the level set $g^{-1} (0)$ is not a curve in the $xy-$plane. Still, there exists a smooth function $(x_0, y_0): \mathbb{R} \rightarrow \mathbb{R}^2$ 
whose range separates the $xy-$plane into two regions, $R_+$ and $R_-$, such that $g \ge 0$ in $R_+$ and $g \le 0$ in $R_-$.
More specifically, we can take
\begin{align*}
    (x_0 (t), y_0 (t)) = \begin{cases}
    (\frac{\sqrt{3}}{2} t, \frac{1}{2} t), &t \le -1\\
    (-\frac{\sqrt{3}}{2} t,\frac{1}{2} t), &t \ge 1
    \end{cases}
\end{align*}
and smoothly connect $(x_0, y_0)$ for $|t| < 1$ so that $g(x_0 (t), y_0 (t)) = 0$ for all $t \in \mathbb{R}$.
For $(t,\xi,\zeta) \in \mathbb{R}^3$,
define 
\begin{align}\label{eq:tauz}
    \tau (t, \xi, \zeta) := \sigma (x_0(t), y_0(t), \xi, \zeta),\qquad \tau_z := z - \tau.
\end{align}
\sq{
By assumption,
there exist $t_0 > 0$ and symbols
$$\tau_{\pm} (\xi,\zeta) \in \Bigg \{ \begin{pmatrix} \Omega I + \xi \sigma_1 + \eta \zeta \sigma_2 & \lambda A^* \\ \lambda A & -\Omega I + \xi \sigma_1 + \eta \zeta \sigma_2\end{pmatrix}, 
\begin{pmatrix} \Omega I + \xi \sigma_1 + \eta \zeta \sigma_2 & \lambda A \\ \lambda A^* & -\Omega I + \xi \sigma_1 + \eta \zeta \sigma_2\end{pmatrix} \Bigg\}$$
such that $\tau (t,\xi,\zeta) = \tau_{\pm} (\xi,\zeta)$ whenever $\pm t \ge t_0$.
The main result of this section is the following bulk-interface correspondence.
\begin{theorem}\label{thm:bic}
    Let $\alpha \in (E_1,E_2)$ and define $\sbz := z-\tau_{\pm}$ with $z=\alpha+i\omega$. 
    Then 
    \begin{align}\label{eq:bic}
        2\pi \sigma_I = \frac{i}{8\pi^2} (\invariantplus - \invariantminus),\qquad \invariantpm = \int_{\mathbb{R}^3} \tr [\sbz^{-1} \partial_\xi \sbz, \sbz^{-1} \partial_\zeta \sbz] \sbz^{-1} d\omega d\xi d\zeta.
    \end{align}
\end{theorem}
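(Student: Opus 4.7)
The plan is to extend the pseudo-differential Fedosov-H\"ormander-type computation of~\cite{QB}, developed there for a flat interface, to the present junction geometry by integrating along the level curve $\{g=0\}$ parameterized by $t$ and recognizing the $t$-integration as a boundary evaluation of a bulk Chern-Simons-type 3-form in $(\omega,\xi,\zeta)$.

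First, I would represent $\varphi'(H)$ via a Helffer-Sj\"ostrand formula
\[
  \varphi'(H) \;=\; -\frac{1}{\pi}\int_{\mathbb C} \bar\partial\widetilde{\varphi'}(z)\,(z-H)^{-1}\,dz\wedge d\bar z,
\]
with $\widetilde{\varphi'}$ an almost-analytic extension of $\varphi'$ supported in a complex neighborhood of $(-E,E)$, so that $z=\alpha+i\omega$ stays in the resolvent set of the bulk Hamiltonians $H_\pm$ (by the spectral gap established in Appendix~A). Plugged into $2\pi\sigma_I=2\pi\Tr i[H,P]\varphi'(H)$ and using the Weyl calculus of Appendix~B to expand the symbol of $[H,P](z-H)^{-1}$, the leading symbol is $\frac{1}{i}\{\sigma,P\}\sigma_z^{-1}$ with $\sigma_z=z-\sigma$, and the trace reduces to a phase-space integral. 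Because $P=\chi_p(g)$ depends only on $(x,y)$, the Poisson bracket equals $\chi_p'(g)\bigl(\partial_\xi\sigma\,\partial_x g+\partial_\zeta\sigma\,\partial_y g\bigr)$, concentrated in a thin tube around $\{g=0\}$. Changing variables from $(x,y)$ to $(t,s)$ along and transverse to this curve and integrating out $\chi_p'$ together with $s$ collapses the trace to an integral over $(t,\omega,\xi,\zeta)\in\mathbb R^4$ of a 4-form density in the symbol $\tau_z(t,\xi,\zeta)=z-\tau(t,\xi,\zeta)$ and its derivatives.

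The crucial algebraic step, following~\cite{QB}, is then to recognize this 4-form density as (up to total $(\omega,\xi,\zeta)$-divergences whose integrals vanish) a total $t$-derivative of the 3-form density appearing in $\invariantpm$. This rests on $\partial_\omega\tau_z=i$ (so that $\tau_z^{-1}$ and $\tau_z^{-1}\partial_\omega\tau_z$ differ only by a constant factor), the cyclicity of the matrix trace, and the Maurer-Cartan-type identity $d(\tau_z^{-1}d\tau_z)=-(\tau_z^{-1}d\tau_z)^{\wedge 2}$, which together imply that the Chern-Simons 3-form $\tr(\tau_z^{-1}d\tau_z)^{\wedge 3}$ is closed on the 4-manifold. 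Integrating in $t$ by the fundamental theorem of calculus and invoking $\tau(t,\cdot)=\tau_\pm(\cdot)$ for $\pm t\ge t_0$ then collapses the $t$-integral to the boundary evaluation $\invariantplus-\invariantminus$. Tracking the prefactors $(2\pi)^{-2}$ from the Weyl trace together with $-\tfrac{1}{\pi}$ from Helffer-Sj\"ostrand yields the constant $\tfrac{i}{8\pi^2}$, giving
\[
  2\pi\sigma_I \;=\; \frac{i}{8\pi^2}\bigl(\invariantplus-\invariantminus\bigr).
\]

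The main obstacles are, first, rigorous control of the symbol product expansion of $[H,P](z-H)^{-1}$: subprincipal terms must be shown to be trace class with contributions that either telescope to lower order or vanish by phase-space antisymmetry, relying on the uniform ellipticity of $\sigma_z$ on the support of $\widetilde{\varphi'}$ and on the decay of Proposition~\ref{trclass}. Second, the change of variables near the self-intersection of $\{g=0\}$ at the junction vertex, where $|\nabla g|$ may be small and the $(t,s)$ coordinates degenerate: one must check that the contribution from a shrinking neighborhood of the origin is negligible, which should follow from the smoothness of $g$ and the lower bound $C_1\aver{x,y}\le\aver{f,g}$ ensuring regularity of the level curves away from the origin. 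Modulo these analytic points, the argument is a direct adaptation of the flat-interface Fedosov-H\"ormander computation of~\cite{QB} with the momentum $\zeta$ promoted from a parameter to an integration variable.
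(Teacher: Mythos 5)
Your skeleton (Helffer--Sj\"ostrand, Weyl calculus, reduction to the curve parametrized by $t$, then a Chern--Simons/Stokes argument in $(\omega,\xi,\zeta,t)$ whose boundary faces $t=\pm t_0$ yield $\invariantplus-\invariantminus$) matches the paper's, and your description of the final algebraic stage (the identity behind \eqref{eq:omegaeps}, decay in $\omega$ and $\aver{\xi,\zeta}$, only the $t=\pm t_0$ faces surviving) is essentially correct. The genuine gap is at the central step: the conductivity does \emph{not} come from the product of principal symbols. The term you keep, $\frac1i\{\sigma,P\}\sigma_z^{-1}$, contains a single factor of the resolvent symbol; in the paper's semiclassical bookkeeping it is exactly the $O(h^{-1})$ coefficient, and its integral vanishes (Theorem \ref{thm:h} forces this, since $\sigma_I(H_h,P)$ is $h$-independent). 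The integrand of \eqref{eq:bic}, with its three factors of $\sbz^{-1}$, is produced by the \emph{second-order} pieces of the composition: the subprincipal term $\frac{ih}{2}\{\sigma_z^{-1},\sigma_z\}\sigma_z^{-1}$ in the symbol of $\varphi'(H_h)$ hitting $k_1$, the Moyal-bracket correction $\{\kappa_h,\nu_h\}$, and the $k_2$ term of \eqref{eq:k12}. Your plan treats these as remainders to be shown to ``telescope to lower order or vanish by phase-space antisymmetry,'' which is backwards: one of them \emph{is} the answer, the divergence-form term $\{k_1,\sigma_z^{-1}\}$ vanishes only after an explicit integration by parts, and the $k_2$ contribution does not vanish by any symbol-class or antisymmetry argument --- the paper disposes of it by replacing $P$ with the dilated filter $P_\eps$, invoking the $P$-independence of Corollary \ref{cor:invP} (hence $\eps$-independence of that contribution), and showing it is $O(\eps)$. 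Without the semiclassical rescaling (or an equivalent device for organizing the orders) and without an argument for the $k_2$ term, the proposal does not produce \eqref{eq:bic}; at best it reproduces the term that integrates to zero.

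Two further steps you list as technicalities are real pieces of the proof. Collapsing onto the level curve by tube coordinates and integrating out $\chi_p'(g)\,ds$ is not only delicate at the junction vertex: away from the curve the remaining factors still depend on $(x,y)$ through $\sigma(x,y,\xi,\zeta)$, so reducing to a density in $\tau_z(t,\xi,\zeta)$ alone requires either a sharp-interface limit justified by Corollary \ref{cor:invP} or the paper's route of integrating by parts in $(x,y)$ and verifying by direct computation that the resulting volume term (denoted $\sigma_{I,10}$ there) vanishes. Finally, the statement of the theorem fixes $z=\alpha+i\omega$ with no $\varphi'$ left; removing the density of states requires the $\lambda$-integration by parts using analyticity of the Chern--Simons density off the spectrum (together with Theorem \ref{thm:invvarphi}), a step absent from your outline.
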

We postpone the proof to Appendix B.
Although motivated and introduced by the tBLG model from Section \ref{sec:tblg},
the above result applies to 
any Hamiltonian $H$ and corresponding energy interval $(E_1, E_2)$ satisfying (H1) below; see Section \ref{subsec:general}.
Moreover, the function $g$ (and hence the regularized indicator function $P$)
does not need to be defined as in the above example.

An immediate consequence of Theorem \ref{thm:bic} is}

\sq{
\begin{corollary}\label{thm:main}
    Let $\tilde{H} = \Op (\tau)$ with $\tau = \tau (y,\xi,\zeta)$ and $\tilde{P} (x) =\tilde{P} \in \fs (0,1)$. Then $\sigma_I (H,P) = \sigma_I (\tilde{H}, \tilde{P})$.
\end{corollary}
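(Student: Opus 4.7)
The plan is to invoke Theorem \ref{thm:bic} on both $(H,P)$ and $(\tilde H,\tilde P)$ and observe that the two right-hand sides coincide. Theorem \ref{thm:bic} directly gives
\[
2\pi \sigma_I (H,P) = \frac{i}{8\pi^2}(\invariantplus - \invariantminus),
\]
where $\invariantpm$ depend only on the asymptotic symbols $\tau_\pm (\xi,\zeta)$ and not on the particular geometry of the junction or the choice of $f$, $g$, or the parametrizing curve $(x_0,y_0)$.

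I would then view $(\tilde H, \tilde P)$ as a degenerate instance of the junction framework. Concretely, set $\tilde f (x,y) = y$ and $\tilde g (x,y) = x$, so that the level set $\tilde g^{-1}(0)$ is the $y$-axis, parametrized by $(\tilde x_0(t),\tilde y_0(t)) = (0,t)$. The symbol obtained from \eqref{eq:tauz} is $\tilde \tau (t,\xi,\zeta) = \tau (t,\xi,\zeta)$, which matches the symbol of $\tilde H$ upon identifying $t$ with $y$; moreover $\tilde \tau_\pm = \tau_\pm$ by the hypothesis that $\tau(y,\xi,\zeta) = \tau_\pm(\xi,\zeta)$ for $\pm y \ge t_0$. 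The growth bound $C_1 \aver{x,y} \le \aver{\tilde f, \tilde g} \le C_2 \aver{x,y}$ holds trivially with $C_1 = C_2 = 1$, and both $\tilde\tau\in S^1_{1,0}$ and the requisite lower bound on $\tilde\tau_{\min}$ are inherited from the corresponding properties of $\sigma$. Applying Theorem \ref{thm:bic} to $(\tilde H,\tilde P)$ yields
\[
2\pi \sigma_I(\tilde H,\tilde P) = \frac{i}{8\pi^2}(\invariantplus - \invariantminus),
\]
with identical $\invariantpm$, and the claimed equality $\sigma_I(H,P) = \sigma_I(\tilde H,\tilde P)$ follows at once.

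The main obstacle is verifying that the degenerate choice $(\tilde f,\tilde g) = (y,x)$ meets hypothesis (H1) from Section \ref{subsec:general} and the decay/trace-class conditions underlying the proof of Theorem \ref{thm:bic} (in particular the analogue of Proposition \ref{trclass}). Because the parametrizing curve $(0,t)$ is linear, the level curves of $\tilde g$ are vertical lines, and the symbol $\tau(y,\xi,\zeta)$ is $x$-independent, these reduce to the standard flat-interface conditions already handled in \cite{bal3,QB}; confirming that the argument for Theorem \ref{thm:bic} carries over verbatim in this reduced geometry is the only step requiring care. An alternative route, should any of the hypotheses in Theorem \ref{thm:bic} fail to apply directly to the interface case, is to note that $\sigma_I(\tilde H,\tilde P)$ is an interface conductivity of the type treated in Section \ref{sec:tblg} and admits a bulk-interface formula of the same form $\tfrac{i}{8\pi^2}(\invariantplus - \invariantminus)$ by \cite{B-bulk-interface-2019,bal3,QB}, which again matches the junction formula and closes the argument.
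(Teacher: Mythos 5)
Your proposal is correct and follows essentially the same route as the paper: apply Theorem \ref{thm:bic} to both $(H,P)$ and $(\tilde H,\tilde P)$, observe that the bulk symbols $\tau_\pm$ (hence $\invariantpm$) are identical for the two, and conclude. Your explicit verification that the flat-interface case fits the junction framework via $(\tilde f,\tilde g)=(y,x)$ (i.e.\ $k=1$, $f_\Theta=\sin\theta$, $g_\Theta=\cos\theta$) is exactly the detail the paper leaves implicit, so no gap remains.
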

\begin{proof}
Recall the 
definition \eqref{eq:tauz} of $\tau$, where $\sigma$ is the symbol of $H$ (that is, $\Op (\sigma) = H$).
It follows that the bulk symbol $\tau_+$ is the same for $H$ and $\tilde{H}$, 
and thus so is the quantity $I_+$ appearing in \eqref{eq:bic}. The same holds for $\tau_-$ and $I_-$. 
We apply Theorem \ref{thm:bic} and the proof is complete.
\end{proof}

Observe that Corollary \ref{thm:main} reduces the junction conductivity to the simpler setting of a flat interface separating two insulators.
The insulating materials are described by $\tau_\pm = \tau (\pm \infty, \xi, \zeta)$. For the tBLG Hamiltonian \eqref{eq:TBGH} and spatial filter $P$ in the above example, we can now apply Theorem \ref{thm:tblg-bulk} and use the fact that
$$m(f(x_0 (y), y_0 (y))) =: m(y) = m \in \fs (-1,1)$$
to obtain
\begin{align*}
    2\pi \sigma_I (H,P) = -2\eta \, \sign(\Omega).
\end{align*}}
If $P_j  = \chi_p (g_j (x,y))$ for $j\in \{0,1,2\}$ with the $g_j$ appropriately chosen to satisfy the growth condition for $\aver{f, g}$, then 
$
    \sigma_I (H, \sum_j P_j) = \sum_j \sigma_I (H, P_j).
$
Thus if $g_j (x,y)  = \chi (r) r g_{j_\Theta} (\theta)$ with $g_{j_\Theta}= \cos (\theta - 2\pi j/3) + \cos (\pi/6)$, then the $2\pi/3-$rotational symmetry of the tBLG Hamiltonian implies that
$
    \sigma_I (H, \sum_j P_j) = 3 \sigma_I (H, P_0).
$
\sq{That is, each term $\sigma_I (H, P_j)$ in the sum over $j$ sees the same transition from $\sigma_-$ to $\sigma_+$.}
In words, the superposition of three appropriately chosen regularized indicator functions increases the conductivity by a factor of $3$. This makes perfect sense, as the conductivity is now measured through three distinct regions ($\supp \nabla P_j$ for $j=0,1,2$), each of which contributes the same value of $\sq{-}\eta \, \sign(\Omega)/\pi$.

\subsection{General setting}\label{subsec:general}

\sq{In this subsection, we provide a full class of operators $H$ and $P$ to which Theorem \ref{thm:bic} applies.}
Let us introduce some notation.
As above, the Hamiltonians $H$ act on $\mathcal{H} := L^2 (\mathbb{R}^2) \otimes \mathbb{C}^n$.
We label the spatial coordinates $(x,y) \in \mathbb{R}^2$ and the corresponding dual variables $(\xi, \zeta) \in \mathbb{R}^2$.
Given 
\sq{functions $u \in \mathcal{C}^\infty (\mathbb{R}^2)$ and $A \in \mathcal{C}^\infty (\mathbb{R}^2; \mathbb{C}^{n\times n})$,}
fixed matrices $A_1$ and $A_2$ \sq{in $\mathbb{C}^{n\times n}$},
and fixed constants $c_1 \le c_2$,
we write $A \in \fs (A_1, A_2; c_1, c_2;u)$ to mean that 
\begin{align}\label{eq:fsudef}
    A = 
    \begin{cases}
    A_1, & u < c_1\\
    A_2, & u > c_2
    \end{cases}.
\end{align}
We let $\fs (A_1, A_2; u)$ denote the union of $\fs (A_1, A_2; c_1, c_2;u)$ over all $-\infty<c_1\le c_2<\infty$.
If $A = A(\alpha)$ is a function of just one variable, then we define $\fs (A_1, A_2; c_1, c_2) := \fs (A_1, A_2; c_1, c_2; \alpha)$ and $\fs (A_1, A_2) := \fs (A_1, A_2; \alpha)$.
Note that if $A(x,y) = \chi_A (u(x,y))$ for some $\chi_A \in \fs (A_1, A_2)$, then $A\in \fs (A_1, A_2; u)$.

Let $(r, \theta)$ denote the polar coordinates for $(x,y)$.
Fix $k\in \mathbb{N}$ and let 
$\Theta_k \subset [0, 2\pi)$ be any set containing $2k$ elements.
Let $f_{\Theta}\in \mathcal{C}^\infty (\mathbb{T})$ such that $f_\Theta(\theta) = 0$ if and only if $\theta \in \Theta_k$, and $f'_\Theta (\theta) \ne 0$
whenever $\theta \in \Theta_k$. 
This means $f_\Theta$ changes sign across every point in $\Theta_k$ and
is bounded away from $0$ for all $\theta$ away from $\Theta_k$.
Let $f \in \mathcal{C}^\infty (\mathbb{R}^2)$ such that
$f(x,y) = f(r,\theta) := r f_\Theta (\theta)$ whenever $r \ge 1$.
We assume the following:\\\\
\textbf{(H1)}
Let $H = \Op (\sigma)$ be a symmetric pseudo-differential operator, with $\sigma$ independent of $h$.
Suppose $\sigma \in S^m_{1,0}$ for some \sq{$m>0$,} 
and $|\sigma_{\min} (x, y, \xi, \zeta)| \ge c \aver{\xi, \zeta}^m - 1$ for some $c > 0$. 
Suppose there exist values $-\infty < E_1 < E_2 < \infty$
and
symbols
$\sigma_{\pm} \in S^m_{1,0}$ independent of $(x,y)$ with no spectrum in $(E_1, E_2)$, 
such that
$\sigma (\cdot, \cdot, \xi, \zeta) \in \fs (\sigma_- (\xi,\zeta), \sigma_+ (\xi, \zeta); f)$ for all $(\xi,\zeta) \in \mathbb{R}^2$.\\
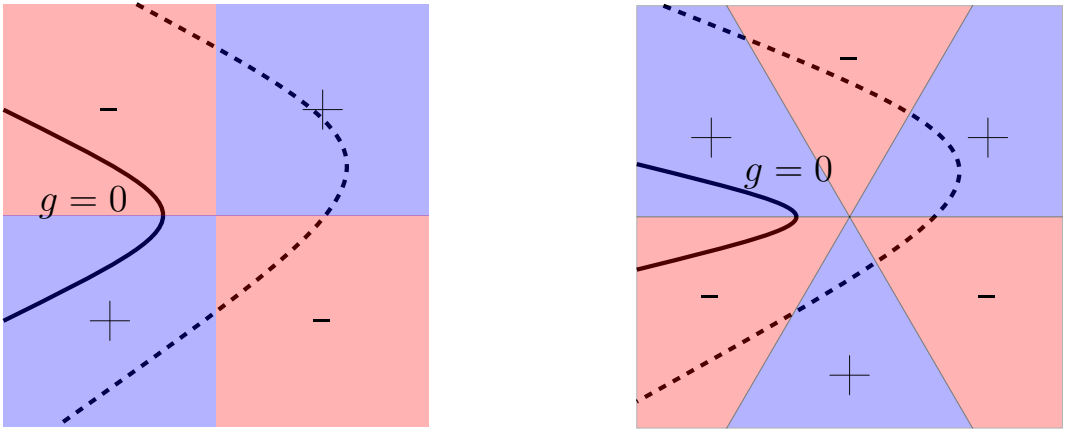
\begin{figure}
    \begin{subfigure}{.5\textwidth}
        \centering
        \begin{tikzpicture}[scale=.7]
        \draw [ultra thick] (-4,2) .. controls (0,0) and (0,0) .. (-4,-2);
        \draw [dashed, ultra thick] (-1.5,4) .. controls (4,1) and (4,1) .. (-3,-4);
        
        \fill[red,opacity=.3] (-4,0) rectangle (0,4);
        \fill[red,opacity=.3] (0,0) rectangle (4,-4);
        \fill[blue,opacity=.3] (-4,0) rectangle (0,-4);
        \fill[blue,opacity=.3] (0,0) rectangle (4,4);
        
        \node at (2,2) {\Huge +};
        \node at (-2,2) {\Huge -};
        \node at (-2,-2) {\Huge +};
        \node at (2,-2) {\Huge -};
        
        \node at (-2.5,0.25) {\Large $g = 0$};
    
        \end{tikzpicture}
    \end{subfigure}
    \begin{subfigure}{.5\textwidth}
        \centering
        \begin{tikzpicture}[scale=.7]
        \draw [ultra thick] (-4,1) .. controls (0,0) and (0,0) .. (-4,-1);
        \draw [dashed, ultra thick] (-3.5,4) .. controls (4,1) and (4,1) .. (-4,-3.5);
        
        \draw[fill = blue, opacity = 0.3] (-4,0) -- (0,0) -- (-2.31,4) -- (-4,4) --cycle;
        \draw[fill = red, opacity = 0.3] (-2.31,4) -- (0,0) -- (2.31,4) --cycle;
        \draw[fill = blue, opacity = 0.3] (2.31,4) -- (0,0) -- (4,0) -- (4,4) --cycle;
        \draw[fill = red, opacity = 0.3] (0,0) -- (4,0) -- (4,-4) -- (2.31,-4) --cycle;
        \draw[fill = blue, opacity = 0.3] (-2.31,-4) -- (0,0) -- (2.31,-4) --cycle;
        \draw[fill = red, opacity = 0.3] (-4,0) -- (0,0) -- (-2.31,-4) -- (-4,-4) --cycle;

        \node at (0,3) {\Huge -};
        \node at (2.6,1.5) {\Huge +};
        \node at (2.6,-1.5) {\Huge -};
        \node at (0,-3) {\Huge +};
        \node at (-2.6,-1.5) {\Huge -};
        \node at (-2.6,1.5) {\Huge +};
        
        \node at (-1.15,0.85) {\Large $g = 0$};
        \end{tikzpicture}
    \end{subfigure}
\caption{Illustrations of $\sigma$ in the $xy$-plane for $k = 2$ (left) and $k=3$ (right). The labels $\pm$ indicate the sign of $f$. Two example level curves $\{g = 0\}$ are presented in each case (solid and dashed curves), for some arbitrary choices of $g$ satisfying the appropriate assumptions. The scale of this picture is much smaller than $1$, so that we are not in the region where $g(r,\theta) = r g_\Theta (\theta)$.} 
\label{fig:H1}
\end{figure}

\sq{Above, $\sigma_{\min}$ denotes the smallest magnitude eigenvalue of $\sigma$.}
By \cite{Bony2013,Hormander}, we know (H1) implies that $H : \mathcal{D} (H) \rightarrow \mathcal{H}$ is self-adjoint with $\mathcal{D} (H) = \mathcal{H}^m$ (and the right-hand side the standard Hilbert space of functions with square-integrable derivatives up to order $m$).
\sq{Moreover, for any $z \in \mathbb{C}$ with $\Im z \ne 0$, we have $(z-H)^{-1} \in \Op (S^{-m}_{1,0})$.}

Let 
$\{\theta_1, \theta_2\} \subset \mathbb{T}$ such that
$\theta_1, \theta_2 \notin \Theta_k$.
Let $g_{\Theta}\in \mathcal{C}^\infty (\mathbb{T})$ such that $g_\Theta(\theta) = 0$ if and only if $\theta \in \{\theta_1, \theta_2\}$, and $g'_\Theta (\theta_j) \ne 0$ for $j \in \{1,2\}$.
Let $g \in \mathcal{C}^\infty (\mathbb{R}^2)$ such that
$g(x,y) = g(r,\theta) = r g_\Theta (\theta)$ for all $r \ge 1$. 
\sq{Our assumptions on $f$ and $g$ imply the existence of positive constants $C_1 < C_2$ such that
\begin{align}\label{eq:fg}
    C_1 \aver{x,y} \le \aver{f(x,y), g(x,y)} \le C_2 \aver{x,y}.
\end{align}
Moreover,}
there exists a smooth function $(x_0, y_0):\mathbb{R} \rightarrow \mathbb{R}^2$ with $\lim_{|t| \rightarrow \infty} |(x_0 (t), y_0 (t))| = \infty$ and whose range separates the $xy-$plane into two regions, $R_{+}$ and $R$, such that $g \ge 0$ in $R_{+}$. 
One can for example take a curve contained in $\{g = 0\}$, in which case
$(x_0 (t), y_0 (t)) = (\beta_1 t, \beta_2 t)$ for all $|t|$ sufficiently large,
with the constants $\beta_1$ and $\beta_2$ depending only on $\sgn t$.
Our convention is that $R_+$ be on the \emph{right} side of the curve defined by $(x_0, y_0)$ as $t$ increases. That is, if you were to move along the curve with $t$ increasing, then $R_+$ would be on your right.
\sq{Thus $\tau$ in Theorem \ref{thm:bic} can always be defined by \eqref{eq:tauz} in this more general setting.}

\sq{
We conclude this subsection with the following result, which verifies that our theory applies to the tBLG Hamiltonian.
\begin{proposition}\label{prop:H1tBLG}
    Define $H=\Op (\sigma)$ by \eqref{eq:TBGH}. Then $H$ satisfies (H1).
\end{proposition}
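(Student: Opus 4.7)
The plan is to verify the four conditions comprising (H1) in turn, most of which reduce to straightforward bookkeeping once the spatial structure of the symbol is understood. Since $\sigma$ depends on $(x,y)$ only through $\tilde m(x,y) = m(f(x,y))$, the entire geometric content is encoded in the composition $m \circ f$.

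First, symmetry of $\sigma$ is immediate: the diagonal blocks $\pm \Omega I + \xi \sigma_1 + \eta \zeta \sigma_2$ are Hermitian (Pauli matrices are Hermitian and $\Omega, \eta, \xi, \zeta$ are real), and the off-diagonal blocks are adjoints of each other by construction. Independence from $h$ is clear from the formula \eqref{eq:TBGH}. Next, to show $\sigma \in S^1_{1,0}$, note that the $(\xi,\zeta)$-dependence is affine, so any first-order $(\xi,\zeta)$-derivative yields a constant matrix and higher ones vanish. For $(x,y)$-derivatives, I would check that $\tilde m$ is smooth with bounded derivatives of all orders. Since $m$ is constant outside a compact interval, $m^{(k)}$ is bounded for every $k$. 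The function $f$ equals $r f_\Theta(\theta)$ for $r \geq 1$, and is thus positively homogeneous of degree $1$ outside the unit disk, so $\partial^\alpha f$ is homogeneous of degree $1 - |\alpha|$ and hence bounded (and for $|\alpha|\geq 2$, decaying). A Faà di Bruno computation applied to $\tilde m = m \circ f$ then yields uniformly bounded derivatives of all orders, and $\sigma \in S^1_{1,0}$ with $m = 1$.

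For the eigenvalue lower bound, I would argue via perturbation theory: the leading term $I_2 \otimes (\xi \sigma_1 + \eta \zeta \sigma_2)$ has eigenvalues $\pm |(\xi,\zeta)|$ each with multiplicity $2$, while $\Omega (\sigma_3 \otimes I_2)$ and the off-diagonal coupling involving $\lambda U(x,y)$ are bounded uniformly in $(x,y)$ (since $U$ is a convex combination of the fixed matrices $A$ and $A^*$). Weyl's inequality then gives that every eigenvalue of $\sigma$ lies within distance $C := |\Omega| + |\lambda|$ of $\pm |(\xi,\zeta)|$, so $|\sigma_{\min}(x,y,\xi,\zeta)| \geq |(\xi,\zeta)| - C$ whenever $|(\xi,\zeta)| \geq C$. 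Choosing $c > 0$ sufficiently small, the inequality $c \aver{\xi,\zeta} - 1 \leq |(\xi,\zeta)| - C$ holds for $|(\xi,\zeta)|$ large, while for $|(\xi,\zeta)|$ in a bounded range the right-hand side of the required bound is non-positive and the inequality is trivial. This gives the desired $c > 0$.

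Finally, the identification of the bulk symbols is the content of Section \ref{sec:tblg}. Since $m \in \fs(-1,1;c_1,c_2)$ for some $c_1 \leq c_2$, we have $\tilde m(x,y) = 1$ when $f(x,y) > c_2$, in which case $U(x,y) = A$ and $\sigma(\cdot,\cdot,\xi,\zeta) = \sigma_+(\xi,\zeta)$; similarly, $\tilde m(x,y) = -1$ when $f(x,y) < c_1$, giving $U = A^*$ and $\sigma = \sigma_-$. This is precisely $\sigma(\cdot,\cdot,\xi,\zeta) \in \fs(\sigma_-(\xi,\zeta),\sigma_+(\xi,\zeta);c_1,c_2;f)$. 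The symbols $\sigma_\pm$ are independent of $(x,y)$ by construction, and Theorem \ref{thm:tblg-bulk} (together with the spectral gap statement established in Appendix \ref{sec:aa}) furnishes an open interval $(E_1,E_2) = (-E_0, E_0)$ with $E_0 = |\Omega\lambda|/\sqrt{4\Omega^2 + \lambda^2}$ in which neither $\Op(\sigma_+)$ nor $\Op(\sigma_-)$ has any spectrum. The main obstacle is the eigenvalue lower bound, since the remaining items are essentially definitional; but as sketched above it follows from a one-line perturbation estimate once one isolates the leading Dirac term.
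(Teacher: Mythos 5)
Your proof is correct and follows essentially the same route as the paper's: verify Hermitian symmetry and $\sigma \in S^1_{1,0}$ from the boundedness of $m\circ f$ and its derivatives, obtain the lower bound on $|\sigma_{\min}|$ by treating $\Omega\,\sigma_3\otimes I_2$ and the $\lambda U$ coupling as a bounded perturbation of the Dirac term, identify $\sigma_\pm$ from the switch structure of $\tilde m$, and invoke the Appendix~\ref{sec:aa} calculation for the spectral gap of $\sigma_\pm$. You simply spell out details (Fa\`a di Bruno, Weyl's inequality) that the paper asserts as clear.
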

We refer to Appendix \ref{sec:pf} for the proof.}

\subsection{Stability of the junction conductivity}

\sq{In this subsection, we show that the junction conductivity $\sigma_I (H,P)$ is invariant with respect to a large class of perturbations of the Hamiltonian $H$, spatial filter $P$ and density of states $\varphi'$. While interesting in their own right, these results are also used in the proof of Theorem \ref{thm:bic} (see Appendix \ref{sec:pf}). Throughout, we will assume that $H = \Op (\sigma)$ satisfies (H1), and
let $\varphi \in \fs (0,1;E_1,E_2)$ and $P \in \fs (0,1;g(x,y))$ be smooth switch functions. 
Let $\projP \in \modfs (0,1;g(x,y))$ be a projector ($\projP^2 = \projP$). 
Here, $\modfs(A_1,A_2;u)$ is the space of functions $A$ satisfying \eqref{eq:fsudef} for some $c_1 \le c_2$. (Note that $A \in \modfs(A_1,A_2;u)$ need not be continuous.)
Let $\Phi \in \mathcal{C}^\infty_c (E_1, E_2)$ such that $\varphi ' \in \mathcal{C}^\infty_c (\{ \Phi = 1 \}^\circ)$, where $U^\circ$ denotes the interior of $U$. We will tie the junction conductivity to a Fredholm index, which will be used to prove that the former is stable.

\begin{theorem}\label{thm:idx}
    Let $\genH$ be a self-adjoint pseudo-differential operator with $\genH \in \Op (S(\aver{\xi,\zeta}^m))$ and $\Phi (\genH) \in \Op (S (\aver{f(x,y),\xi,\zeta}^{-\infty}))$. 
    Then $[\genH, P] \varphi ' (\genH)$ is trace-class so that $\sigma_I (\genH, P)$ is well-defined. Moreover, defining $U (\genH) := e^{i2\pi \varphi (\genH)}$, we have that $\projP U(\genH) \projP$ is a Fredholm operator on the range of $\projP$, with 
    \begin{align*}
        2\pi \sigma_I (\genH, P) = {\rm Index} (\projP U(\genH) \projP).
    \end{align*}
\end{theorem}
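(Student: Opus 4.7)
The plan is to adapt the bulk--interface correspondence established in the flat-interface setting of \cite{bal3, QB} to the present two-dimensional junction geometry. The proof decomposes into three ingredients: (i) a trace-class check, (ii) the Fredholm property of $\projP U(\genH) \projP$, and (iii) identification of the resulting index with $2\pi \sigma_I (\genH, P)$.

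For (i), since $\varphi' \in \mathcal{C}^\infty_c(\{\Phi = 1\}^\circ)$ one may write $\varphi'(\genH) = \varphi'(\genH)\Phi(\genH)$, and the hypothesis $\Phi(\genH) \in \Op(S(\aver{f(x,y),\xi,\zeta}^{-\infty}))$ combined with the smooth functional calculus forces $\varphi'(\genH) \in \Op(S(\aver{f(x,y),\xi,\zeta}^{-\infty}))$. Since $P = \chi_p(g(x,y))$ for a smooth switch $\chi_p$, the symbol of $[\genH, P]$ is supported through $\nabla P$ where $g(x,y)$ is bounded, providing decay in $\aver{g(x,y)}$. Composition in the symbol classes together with the growth condition \eqref{eq:fg}, $\aver{f, g} \sim \aver{x,y}$, places $[\genH, P]\varphi'(\genH)$ in $\Op(S(\aver{x,y,\xi,\zeta}^{-\infty}))$, hence trace-class on $\Hcal$, so $\sigma_I(\genH, P)$ is well-defined.

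For (ii), the same reasoning shows that $U(\genH) - 1 = e^{i2\pi\varphi(\genH)} - 1$ is a function of $\genH$ supported in $\{\Phi = 1\}^\circ$, and therefore belongs to $\Op(S(\aver{f,\xi,\zeta}^{-\infty}))$. Its commutator with $\projP$ gains further localization across the separating curve $(x_0(t), y_0(t))$ via decay in $\aver{g}$, so $[\projP, U(\genH)]$ is compact (in fact trace-class). Standard Toeplitz theory then yields that $\projP U(\genH) \projP$ is Fredholm on $\Ran(\projP)$, with index given by the trace identity
\begin{align*}
{\rm Index}(\projP U(\genH) \projP) = -\Tr \big(\projP [\projP, U(\genH)] U(\genH)^*\big),
\end{align*}
well-defined by the estimates of (i).

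For (iii), introduce the one-parameter family $U_t := e^{i 2\pi t \varphi(\genH)}$, $t \in [0,1]$, with $U_0 = 1$ and $U_1 = U(\genH)$. Differentiating $[\projP, U_t]$ in $t$ and solving the resulting operator ODE (with $[\projP, U_0] = 0$) yields
\begin{align*}
[\projP, U(\genH)] = i2\pi \int_0^1 U_{1-s}\, [\projP, \varphi(\genH)]\, U_s \, ds.
\end{align*}
Substituting into the trace identity, using $[U_s, \varphi(\genH)] = 0$ and applying cyclicity (justified by (i) on intermediate products) together with the Helffer--Sj\"ostrand functional calculus identity
\begin{align*}
\Tr \big([\varphi(\genH), \projP]\big) = \Tr \big([\genH, \projP]\, \varphi'(\genH)\big),
\end{align*}
valid precisely because the right-hand side is trace-class, collapses the $s$-integral. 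Replacing the sharp projector $\projP$ by the smooth switch $P$ (their difference composed with the relevant trace-class operators contributes a vanishing trace by the structure of $\nabla(\projP - P)$), one arrives at
\begin{align*}
{\rm Index}(\projP U(\genH) \projP) = 2\pi \Tr i[\genH, P]\varphi'(\genH) = 2\pi \sigma_I (\genH, P).
\end{align*}

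The main obstacle is the bookkeeping in step (iii): each application of cyclicity requires verifying that intermediate operator products are trace-class, which depends crucially on the bidirectional decay $\aver{f, g} \sim \aver{x,y}$ and on the smoothing property of $\Phi(\genH)$. A secondary subtlety is the passage between the sharp projector $\projP$ and the smooth switch $P$ in the final identity, which must be carried out without destroying the cancellations responsible for the quantization of the index.
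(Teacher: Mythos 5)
Your steps (i) and (ii) are in line with the paper's approach (the paper proves the trace-class statement via the decay $\aver{f,g}\sim\aver{x,y}$ exactly as you sketch, and obtains Fredholmness and the index formula from the trace-class property of $[U,\projP]U^*$ together with Proposition 2.4 of the Avron--Seiler--Simon reference). The genuine gap is in step (iii). Your collapse of the Duhamel integral and the pivotal identity $\Tr\big([\varphi(\genH),\projP]\big)=\Tr\big([\genH,\projP]\,\varphi'(\genH)\big)$ require $[\projP,\varphi(\genH)]$ (and the conjugates $U_{-s}[\projP,\varphi(\genH)]U_s$) to be trace-class, and they are not: $\varphi$ is a switch function, not compactly supported in the spectral gap, so $\varphi(\genH)$ is \emph{not} localized by $\Phi(\genH)$ --- only $\varphi'(\genH)$ and $V:=U(\genH)-I$ are, since $\varphi'$ and $e^{i2\pi\varphi}-1$ are supported in $\{\Phi=1\}^\circ$. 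Consequently $[\projP,\varphi(\genH)]$ has decay only in $\aver{g}$ (across the curve) and none in $\aver{f}$ (along it), its trace is ill-defined, and the cyclicity you invoke to remove the $s$-dependence is unjustified. The danger is not cosmetic: the same formal cyclicity applied once more to $\Tr[\projP,\varphi(\genH)]=\Tr(\projP\varphi(\genH)(1-\projP))-\Tr((1-\projP)\varphi(\genH)\projP)$ would force the index to vanish, contradicting quantization; the nonzero index lives precisely in the failure of these manipulations, so every trace in the chain must be of a genuinely trace-class operator.

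This is exactly what the paper's proof is organized around. It never writes $\varphi(\genH)$ under a trace; instead it sets $V=U-I$ (a compactly supported function of $\genH$ inside the gap, hence in $\Op(S(\aver{f,\xi,\zeta}^{-\infty}))$), decomposes $\Tr[U,P]U^*=\Tr[V,P]V^*+\Tr[V,P]$, applies Helffer--Sj\"ostrand and the integration by parts $\partial\tilde V=\widetilde{V'}$ only to $V$, and inserts $\Phi(\genH)$ (via $\Tr[\genH,P]V'=\Tr[\genH,P]V'\Phi=\Tr[V,P]\Phi$) so that the leftover term $\Tr[V,P](1-\Phi)$ is shown to vanish by a legitimate cyclicity with spectral cutoffs vanishing on $\supp V$. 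The replacement of the smooth $P$ by the sharp $\projP$ is likewise not a remark but a separate argument: one writes $[U,\projP-P]U^*=V\chi(\projP-P)U^*-(\projP-P)\chi VU^*$ with a compactly supported cutoff $\chi=\chi_1(g)$, checks each factor pairing $V$ with $\chi$ is trace-class, and only then cancels the two traces. To repair your step (iii) you would need to reorganize the Duhamel computation so that $\varphi(\genH)$ never appears alone --- e.g. work with $U^*\projP U-\projP$ or with $V$ throughout --- which essentially reproduces the paper's route rather than shortcutting it.
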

Note that (H1) is a strict subset of the class of operators $\genH$ to which the above result applies.
(That any $H$ satisfying (H1) must also satisfy the assumptions of Theorem \ref{thm:idx} will be proved rigorously in Appendix \ref{sec:pf}; see Proposition \ref{trclass}.)
For a proof of Theorem \ref{thm:idx}, see Appendix \ref{sec:pf}.

\medskip

Theorem \ref{thm:idx} is a powerful result with many implications. 
For example, we immediately have 
\begin{corollary}\label{cor:invP}
If $\genH$ satisfies the assumptions of Theorem \ref{thm:idx} and
$P_1, P_2 \in \fs (0,1; g(x,y))$, then $\sigma_I (\genH,P_1)=\sigma_I (\genH, P_2)$.
\end{corollary}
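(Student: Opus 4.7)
The plan is to observe that Corollary~\ref{cor:invP} follows almost immediately from Theorem~\ref{thm:idx}, since the right-hand side of the index formula depends on the projector $\projP$ and the operator $\genH$ (through $U(\genH)$) but not on the particular smooth switch function $P$ that appears on the left.

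Concretely, I would first fix a single projector $\projP \in \modfs(0,1;g(x,y))$ that will be used for both $P_1$ and $P_2$. A natural choice is $\projP(x,y) = \mathbf{1}_{[0,\infty)}(g(x,y))$; since $g$ is smooth and real-valued, this lies in $\modfs(0,1;g(x,y))$ (with $c_1=c_2=0$) and manifestly satisfies $\projP^2=\projP$ as a multiplication operator. Next, since $P_1, P_2 \in \fs(0,1;g(x,y))$ by hypothesis, the triples $(\genH, P_j, \projP)$ satisfy the assumptions of Theorem~\ref{thm:idx} for $j=1,2$. Applying that theorem twice yields
\begin{align*}
    2\pi\, \sigma_I(\genH, P_1) \;=\; \mathrm{Index}\bigl(\projP\, U(\genH)\, \projP\bigr) \;=\; 2\pi\, \sigma_I(\genH, P_2),
\end{align*}
since the middle quantity does not involve $P_1$ or $P_2$. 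Dividing by $2\pi$ gives the claim.

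There is essentially no obstacle here: the corollary is a direct byproduct of the index representation. The substantive work was already carried out in Theorem~\ref{thm:idx}, which simultaneously establishes that $[\genH,P]\varphi'(\genH)$ is trace-class (so that $\sigma_I(\genH,P)$ is even well-defined for any $P \in \fs(0,1;g(x,y))$) and identifies the resulting number with an index that sees only the sharp projector $\projP$. All that Corollary~\ref{cor:invP} does is read off the consequence that the two smooth filters $P_1$ and $P_2$ produce the same value. The only point worth spelling out in the final write-up is the choice of a common $\projP$, after which the equality of the two conductivities is a one-line computation.
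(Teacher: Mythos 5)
Your proposal is correct and is essentially identical to the paper's own proof: both apply Theorem \ref{thm:idx} twice with a common projector $\projP$ and observe that the index ${\rm Index}(\projP\, U(\genH)\, \projP)$ does not depend on the smooth switch function, so $2\pi\sigma_I(\genH,P_1) = {\rm Index}(\projP U(\genH)\projP) = 2\pi\sigma_I(\genH,P_2)$. The only cosmetic difference is that you spell out a concrete choice of $\projP$, which the paper leaves implicit.
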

\begin{proof}
    Applying Theorem \ref{thm:idx}, we have $2\pi \sigma_I (\genH,P_1) = {\rm Index} (\projP U(\genH) \projP) = 2\pi \sigma_I (\genH,P_2)$.
\end{proof}
For an intuitive explanation of the above corollary,} 
recall that our model is of
two types of materials ($+$ and $-$) that are smoothly glued together at a junction. See Figure \ref{fig:H1} for two examples, where we
assume for concreteness that $P$ transitions from $0$ to $1$ in the vicinity of $\{g=0\}$.
\sq{The solid curve (in both panels) can see only one transition ($\sigma_+$ to $\sigma_-$ and $\sigma_-$ to $\sigma_+$ in the left and right panels, respectively), while the dashed level curves contain multiple transitions ($+ \rightarrow - \rightarrow + \rightarrow -$ for the left panel and $- \rightarrow + \rightarrow - \rightarrow + \rightarrow - \rightarrow +$ for the right panel).}
\sq{We refer also to Figure \ref{fig:fg} (right panel) for an illustration of $f$ evaluated along various level curves of $g$ when $k=3$.}
As expected, the conductivity only cares about the ``starting'' and ``ending'' topology along the level curve and is unaffected by oscillations in between. 
\sq{Observe that Corollary \ref{cor:invP} describes a conservation law, as the solid and dashed curves respectively measure the conductivity entering and leaving the junction.

We will use Theorem \ref{thm:idx} to prove invariance of $\sigma_I (\genH, P)$ with respect to $\genH$. Since a Fredholm index must be integer-valued, we know that any continuous perturbation 
of $\genH$ would leave the junction conductivity unchanged; see Theorems \ref{thm:h}-\ref{thm:compact} below. 
First, we show that the conductivity is independent of the density of states we choose.


\begin{theorem}\label{thm:invvarphi}
    Take $\genH$ as in Theorem \ref{thm:idx} and let
    $\varphi_1 \in \fs (0,1)$ such that $\varphi'_1 \in \mathcal{C}^\infty_c (\{\Phi = 1\}^\circ)$. Then
    \begin{align*}
        \sigma_I(\genH,P) = \Tr i [\genH,P] \varphi '_1 (\genH).
    \end{align*}
\end{theorem}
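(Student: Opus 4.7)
The plan is to exploit the identification in Theorem \ref{thm:idx} between $2\pi\sigma_I(\genH,P)$ and a Fredholm index, and then to invoke homotopy invariance of the Fredholm index to interpolate between the two choices of density of states.

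First I would form the convex combination $\varphi_s := (1-s)\varphi + s\varphi_1$ for $s \in [0,1]$. Since $\fs(0,1)$ is closed under convex combinations (both the value $0$ near $-\infty$ and the value $1$ near $+\infty$ are preserved), we have $\varphi_s \in \fs(0,1)$ for every $s$. Moreover, $\varphi_s' = (1-s)\varphi' + s\varphi_1'$ lies in $\mathcal{C}^\infty_c(\{\Phi = 1\}^\circ)$ since both $\varphi'$ and $\varphi_1'$ do. Hence Theorem \ref{thm:idx} applies to each $\varphi_s$ with the same $\Phi$, yielding
\begin{equation*}
    2\pi \Tr i[\genH, P]\varphi_s'(\genH) = {\rm Index}(\projP U_s(\genH) \projP), \qquad U_s(\genH) := e^{i 2\pi \varphi_s(\genH)}.
\end{equation*}

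Next I would establish norm continuity of $s \mapsto U_s(\genH)$. By the spectral theorem together with the elementary bound $|e^{ia} - e^{ib}| \le |a-b|$,
\begin{equation*}
    \|U_s(\genH) - U_{s_0}(\genH)\| \le 2\pi \|\varphi_s - \varphi_{s_0}\|_\infty \le 2\pi |s - s_0| \cdot \|\varphi - \varphi_1\|_\infty,
\end{equation*}
so $s \mapsto \projP U_s(\genH) \projP$ is a norm-continuous family of Fredholm operators on $\Ran \projP$. By local constancy of the Fredholm index on the space of Fredholm operators in the norm topology, the index is constant in $s$. Evaluating at $s=0$ and $s=1$ and dividing by $2\pi$ then yields $\sigma_I(\genH, P) = \Tr i[\genH, P]\varphi_1'(\genH)$, as claimed.

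There is essentially no obstacle once Theorem \ref{thm:idx} is in hand: the linear homotopy preserves both the switch-function structure and the support constraint on the derivative, and norm continuity of the unitary family is immediate from the spectral theorem. An equally painless alternative is to observe that $s \mapsto 2\pi \Tr i[\genH, P]\varphi_s'(\genH)$ is a continuous $\mathbb{Z}$-valued function of $s$ and must therefore be constant, bypassing an explicit appeal to the continuity of the Fredholm index.
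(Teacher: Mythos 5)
Your proof is correct and takes essentially the same route as the paper: the identical linear homotopy $\varphi_s$, an application of Theorem \ref{thm:idx} at each $s$, and local constancy of the Fredholm index via norm continuity of $s \mapsto \projP U_s(\genH)\projP$. The only minor difference is that you obtain norm continuity from the spectral theorem and the bound $|e^{ia}-e^{ib}|\le |a-b|$, whereas the paper derives it from the Helffer--Sj\"ostrand formula; your estimate is simpler and equally valid.
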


We now show that the conductivity is independent of semiclassical rescaling. To this end, define $H_h := \Op _h (\sigma)$ for $0<h\le 1$, where $H=\Op (\sigma)$ is assumed to satisfy (H1). This allows for an expansion of $\sigma_I (H_h, P)$ in the semiclassical parameter $h$, which 
will be used to prove Theorem \ref{thm:main} (see Appendix \ref{sec:pf}). 
We refer to \cite{B-higher-dimensional-2021,bal3,QB}, where a similar semiclassical analysis is employed.
\begin{theorem}\label{thm:h}
    For any $h \in (0,1]$ 
    we have $\sigma_I (H_h, P) = \sigma_I (H,P)$. 
\end{theorem}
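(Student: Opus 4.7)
The plan is to recast $2\pi\sigma_I(H_h,P)$ as a Fredholm index via Theorem \ref{thm:idx} and then exploit homotopy invariance as $h$ varies. First, one checks that $H_h=\Op_h(\sigma)$ satisfies the hypotheses of Theorem \ref{thm:idx} for every $h\in(0,1]$. Since the semiclassical principal symbol of $H_h$ is still $\sigma$, the ellipticity and spectral-gap conditions from (H1) carry over unchanged (the interval $(E_1,E_2)$ does not depend on $h$), and the semiclassical Helffer--Sj\"ostrand functional calculus produces $\Phi(H_h)\in \Op_h(S(\langle f(x,y),\xi,\zeta\rangle^{-\infty}))$ with seminorms bounded uniformly on compact subintervals of $(0,1]$. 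Applying Theorem \ref{thm:idx} gives
\[
2\pi\sigma_I(H_h,P)\;=\;{\rm Index}\bigl(\projP\, U(H_h)\,\projP\bigr),\qquad U(H_h)=e^{i2\pi\varphi(H_h)}.
\]

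The second step is to show that the family $h\mapsto \projP\, U(H_h)\,\projP$ is continuous in operator norm on $(0,1]$. Granting this, the Fredholm index, being integer-valued and locally constant on norm-continuous Fredholm families, is constant on the connected interval $(0,1]$. Taking $h=1$ then yields $\sigma_I(H_h,P)=\sigma_I(H,P)$ for every $h\in(0,1]$.

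For the norm continuity I would use an almost analytic extension $\tilde\varphi$ of $\varphi$ together with the Helffer--Sj\"ostrand representation
\[
\varphi(H_h)=-\frac{1}{\pi}\int_{\mathbb C}\bar\partial \tilde\varphi(z)\,(z-H_h)^{-1}\,dz\wedge d\bar z.
\]
The second resolvent identity reduces continuity of $(z-H_h)^{-1}$ in $h$ to a bound on $H_h-H_{h'}$ as an operator $\mathcal{H}^m\to\mathcal{H}$. A change of variables $\xi\mapsto h\xi$ in the quantization integral yields $\Op_h(\sigma)=\Op_1(\sigma(x,h\xi))$, so the Calder\'on--Vaillancourt theorem applied to the symbol difference $\sigma(x,h\xi)-\sigma(x,h'\xi)\in S^m_{1,0}$ provides the required estimate, which tends to zero as $h\to h'$ on each compact subinterval of $(0,1]$. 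The inequality $\|e^{iA}-e^{iB}\|\leq\|A-B\|$ for self-adjoint $A,B$ then transfers norm continuity from $\varphi(H_h)$ to $U(H_h)$, and hence to $\projP\,U(H_h)\,\projP$.

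The main technical obstacle lies in this last step. The symbolic-calculus estimates must be tracked uniformly for $h$ in compact subsets of $(0,1]$, and the polynomial blow-up of resolvent bounds as ${\rm Im}\,z\to 0$ has to be absorbed by taking a sufficiently high-order almost analytic extension of $\varphi$. These are standard but somewhat delicate maneuvers in the semiclassical calculus, analogous to those needed in Appendix B to verify the hypotheses of Theorem \ref{thm:idx} in the first place.
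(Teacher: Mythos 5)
Your overall strategy is the same as the paper's: represent $2\pi\sigma_I(H_h,P)$ as ${\rm Index}(\projP U(H_h)\projP)$ via Theorem \ref{thm:idx} (after checking the hypotheses for $H_h$, which is Proposition \ref{trclass}), prove norm continuity of $h\mapsto U(H_h)$ using the Helffer--Sj\"ostrand formula together with the resolvent identity and the symbolic bound $H_{h'}-H_h\in\Op(|h'-h|\,S(\aver{\xi,\zeta}^m))$, and conclude by local constancy of the Fredholm index on the connected interval $(0,1]$ (H\"ormander, Theorem 19.1.5). Your Sobolev-space phrasing of the resolvent estimate is equivalent to the paper's symbol-class composition.

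There is, however, one step that does not work as written: you apply the Helffer--Sj\"ostrand representation directly to $\varphi(H_h)$. The formula \eqref{HSformula} (and the almost analytic extension \eqref{aae}) is stated for $\phi\in\mathcal{C}^\infty_0(\Rm)$, whereas $\varphi\in\fs(0,1;E_1,E_2)$ tends to $1$ at $+\infty$; for such a switch function $\bar\partial\tilde\varphi$ is supported in an unbounded horizontal strip and the integral $\int\bar\partial\tilde\varphi(z)(z-H_h)^{-1}\,d^2z$ is not absolutely convergent, nor does the second-resolvent-identity version of the difference $\varphi(H_h)-\varphi(H_{h'})$ converge with a bound of size $O(|h-h'|)$, since the factor $(H_{h'}-H_h)(z-H_{h'})^{-1}$ costs a power of $\aver{z}$ on the part of the strip where $\Re z\to+\infty$. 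The paper's proof sidesteps exactly this: it applies Helffer--Sj\"ostrand not to $\varphi$ but to $\Uminusone=U-I=e^{i2\pi\varphi}-I$, which \emph{is} compactly supported (because $\varphi\in\{0,1\}$ outside a compact set), and estimates $U_{h'}-U_h$ directly from the resolvent identity, obtaining $\norm{U_{h'}-U_h}\le C|h'-h|$ without ever needing continuity of $\varphi(H_h)$ or the inequality $\norm{e^{iA}-e^{iB}}\le\norm{A-B}$. If you replace your representation of $\varphi(H_h)$ by this representation of $U(H_h)-I$, the rest of your argument goes through and coincides with the paper's proof.
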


Let $\pert$ be a symmetric pseudo-differential operator 
and 
define $\Hmu := H+\mu \pert$ for $\mu \in [0,1]$.
\begin{theorem}\label{thm:bdd}
    If 
    $\pert \in \Op (S^m_{1,0})$, then 
    $\sigma_I (\Hmu, P) = \sigma_I (H,P)$ for all $\mu >0$ sufficiently small.
\end{theorem}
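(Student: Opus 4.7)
My plan is to prove the theorem by reducing the question to stability of a Fredholm index, exploiting Theorem \ref{thm:idx} together with the invariance of the Fredholm index under norm-continuous perturbations and its integer-valuedness.

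As a first step, I would verify that for all sufficiently small $\mu$ the perturbed Hamiltonian $\Hmu = H + \mu\pert$ satisfies the hypotheses of Theorem \ref{thm:idx}. Self-adjointness of $\Hmu$ on $\mathcal{D}(H)=\Hcal^m$ follows from a Kato--Rellich-type argument since $\pert \in \Op(S^m_{1,0})$ is of the same order as $H$ and $H$ is self-adjoint with a gap; membership $\Hmu \in \Op(S(\aver{\xi,\zeta}^m))$ is immediate from the symbolic calculus. The delicate condition is $\Phi(\Hmu)\in\Op(S(\aver{f(x,y),\xi,\zeta}^{-\infty}))$. I would establish this via a Helffer--Sj\"ostrand representation
\begin{equation*}
    \Phi(\Hmu) = \frac{1}{\pi}\int_{\Cm}\bar\partial\widetilde\Phi(z)\,(z-\Hmu)^{-1}\,dz\wedge d\bar z,
\end{equation*}
with $\widetilde\Phi$ an almost-analytic extension of $\Phi$ supported in a thin complex neighborhood of $\supp\Phi\subset(E_1,E_2)$. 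For such $z$ and for $\mu$ small enough, the resolvent $(z-\Hmu)^{-1}$ exists and lies in $\Op(S^{-m}_{1,0})$ uniformly in $z$, since a Neumann series argument $(z-\Hmu)^{-1}=(z-H)^{-1}\sum_{k\geq 0}(\mu\pert\,(z-H)^{-1})^k$ converges in the appropriate symbolic sense thanks to Proposition \ref{trclass} applied to $H$, and the $\bar\partial\widetilde\Phi$ factor contributes the arbitrary polynomial decay needed off the spectrum. This transfers the decay of the symbol of $\Phi(H)$ in $\aver{f(x,y),\xi,\zeta}$ to $\Phi(\Hmu)$.

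Having verified the hypotheses, Theorem \ref{thm:idx} applies to both $H$ and $\Hmu$, yielding
\begin{equation*}
    2\pi\sigma_I(\Hmu,P) = {\rm Index}(\projP\,U(\Hmu)\,\projP),\qquad 2\pi\sigma_I(H,P) = {\rm Index}(\projP\,U(H)\,\projP),
\end{equation*}
with $U(\Hmu) = e^{i2\pi\varphi(\Hmu)}$. The next step is to show that $\mu\mapsto U(\Hmu)$ is continuous in the operator norm at $\mu=0$. Applying Helffer--Sj\"ostrand again to $\varphi$ (which is smooth and compactly supported on the spectrum of interest after subtracting a constant that commutes with $U$'s definition), one writes
\begin{equation*}
    \varphi(\Hmu) - \varphi(H) = \frac{1}{\pi}\int_{\Cm}\bar\partial\widetilde\varphi(z)\,(z-\Hmu)^{-1}\,\mu\pert\,(z-H)^{-1}\,dz\wedge d\bar z,
\end{equation*}
whose operator norm is $O(\mu)$ uniformly, using again the uniform resolvent bound above and the fact that $\bar\partial\widetilde\varphi$ vanishes to infinite order on the real axis. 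Exponentiation preserves operator-norm continuity, so $\|U(\Hmu)-U(H)\|_{\mathcal{B}(\Hcal)}\to 0$ as $\mu\to 0$.

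Finally, since $\projP$ is a bounded projector independent of $\mu$, the family $\projP U(\Hmu)\projP$ (restricted to $\Ran\projP$) is norm-continuous in $\mu$. A Fredholm index is locally constant under norm-continuous perturbations of a Fredholm operator, so ${\rm Index}(\projP U(\Hmu)\projP)$ is constant on a neighborhood of $\mu=0$, yielding $\sigma_I(\Hmu,P)=\sigma_I(H,P)$ for all sufficiently small $\mu>0$. The principal technical obstacle is the uniform-in-$\mu$ symbolic control of $(z-\Hmu)^{-1}$ along the contour used in the Helffer--Sj\"ostrand formula, which is precisely what lets us propagate both the localization property needed for Theorem \ref{thm:idx} to apply and the norm continuity of the unitaries $U(\Hmu)$.
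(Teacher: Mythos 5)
Your overall strategy (verify the hypotheses of Theorem \ref{thm:idx} for $\Hmu$, then use norm continuity of $\mu\mapsto U(\Hmu)$ and local constancy of the Fredholm index) is exactly the paper's strategy, and the second half of your argument — the Helffer--Sj\"ostrand estimate showing $\|\varphi(\Hmu)-\varphi(H)\|=O(\mu)$, hence $\|U(\Hmu)-U(H)\|\to0$, hence constancy of ${\rm Index}(\projP U(\Hmu)\projP)$ — is essentially the paper's. The gap is in the step you yourself flag as delicate: showing $\Phi(\Hmu)\in\Op(S(\aver{f(x,y),\xi,\zeta}^{-\infty}))$. Your Neumann-series representation $(z-\Hmu)^{-1}=(z-H)^{-1}\sum_{k\ge0}(\mu\pert(z-H)^{-1})^k$ inside the Helffer--Sj\"ostrand formula only yields uniform symbolic control of the form $S(\aver{\xi,\zeta}^{-m})$ (with bounds algebraic in $|\Im z|^{-1}$), hence frequency decay $\aver{\xi,\zeta}^{-\infty}$ of $\Phi(\Hmu)$; it does not produce any decay in $\aver{f(x,y)}$. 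The spatial decay of $\Phi(H)$ established in Proposition \ref{trclass} comes from comparing $\Phi(H)$ with $\Phi(H_\pm)=0$, i.e.\ from the fact that $\Phi$ is supported in the common bulk gap and $\sigma-\sigma_\pm$ vanishes away from the junction; the resolvents $(z-H)^{-1}$ and $(z-\Hmu)^{-1}$ themselves carry no such spatial localization (away from the junction they look like bulk resolvents), and $\pert\in\Op(S^m_{1,0})$ is not assumed to decay in $(x,y)$ at all. The factor $\bar\partial\tilde\Phi$ only supplies decay in $|\Im z|$, which controls powers of the resolvent, not behavior in $x,y$; and Proposition \ref{trclass} cannot be ``applied to $H$'' inside the series, since the terms involve resolvents of $H$, not compactly-energy-supported functions of $H$. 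So as written, the verification of the localization hypothesis of Theorem \ref{thm:idx} for $\Hmu$ does not go through.

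The paper closes this gap with an algebraic identity rather than a resolvent expansion: choose $\Phi_0\in\mathcal{C}^\infty_c(E_1,E_2)$ with $\Phi\in\mathcal{C}^\infty_c(\{\Phi_0=1\}^\circ)$, set $\Theta:=\Phi(\Hmu)-\Phi(H)$ and $\Theta_0:=\Phi_0(\Hmu)-\Phi_0(H)$, and use $\Phi=\Phi\,\Phi_0$ to derive $\Theta(1-\Theta_0)=\Phi(H)\Theta_0+\Theta\,\Phi_0(H)$. For $\mu$ small, $\|\Theta_0\|=O(\mu)$ so $(1-\Theta_0)^{-1}\in\Op(S(1))$, and every term on the right contains a factor $\Phi(H)$ or $\Phi_0(H)$, which by Proposition \ref{trclass} lies in $\Op(S(\aver{f(x,y),\xi,\zeta}^{-\infty}))$; the composition calculus then gives $\Theta$, hence $\Phi(\Hmu)$, the required joint decay. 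If you want to keep your structure, replace your Neumann-series paragraph with an argument of this type (or otherwise exploit that $\Phi_0(\Hmu)$ acts as an approximate identity on $\Phi(\Hmu)$); without it, the smallness restriction on $\mu$ also loses its actual role in your write-up, since it is precisely what makes $1-\Theta_0$ invertible (and keeps $\sigmamu$ elliptic so that $\Hmu$ is self-adjoint on $\mathcal{H}^m$).
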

Theorem \ref{thm:bdd} asserts the stability of the junction conductivity with respect to relatively bounded perturbations that are small enough. To avoid this smallness condition, 
we must make the stronger assumption that the perturbation be relatively compact; that is, we require the symbol of $\pert (i\pm H)^{-1}$ to decay at infinity in all variables.
\begin{theorem}\label{thm:compact}
    If 
    $W \in \Op (S^m_{1,0} \cap S (\aver{\xi, \zeta}^{m-\delta} \aver{x,y}^{-\delta}))$ for some $\delta > 0$, then $\sigma_I (\Hmu, P) = \sigma_I (H,P)$ for all $\mu \in [0,1]$.
\end{theorem}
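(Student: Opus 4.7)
The plan is to reduce the claim to homotopy invariance of a Fredholm index via Theorem~\ref{thm:idx}, by following the path $\mu \mapsto \Hmu$ from $\mu=0$ to $\mu=1$ and showing the associated index is constant.

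As a first step, I would verify that $\Hmu$ satisfies the hypotheses of Theorem~\ref{thm:idx} for every $\mu\in[0,1]$. Self-adjointness on $\mathcal{D}(H)=\Hcal^m$ follows from $W$ being symmetric and relatively bounded (indeed $W\in\Op(S^m_{1,0})$ maps $\Hcal^m\to\Hcal$ continuously). Membership $\Hmu\in\Op(S(\aver{\xi,\zeta}^m))$ is immediate. The crucial requirement is $\Phi(\Hmu)\in\Op(S(\aver{f(x,y),\xi,\zeta}^{-\infty}))$. For this I would use the Helffer--Sj\"ostrand formula
\begin{equation*}
\Phi(\Hmu) = -\frac{1}{\pi}\int_{\Cm}\bar\partial\tilde\Phi(z)\,(z-\Hmu)^{-1}\,dL(z),
\end{equation*}
with $\tilde\Phi$ an almost-analytic extension of $\Phi$ compactly supported in a narrow complex neighborhood of $(E_1,E_2)$. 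Since the symbol of $W$ decays as $\aver{x,y}^{-\delta}$, the bulk symbols of $\Hmu$ remain exactly $\sigma_\pm$, so that $\Phi(\sigma^{(\mu)})\to 0$ rapidly as $|(x,y)|\to\infty$ by the bulk gap in (H1). A parametrix construction for $(z-\Hmu)^{-1}$ for $z$ in the gap strip then yields the required symbol decay in $\aver{f(x,y),\xi,\zeta}$, uniformly in $\mu\in[0,1]$, mirroring the analogous statement built into (H1) for $H$ itself.

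Next, I would establish norm-continuity of $\mu\mapsto U(\Hmu)=e^{i2\pi\varphi(\Hmu)}$. Applying Helffer--Sj\"ostrand to $\varphi$ and using the second resolvent identity
\begin{equation*}
(z-H^{(\mu_2)})^{-1}-(z-H^{(\mu_1)})^{-1} = (\mu_2-\mu_1)(z-H^{(\mu_2)})^{-1}W(z-H^{(\mu_1)})^{-1},
\end{equation*}
together with the fact that $W(z-H^{(\mu_j)})^{-1}$ is bounded on $\Hcal$ (since $W$ is of order $m$ and the resolvent of order $-m$), one gets Lipschitz continuity $\|\varphi(H^{(\mu_2)})-\varphi(H^{(\mu_1)})\| \le C|\mu_2-\mu_1|$, and hence the same for $U(\Hmu)$. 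Consequently $\mu\mapsto \projP U(\Hmu)\projP$ is norm-continuous as a map from $[0,1]$ to bounded operators on $\Ran(\projP)$.

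To conclude, Theorem~\ref{thm:idx} applied at each $\mu$ gives $2\pi\sigma_I(\Hmu,P) = {\rm Index}(\projP U(\Hmu)\projP)$, where $\projP U(\Hmu)\projP$ is Fredholm on $\Ran(\projP)$. Since the Fredholm index is locally constant in the operator-norm topology and takes only integer values, the norm-continuous path cannot jump, so the index is independent of $\mu\in[0,1]$. Therefore $\sigma_I(\Hmu,P)=\sigma_I(H,P)$ for all $\mu\in[0,1]$.

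The main obstacle is the uniform-in-$\mu$ verification that $\Phi(\Hmu)\in\Op(S(\aver{f,\xi,\zeta}^{-\infty}))$. This is where the relatively compact assumption is essential: the spatial decay $\aver{x,y}^{-\delta}$ in the symbol of $W$ preserves the bulk symbols $\sigma_\pm$ of $H$, so the spectral gap survives at infinity and the parametrix for $(z-\Hmu)^{-1}$ inherits the rapid decay transverse to the junction curves $\{f=0\}$. A merely relatively bounded perturbation as in Theorem~\ref{thm:bdd} could alter the bulk symbols and close the gap at infinity for large $\mu$, which is precisely why that theorem requires $\mu$ small, whereas here the argument carries through on the whole interval $[0,1]$.
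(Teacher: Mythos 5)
Your overall reduction---verify the hypotheses of Theorem \ref{thm:idx} for every $\mu\in[0,1]$, prove norm continuity of $\mu\mapsto U(\Hmu)$ via Helffer--Sj\"ostrand and the second resolvent identity, and invoke local constancy of the Fredholm index---is exactly the paper's strategy, and those parts are sound (note also that ellipticity and self-adjointness for \emph{all} $\mu$ indeed follow because the symbol of $\pert$ has order $m-\delta<m$). The gap is in the step you yourself flag as crucial: showing $\Phi(\Hmu)\in\Op\bigl(S(\aver{f(x,y),\xi,\zeta}^{-\infty})\bigr)$ uniformly for $\mu\in[0,1]$. Your argument rests on the claim that ``the bulk symbols of $\Hmu$ remain exactly $\sigma_\pm$.'' This is false: the symbol of $\pert$ decays like $\aver{x,y}^{-\delta}$ but does not vanish outside a compact set, so in the bulk regions the symbol of $\Hmu$ is $\sigma_\pm+\mu w$ with only polynomial decay of the discrepancy. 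Consequently the mechanism behind Proposition \ref{trclass}---writing $\Phi(H)-\Phi(H_{\pm})$ via Helffer--Sj\"ostrand and using that $\sigma-\sigma_\pm$ vanishes \emph{identically} for $|f|$ large---does not transfer: applied to $\Hmu$, a single resolvent-difference step yields only $O(\aver{x,y}^{-\delta})$ symbol decay, not the rapid decay in $\aver{f(x,y),\xi,\zeta}$ that Theorem \ref{thm:idx} needs for the trace-class property and the index identity. A parametrix or functional-calculus expansion does not repair this: its successive terms gain powers of $\aver{\xi,\zeta}^{-1}$ (or of $h$), not of $\aver{x,y}^{-1}$, and each appearance of $\pert$ contributes only a factor $\aver{x,y}^{-\delta}$, so ``a parametrix construction \dots yields the required symbol decay'' is an assertion of precisely the hard point, not a proof of it.

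The paper closes this gap with an operator-algebraic bootstrap, and this is where relative compactness is actually used. With $\Theta:=\Phi(\Hmu)-\Phi(H)$ and $\Theta_0:=\Phi_0(\Hmu)-\Phi_0(H)$ for an intermediate cutoff $\Phi_0$ with $\Phi\in\mathcal{C}^\infty_c(\{\Phi_0=1\}^\circ)$, one has the identity $\Theta(1-\Theta_0)=\Phi(H)\Theta_0+\Theta\Phi_0(H)$, whose right-hand side lies in $\Op(S(\aver{f(x,y),\xi,\zeta}^{-\infty}))$ by Proposition \ref{trclass}. A single Helffer--Sj\"ostrand application gives only $\Theta_0\in\Op(S(\aver{x,y,\xi,\zeta}^{-\delta}))$, but that decay \emph{in all phase-space variables} permits the splitting $\Theta_0=\Theta_{00}+\Theta_{01}$ with $\Theta_{00}$ compactly supported in phase space and $\Theta_{01}$ as small as necessary, so $1-\Theta_{01}$ is invertible in $\Op(S(1))$ for every $\mu\in[0,1]$; then $\Theta=(\Theta(1-\Theta_0)+\Theta\Theta_{00})(1-\Theta_{01})^{-1}$ inherits the rapid decay. (In Theorem \ref{thm:bdd} the analogous inversion of $1-\Theta_0$ is only available for small $\mu$, which is the real reason that theorem carries a smallness condition.) Without this bootstrap---or an equivalent iteration upgrading $\aver{x,y}^{-\delta}$ to arbitrary polynomial order with a controlled remainder---your argument does not establish that $[\Hmu,P]\varphi'(\Hmu)$ is trace-class or that $2\pi\sigma_I(\Hmu,P)$ equals the index at large $\mu$, so the homotopy argument cannot get started.
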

We refer to Appendix \ref{sec:pf} for the proofs of Theorems \ref{thm:invvarphi}-\ref{thm:compact}.
Recall that $m=1$ for the tBLG Hamiltonian \eqref{eq:TBGH}.
}

\subsection{Dirac model}
We conclude this section with an application of \sq{Theorem \ref{thm:bic}} to the $2 \times 2$ Dirac Hamiltonian given by
\begin{equation}\label{e:hm}
    H = D_x \sigma_1 + D_y \sigma_2 + \tilde{m}(x,y) \sigma_3,
\end{equation}
where $\tilde{m} (x,y) = m(f(x,y))$ for some \sq{$m \in \fs (-1,1)$}, and
\begin{align*}
    f(r, \theta) = \chi (r) r f_\Theta (\theta), \qquad \chi \in \fs (0,1;\eps,1), \qquad f_\Theta (\theta) = \sin (k \theta)
\end{align*}
for some $0 < \eps < 1$ and $k \in \mathbb{N}_+$.
Note that when $k = 1$, we recover the setting of a flat interface analyzed in detail in \cite{B-bulk-interface-2019,B-EdgeStates-2018,B-higher-dimensional-2021,bal3,QB}, while $k=3$ yields the same hexagonal structure analyzed at the beginning of this section. \sq{As with Proposition \ref{prop:H1tBLG}, it is straightforward to verify that $H$ defined by \eqref{e:hm} satisfies (H1), and thus the above theory applies. The bulk spectral gap in this case is $(E_1,E_2)=(-1,1)$.}

Take $P (x,y) = \chi_p (g(x,y))$ for $\chi_p \in \fs (0,1)$, where $g(x,y) = g(r,\theta) = \chi (r) r g_\Theta (\theta)$ and $g_\Theta (\theta) = \cos (\theta - \theta_1) - \cos (\theta_0)$, for some $0 < \theta_0 < \pi$ and $-\frac{\pi}{k} < \theta_1 < \frac{\pi}{k}$ satisfying
$$\theta_+ \notin \Theta_k, \quad 2\pi - \theta_+ \notin \Theta_k, \quad \Theta_k = \Big \{\frac{j}{k \pi}: j \in \mathbb{Z}\Big\}, \quad \theta_+ := \theta_0 + \theta_1.$$
This way, the zeros of $g_\Theta$ and $f_\Theta$ are disjoint.
Indeed, $g_\Theta (\theta) = 0$ if and only if $\cos (\theta-\theta_1)= \cos (\theta_0)$, which occurs exactly when $\{\theta + 2\pi j : j \in \mathbb{Z}\} \cap \{\theta_+, 2\pi - \theta_+\} \ne \emptyset$.
It follows that 
$\tau (t,\xi, \zeta) = \xi \sigma_1 + \zeta \sigma_2 + \mu (t) \sigma_3$ 
for some
$\mu \in \cup_{\eps_1, \eps_2 \in \{-1,1\}} \fs (\eps_1, \eps_2)$, where the $\eps_j$ are determined by $\theta_0$ and $\theta_1$. 
We have reduced the problem to computing the interface conductivity for the translation-invariant $2 \times 2$ Dirac system \cite{B-bulk-interface-2019,bal3,QB}, and hence $2 \pi \sigma_I = \frac{1}{2} (\eps_1 - \eps_2)$.
For the case $(k,\theta_1)=(3,0)$ which is analyzed numerically in the following section, we have
\begin{align*}
    2\pi \sigma_I = \begin{cases}
    -1, & 0<\theta_0 < \pi/3\\
    1, & \pi/3 < \theta_0 < 2\pi/3\\
    -1, & 2\pi/3 < \theta_0 <\pi
    \end{cases}.
\end{align*}
If $g$ from Figure \ref{fig:H1} were to have the above form, then $\theta_1 = 0$ and $\theta_0$ would be the angle between [the line making up the top of the level curve $\{g = 0\}$] and [the positive $x-$axis].
So $\pi/2 < \theta_0 < \pi$ and $2\pi/3 < \theta_0 < \pi$ for the left and right panels, respectively.

\section{Numerical Simulations} \label{sec:numerics}


We now present results of numerical simulations illustrating the stability of the interface and junction conductivities analyzed in the preceding sections. We augment these spectral results with simulations of wavepackets propagating across the junction considered in section~\ref{sec:junctions} (case $k=3$ in Fig.~\ref{fig:H1}).


\subsection{Methods and Discretization}
All computations are conducted on a finite interval of size $L$ for one-dimensional problems such as~\eqref{e:h2} (see section~\ref{subsec:numericalvalleys}) or a rectangular domain of size $L_x \times L_y$ for two-dimensional problems such as conductivities for the Hamiltonian~\eqref{e:h4} or~\eqref{e:hm}, equipped with periodic boundary conditions. We use a pseudo-spectral discretization where all functions are approximated using truncated Fourier series:
\[
    u(x) = \sum_{k = -K}^{K} \widehat{u}_{k} e^{2i\pi k x / L} \quad \text{or} \quad u(x,y) = \sum_{k = -K_x}^{K_x}\sum_{l = -K_y}^{K_y} \widehat{u}_{k,l} e^{2i\pi k x / L_x + 2i\pi l y / L_y },
\] 
such that derivation operators are represented as diagonal matrices. Pointwise multiplication operators such as $Vu(x) = v(x) u(x)$ are computed using Fourier interpolation on a uniform real-space grid of size $3(K+1)$ in the one-dimensional case and $3(K_x+1) \times 3(K_y+1)$ in the two-dimensional one employing the discrete Fourier transform $\mathcal{F}$, avoiding the aliasing of products for small values of $K$ and sharp or highly oscillatory mass or potential profiles $m(x,y), v(x,y)$. 

Computations of conductivities using finite domains are accomplished using the approximation
\begin{equation}\label{eq:numconductivity}
    2\pi \widetilde{\sigma}(H) = 2\pi \Tr i Q [H,P] \phi'(H),
\end{equation}
where $P = p(x,y)$ and $Q = q(x,y)$ are pointwise multiplication operators where, as before, $p(x,y)$ is a spatial smooth switch function across a certain contour, and $q(x,y)$ is a spatial filter designed to mask the boundaries which may host artificial, unwanted domain walls (necessary to have space-dependent coefficients such as $m(x,y)$ remain smoothly connected across the periodic boundaries of the torus). More precisely, we set here:
\begin{equation}\label{eq:numfilters}
    \begin{aligned}
    p_{x_0,\delta}(x,y) &= \begin{cases}
        1, & \max \left \{ x - x_0,  \quad -\cos(\theta) (y - L_y/2) \pm \sin(\theta) (x - x_0 - L_x/4) \right \} \geq \delta, \\
        0, & \max \left \{ x - x_0,  \quad -\cos(\theta) (y - L_y/2) \pm \sin(\theta) (x - x_0 - L_x/4) \right \} \leq -\delta,
    \end{cases}\\
    &\text{ and } \\
    q_\delta(x,y) &= \begin{cases}
        1, & \max \left \{ \vert x -L_x/2 \vert - L_x/4, \quad \vert y - L_y/2 \vert - 3L_y/8 \right \} \leq - \delta , \\
        0, & \max \left \{ \vert x -L_x/2 \vert - L_x/4, \quad \vert y - L_y/2 \vert - 3L_y/8 \right \} \geq \delta,
    \end{cases}
    \end{aligned}
\end{equation}
such that $\delta$ is a small parameter controlling the width of transition regions, $\theta = \pi - \frac{\pi}{12}$ is as in Section~\ref{sec:junctions} and $x_0$ is the abscissa at which the support of $\nabla p_{x_0, \delta}$ crosses the horizontal axis: see Figure~\ref{fig:mass_switch_profile} for such a profile.
Valley-projected numerical conductivities $\widetilde{\sigma}_\pm$ are defined similarly~\eqref{eq:valleyconductivity}. 

\begin{figure}[ht!]\centering
    \includegraphics[width=.4\textwidth]{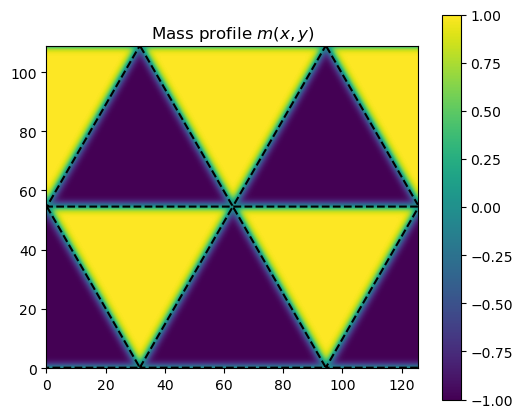}
    \hspace{.5in}
    \includegraphics[width=.4\textwidth]{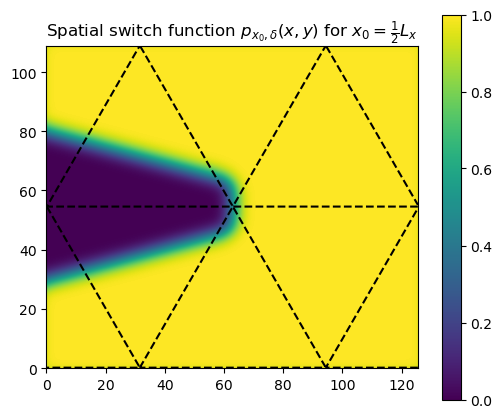}\\
    \includegraphics[width=.32\textwidth]{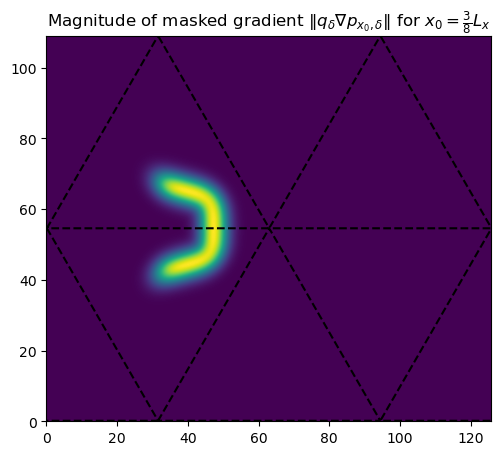}
    \includegraphics[width=.32\textwidth]{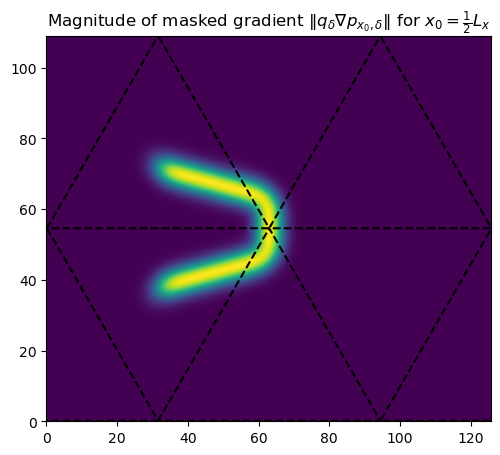}
    \includegraphics[width=.32\textwidth]{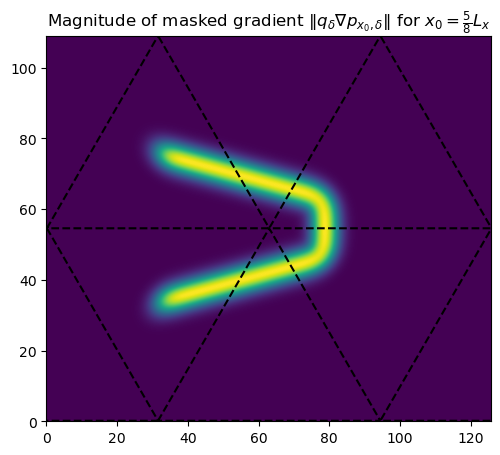}
\caption{Top left: mass profile $m(x,y)$ yielding a three-fold symmetric topological junction. The contour $m(x,y) = 0$ is highlighted in dashed black line. Top right: spatial switch function $p_{x_0, \delta}(x,y)$ for $x_0 = L_x/2$ used in the computation of the conductivity~\eqref{eq:numconductivity}. Bottom row: magnitude profiles of the gradient of the spatial switch function $\nabla p_{x_0, \delta}(x,y)$ multiplied by the filter $q_0(x,y)$~\eqref{eq:numfilters} for the three values (from left to right) $x_0 = 3L_x/8,\ L_x/2,\ 5L_x/8$ presented in Table~\ref{tab:junction_invariants_num}.}
\label{fig:mass_switch_profile}
\end{figure}

Practically, the evaluation of~\eqref{eq:numconductivity} is achieved by direct diagonalization of the discretized Hamiltonian $H$, resulting in the sum-over-states formula:
\[
    2\pi \widetilde{\sigma}(H) = 2\pi \sum_{j} \phi'(\lambda_j) \left \langle u_j  \left \vert Q_\delta \begin{pmatrix}0 & \partial_x p_{x_0, \delta} - i \partial_y p_{x_0, \delta} \\ \partial_x p_{x_0, \delta} + i \partial_y p_{x_0, \delta} & 0 \end{pmatrix}  \right \vert u_j \right \rangle,
\]
where $\lambda_j$, $u_j$ are the eigenvalues and eigenvectors of $H$ and the scalar product is evaluated by quadrature on the uniform real-spaced grid discussed earlier.

The spectral accuracy of the numerical conductivity~\eqref{eq:numconductivity} using our spectral discretization scheme can be established rigorously by observing that the relevant edge states vanish exponentially fast in the bulk. We do not pursue this analysis here and refer to \cite{QB} for the case of a single straight domain wall for elliptic partial differential operators, including the massive Dirac operator.

\subsection{Invariant Calculations for Junctions}
As a first numerical experiment illustrating the results of Section~\ref{sec:junctions}, we compute the conductivity in a hexagonal junction setup as in region 2 from Figure~\ref{fig:tblg}, i.e. a vertex with three-fold symmetry with the topology as in Figure~\ref{fig:H1}. Specifically, our choice of spatial projector~\eqref{eq:numfilters} with $\theta = \pi - \frac{\pi}{12}$ corresponds to a function $g$ whose level set $g^{-1}(0)$ fully encloses only the left horizontal branch of the boundary set $m^{-1}(0)$, i.e. the negative real axis. 

\begin{table}[b!]
    \centering
    \begin{tabular}{|c||c|c|c|c|} \hline
                        & $N=8$     & $N=16$    & $N=32$    & $N=64$    \\  \hline
      $x_0=3L_x/8$      & $0.36144$ & $0.92310$ & $0.99923$ & $0.99983$ \\  \hline
      $x_0 = L_x/2$       & $0.53574$ & $0.72033$ & $0.99612$ & $0.99997$ \\  \hline
      $x_0= 5L_x/8$     & $0.05648$ & $0.88647$ & $0.99901$ & $0.99993$ \\  \hline
    \end{tabular}
    \caption{Conductivity values~\eqref{eq:numconductivity} computed for different values of the spatial switch function center $x_0$ along the $x$-axis, and discretization parameter $N = K_x = K_y$.}
    \label{tab:junction_invariants_num} 
\end{table}

Results from Section~\ref{sec:junctions} predict that the quantity $2\pi\sigma_I$ takes the value $1$ independently of the exact choice of such a spatial projector, in particular whether $x_0 < L_x/2$ and the level set $g^{-1}(0)$ intersects the negative real axis only, or $x_0 > L_x/2$ and it crosses the other five boundaries. Our numerical computations, presented in Table~\ref{tab:junction_invariants_num}, are consistent with the theory, with the converged results ($N=64$) agreeing up to four digits of accuracy with the predicted value.

Table~\ref{tab:junction_invariants_num} also illustrates the stability and efficiency of our pseudo-spectral discretization approach. For all three choices of parameter $x_0$, we note the exponential convergence of the computed conductivity with the frequency cutoff parameter $N$, consistent with the robustness of the conductivity demonstrated in earlier sections. Numerical experiments (not presented) also indicate fast exponential convergence in the size of the domain $L_x \times L_y$, consistent with the exponential localization of eigenmodes in the energy window of interest around boundaries of interest as well as artificial domain walls induced by periodization.

\subsection{Wavepacket Propagation through Junctions}
\begin{figure}[b!]
    \centering
    \includegraphics[width=.32\textwidth]{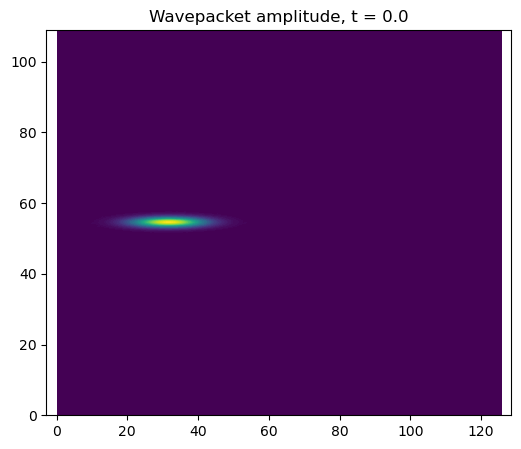}
    \includegraphics[width=.32\textwidth]{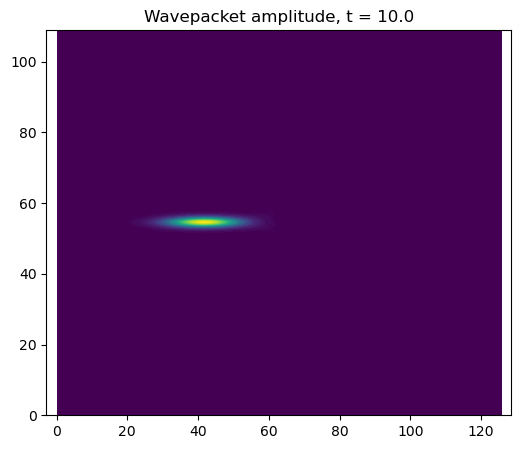}
    \includegraphics[width=.32\textwidth]{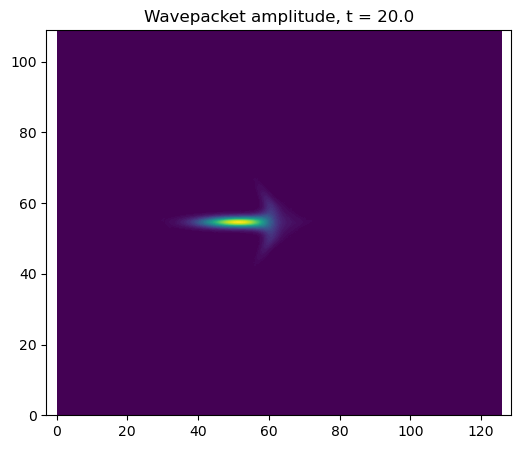} \\
    \includegraphics[width=.32\textwidth]{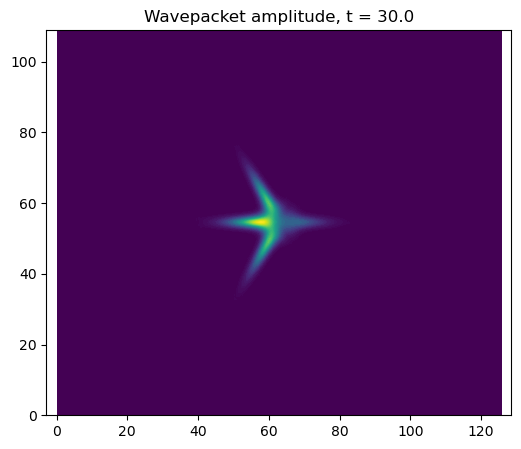}
    \includegraphics[width=.32\textwidth]{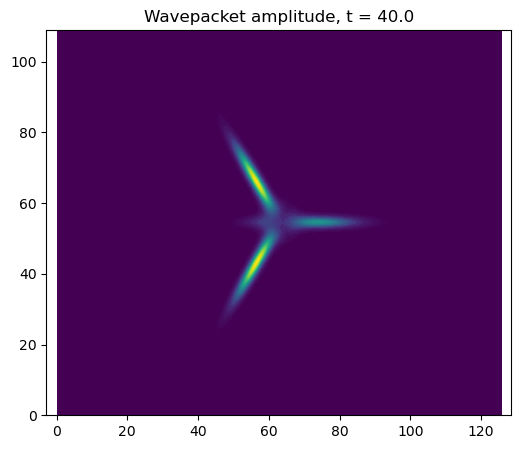}
    \includegraphics[width=.32\textwidth]{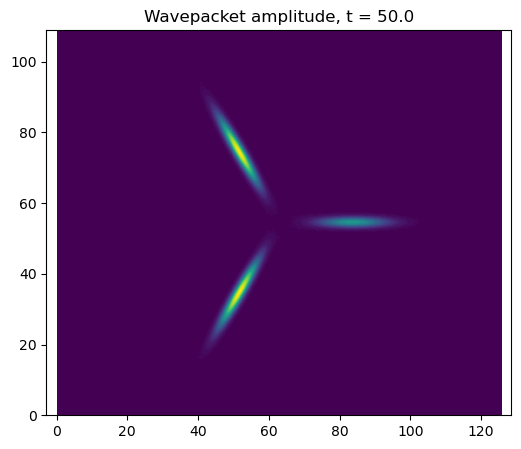}
    \caption{Scattering of a wavepacket incoming from the left along a horizontal branch towards a hexagonal cross, with a rotationally symmetric mass profile. }
    \label{fig:propagation_clean}
\end{figure}

Next, we study numerically the time-domain propagation of wavepackets through a hexagonally symmetric junction given by the same mass profile as plotted in Figure~\ref{fig:mass_switch_profile}. Let us note that the stable topological invariants presented in Section~\ref{sec:junctions}, which can be interpreted as counting the net number of modes or currents traveling in certain directions along the interfaces~\cite{prodan2016bulk,B-EdgeStates-2018,B-bulk-interface-2019}, only partly predict the dynamics of localized wavepackets traveling across a junction. 
In particular, the sign of the conductivity identifies certain branches of the interface $m^{-1}(0)$ as incoming channels on which net propagation of current is towards the junction, while on others current is outgoing, directed outwards from the junction (with the sum of current on all channels always being zero).
On the other hand, the steady-state current generated by the state $\phi'(H)$ is equally distributed between all channels, and does not allow to distinguish how the current 'originating' from one of the incoming channels distributes itself.

Semiclassical analysis, in the case of non-intersecting, straight or curved interfaces, shows that for Dirac operators such as the one considered here~\eqref{e:hm}, these currents have a local interpretation via the construction of long-lived localized wavepackets travelling unidirectionally~\cite{bal2021edge}, with some dispersion due to the curvature of the interface.
Hence, wavepackets travelling along an interface towards the junction are expected to scatter into any number wavepackets travelling on the outgoing channels.

While such rigorous analysis has not been carried out in the case of junctions, our numerical simulations of such scattering across a hexagonal junction show that a Gaussian wavepacket, initially travelling along the incoming channel supported by the horizontal edge to the left of the junction, splits three-fold at the intersection along the three outgoing branches of the junction: see snapshots of the evolution presented in Figure~\ref{fig:propagation_clean}. 
More precisely, an initial Gaussian packet is prepared at $t=0$ on the left horizontal branch which forms an incoming channel towards the intersection, and we compute its time evolution until $t=50$ where the packet has completed its travel through the intersection.
In contrast to the numerical results presented in~\cite{bal2021edge}, we observe remarkably little dispersion of the packet despite the sharp turns in the direction of propagation. 

As mentioned above, the topology measured by the quantized junction conductivity, while heuristically linked to the robustness and stability of the propagating edge modes, does not distinguish between channels and thus does not constrain the entries of the scattering matrix between modes supported by incoming and outgoing channels.
We observe on Figure~\ref{fig:propagation_clean} that the wavepacket does not split equally between all three branches, with a smaller amount travelling straight across the intersection.
\begin{figure}[b!]
    \centering
    \includegraphics[width=.32\textwidth]{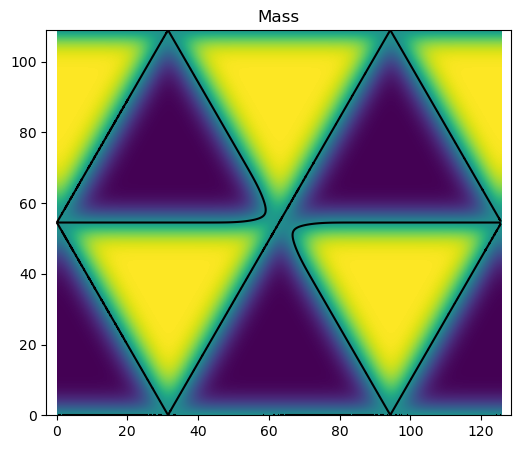}
    \includegraphics[width=.32\textwidth]{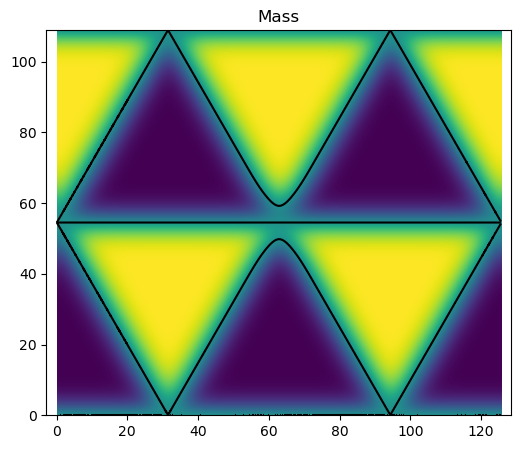}
    \includegraphics[width=.32\textwidth]{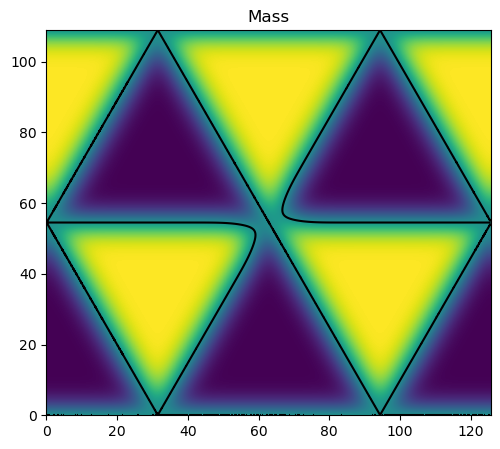} \\
    \includegraphics[width=.32\textwidth]{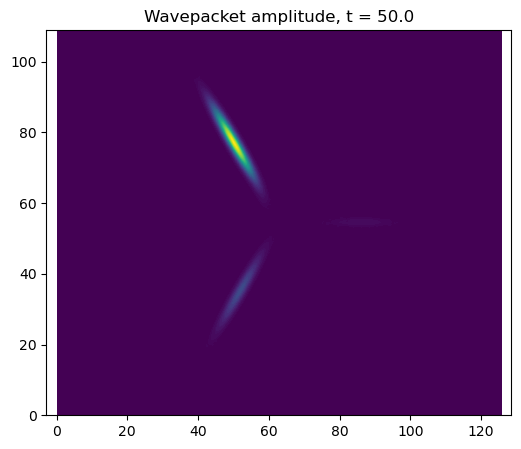} 
    \includegraphics[width=.32\textwidth]{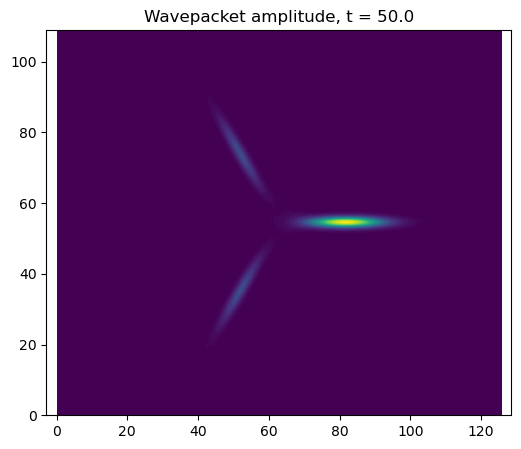} 
    \includegraphics[width=.32\textwidth]{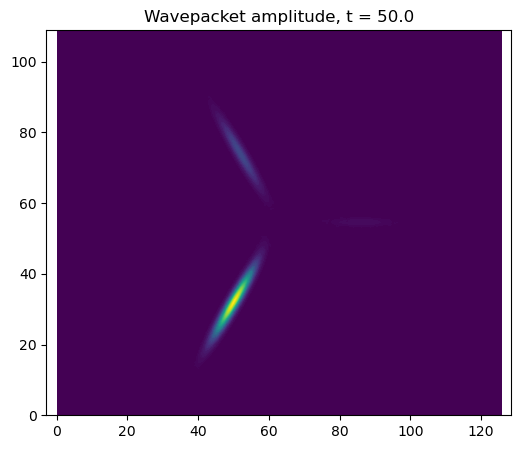}
    \caption{Scattering of a wavepacket through a hexagonal cross with broken symmetry profiles, allowing to steer a wavepacket incoming from the left towards any outgoing branch.}
    \label{fig:propagation_broken_symmetry}
\end{figure}

To illustrate the absence of stability of the scattering with respect to perturbations, we present in Figure~\ref{fig:propagation_broken_symmetry} the final state of simulations with the same initial wavepacket as in Figure~\ref{fig:propagation_clean}, but where a small, local perturbation of the mass term breaks the rotational symmetry: namely we set
\[
    m(x,y) = m_0(x,y) - \frac{1}{4}  \exp \left ( - \frac{ (x-L_x/2)^2 + (y-L_y/2)^2 }{2\sigma^2} \right ) \begin{pmatrix} \sin(\theta_m) \\ \cos(\theta_m) \end{pmatrix} \cdot \begin{pmatrix} x-L_x/2 \\ y-L_y/2 \end{pmatrix},
\]
where $m_0(x,y)$ is the symmetric mass profile as in Figure~\ref{fig:mass_switch_profile}, $\sigma = 5$ is a width parameter and $\theta_m = 2\pi/3,\ 0,\ -2\pi/3$ respectively from left to right. These simulations indicate that the scattering can be completely manipulated by such local mass perturbations, and is not a protected property of the system: indeed, the wavepacket can be steered towards any of the three outgoing branches.
\section{Conclusion}
We consider continuum partial (and pseudo) differential models of twisted bilayer graphene (tBLG) that accurately describe macroscopic transport properties in tBLG. The twisted layers generate a macroscopic triangular moir\'e pattern separating regions of AB and BA stacking as shown in Fig. \ref{fig:tblg}. 

The primary focus of this work is to analyze the asymmetric transport observed at the interface between insulating AB and BA stackings (region 1 in Fig. \ref{fig:tblg}) and at the junctions of the triangular pattern (region 2 in Fig. \ref{fig:tblg}). 

Asymmetry is quantified by observables of the system referred to as conductivities. A familiar {\em line conductivity} analyzed for such pseudo-differential equations in \cite{B-bulk-interface-2019,B-higher-dimensional-2021,bal3,QB} describes the asymmetry in region 1. We adapted the derivations in \cite{massatt2021} in the context of Floquet topological insulators to compute the line conductivity and show that it took integral values by means of a bulk-interface correspondence. 

The aforementioned asymmetry takes opposite values in each valley of the problem so that asymmetric transport is observed only when inter-valley coupling is negligible. We presented a class of simplified Hamiltonians modeling such valley coupling. We defined partial line conductivities for the two valleys and showed formally and numerically that these conductivities were no longer quantized and hence no longer stable with respect to perturbations. 

Finally, to constrain transport across junctions in region 2 of Fig. \ref{fig:tblg}, we introduced a notion of {\em junction conductivity} defined for a large class of domain walls modeling the insulating regions near the junction and for a large class of (pseudo-)differential operators including the Dirac operators modeling tBLG. Adapting techniques developed in \cite{QB}, we demonstrated that the junction conductivity was indeed quantized and obtained by a Fedosov-H\"ormander formula generalizing that for line conductivities. Numerical simulations of junction conductivities and other transport phenomena for a simple model of Dirac equation confirmed the theoretical predictions.


\section*{Acknowledgments} This research was partially supported by the National Science Foundation, Grants DMS-1908736, DMS-1819220 and EFMA-1641100.

\bibliographystyle{abbrv}
\bibliography{junction}

\appendix
\numberwithin{equation}{section}

\section{Proof of Theorem \ref{thm:tblg-bulk}}\label{sec:aa}

\tb{For the bilayer problem with coupling  $A=\eps\frac12(\sigma_1+i\sigma_2)$,  we recall that the tBLG Hamiltonian is
\[
   H_+(\xi) = \begin{pmatrix} 
   \Omega & \bar \xi & 0 & 0 \\ \xi & \Omega & \eps & 0 \\ 
   0 & \eps & -\Omega & \bar \xi \\  0 & 0 & \xi & -\Omega
   \end{pmatrix}.
\]
Consider first the case $\Omega>0$ and $\eps>0$. 
Note that for $\xi=0$, $\pm\Omega$ are eigenvalues associated to eigenvectors $e_1$ and $e_4$ while $\pm \sqrt{\Omega^2+\eps^2}$ are the last eigenvalues with eigenspaces supported on the span of $(e_2,e_3)$. 

The constraint ${\rm det}(H-E)=0$ on the eigenvalues is given by
\[
  (|\xi|^2-(E-\Omega)^2) (|\xi|^2-(E+\Omega)^2)= \eps^2 (E^2-\Omega^2).
\]
The four eigenvalues (called $E_j$ for $1\leq j\leq 4$ in the main text) come in pairs with opposite signs, are all different, and are given explicitly via
\[
  E_\pm^2 = (\Omega^2 + \frac12 \eps^2 + |\xi|^2)  \pm \sqrt{(4\Omega^2+\eps^2)|\xi|^2 + \frac 14 \eps^4} .
\]
To reflect this symmetry, we denote the eigenvalues as $-E_+=E_1$, $-E_-=E_2$, $E_-=E_3$ and $E_+=E_4$.
We observe that $E^2_-$ is large for $|\xi|$ large while $E=0$ implies $(|\xi|^2-\Omega^2)^2+\eps^2\Omega^2=0$ which is impossible, and hence the presence of a spectral gap. More precisely, we obtain that $|\xi|^2\to E_-^2$ is a convex function with a minimum equal to
\[
  E^2_{\rm min} = \frac {\Omega^2\eps^2}{4\Omega^2+\eps^2} \quad \mbox{and  attained at } \quad |\xi|^2=\frac{2\Omega^2(2\Omega^2+\eps^2)}{4\Omega^2+\eps^2}.
\]
The equations for the eigenvectors are
\[
  (\Omega-E) u_1 + \bar\xi u_2=0 \quad \xi u_1 + (\Omega-E)u_2+\eps u_3=0
\]
\[
  \eps u_2 - (\Omega+E)u_3 + \bar\xi u_4 =0,\quad \xi u_3 - (\Omega+E)u_4=0.
\]
We wish to construct eigenvectors that are smooth in the variable $\xi$. For $E=E_+$ the largest eigenvalue, we can choose the smooth eigenvector
\[
   u(\xi) = c_u \Big(\frac{\bar\xi}{E-\Omega} (\Omega+E-\frac{|\xi|^2}{\Omega+E}),  (\Omega+E-\frac{|\xi|^2}{\Omega+E}), \eps, \frac{\eps\xi}{\Omega+E} \Big)
\]
with $c_u=c_u(\xi)$ chosen such that $|u|=1$. Since $E_+\pm\Omega>0$ is bounded below by a positive constant, $\Cm\ni \xi\to c_u(\xi)$ is a smooth function bounded above and below by positive constants. 

This vector converges as $|\xi|\to\infty$ to an expression that depends on $(\eps,\Omega)$ in a non-trivial manner. Namely, in the limit $|\xi|\to\infty$:
\[
 0< E_\pm^2 = |\xi|^2 \pm \sqrt{4\Omega^2+\eps^2} |\xi| + O(1), 
\]  
and $u(\xi)$ converges in the same sense to
\[
   u_+^A(\hat\xi) = c_\infty (\hat{\bar\xi} \beta,\beta,\eps ,\eps \hat\xi)
   ,\quad \beta=2\Omega+\sqrt{4\Omega^2+\eps^2} \approx \Omega+E - \frac{|\xi|^2}{\Omega+E},\quad c_\infty= (2\eps^2+2\beta^2)^{-\frac12}.
\]
We obtain similarly that the differential 1-form $du$ converges to $du_+^A$ as $|\xi|\to\infty$. As a result, we have
\[
  W_+^4 = \frac i{2\pi} \lim_{R\to\infty} \dint_{\mathbb{S}^1_R} (u,du) = \frac i{2\pi} c_\infty^2 \dint_{\mathbb{S}^1}\big((\beta e^{-i\theta},\beta,\eps,\eps e^{i\theta}), (-i\beta e^{-i\theta},0,0,i\eps e^{i\theta})  \big) d\theta.
\]
This implies that
\[
  W_+^4= \frac i{2\pi} c_\infty^2 (-i\beta^2+i\eps^2) 2\pi = \dfrac{\beta^2-\eps^2}{2(\eps^2+\beta^2)}.
\]

Let us now look at the second positive eigenvalue $E=E_->0$. The (smooth) function $\xi\to E_-(\xi)-\Omega$ vanishes at $\xi=0$ (and in fact at two other values of $\xi$). Thus we may no longer divide by $(\Omega-E)$ in the construction of a smooth vector $u(\xi)$. However, it is easy to verify that $\xi^{-1}(E(\xi)-\Omega)$ is a smooth function. 
Eliminating $u_2$ and $u_4$ from the above system for $u$, we obtain
\[
  u = c_u\Big(\eps, \frac{E-\Omega}{\bar\xi }\eps , - (\xi- \frac{(\Omega-E)^2}{\bar \xi}) , \frac{-\xi}{\Omega+E}  (\xi- \frac{(\Omega-E)^2}{\bar \xi}) \Big).
\]
This converges as $|\xi|\to\infty$ to
\[
 u_-^A = c_\infty(\eps,\eps \hat\xi,-\beta\hat\xi,-\beta \hat\xi^2).
\]
Following the same procedure as above, we deduce that 
\[
  W_+^3 = \frac{-\eps^2-3\beta^2}{2(\eps^2+\beta^2)},\qquad 
  \qquad W_+=W_+^3+W_+^4 = -1.
\]


We now consider the second Hamiltonian and replace $A$ by $A^*$:
\[
   H_-(\xi) = \begin{pmatrix} 
   \Omega & \bar \xi & 0 &   \eps  \\ \xi & \Omega &0 &0\\ 
   0 & 0 & -\Omega & \bar \xi \\   \eps & 0 & \xi & -\Omega
   \end{pmatrix}.
\]
We observe that the equation for the eigenvalues $E$ is the same as in the previous case. For the largest positive eigenvalue, one finds
\[
  u = c_u\Big( \Omega+E-\frac{|\xi|^2}{\Omega+E}, \frac{\xi}{E-\Omega} ( \Omega+E-\frac{|\xi|^2}{\Omega+E}), \frac{\hat\xi}{\Omega+E} \eps,\eps\Big).
\]
In the limit $|\xi|\to\infty$, we have
\[
   u_+^{A^*} =c_\infty (\beta,\beta\hat\xi,\hat{\bar\xi} \eps,\eps),\quad \mbox{so that}\quad
   W_-^4= \dfrac{\eps^2-\beta^2}{2(\eps^2+\beta^2)}.
\]

For the eigenvalue $0<E=E_-$, we find
\[
 u = c_u \Big( \frac {E-\Omega}{|\xi|} \eps, \eps \frac{\bar\xi}{\Omega+E} \big( \frac{(\Omega-E)^2}{\xi}-\bar\xi\big),  \frac{(\Omega-E)^2}{\xi}-\bar\xi \Big).
\]  
In the limit $|\xi|\to\infty$, 
\[
  u_-^{A^*}  = c_\infty \big( \hat{\bar\xi} \eps,\eps , -\beta \hat{\bar \xi}^2 , -\beta  \hat{\bar \xi} \big)\quad \mbox{so that}\quad
   W_-^3= \dfrac{\eps^2+3\beta^2}{2(\eps^2+\beta^2)} \quad \mbox{ and } \quad W_-=W_-^3+W_-^4=1.
\]

\begin{figure}[htbp]
\centering
\includegraphics[width = .5\textwidth]{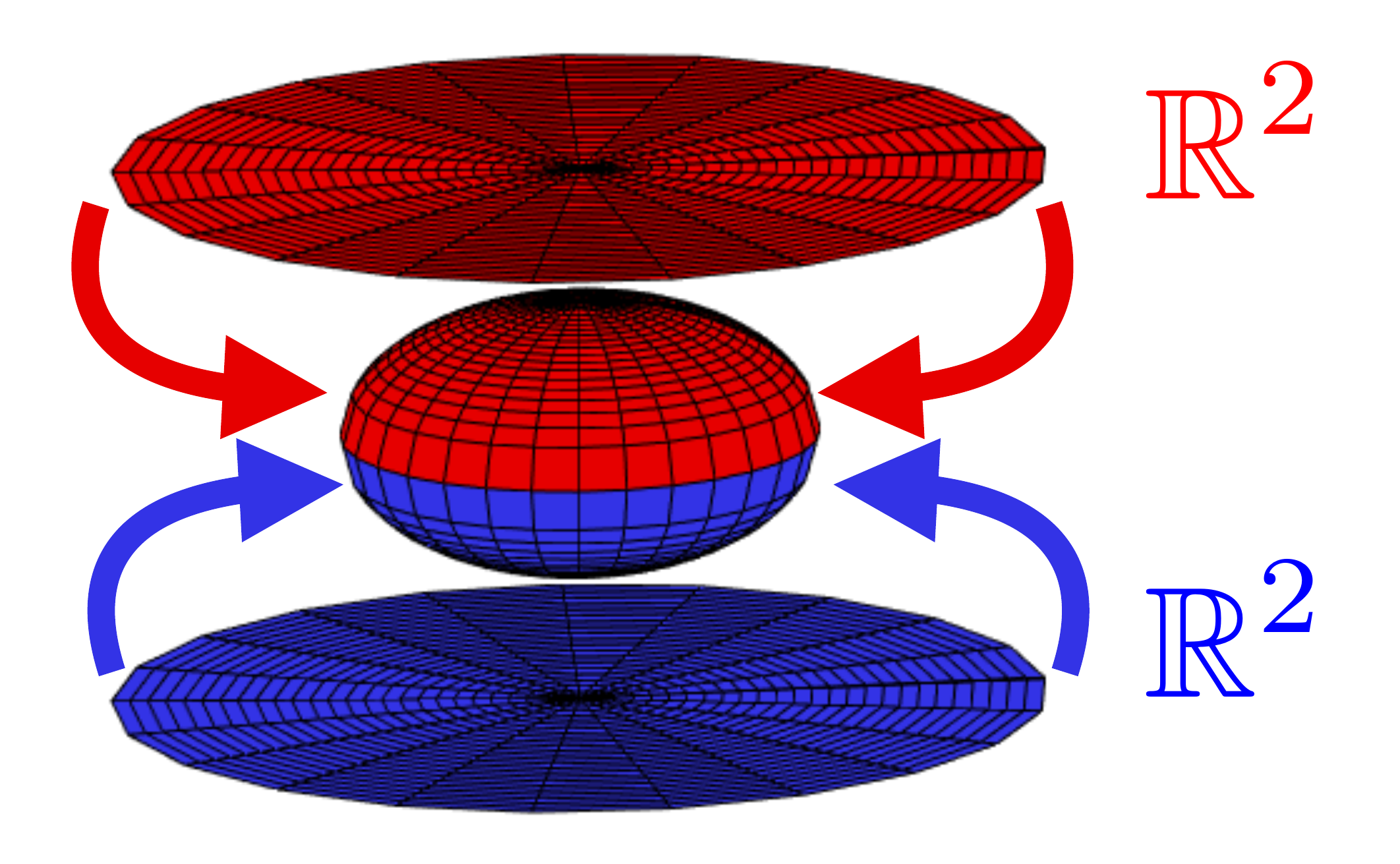}
\caption{\tb{A family of two projectors} $(\pm,\xi)\to \Pi_\pm(\xi) := |\psi_\pm(\xi)\rangle\langle\psi_\pm(\xi)|$ is mapped to the sphere by mapping each plane $\pm$ onto a hemisphere while preserving continuity along the `circle at infinity', \tb{i.e., the limits $\lim_{|\xi|\to\infty}\Pi_+(|\xi|\hat\xi)=\lim_{|\xi|\to\infty}\Pi_-(|\xi|\hat\xi)$ exist and agree for all $\hat\xi\in\mathbb{S}^1$.}}
\label{fig:glue}
\end{figure}

Upon inspection of the limiting vectors $u_\pm^A$ and $u_\pm^{A^*}$, we observe that $u_+^A$ and $u_+^{A^*}$ are vectors that are not defined up to a multiplicative phase. The same holds true for $u_-^A$ and $u_-^{A^*}$. However, we 
observe that 
\[
  v_-^{A^*} = \sigma_2\otimes I_2 u_-^{A^*}  = i \big(  \beta \hat{\bar \xi}^2 , \beta  \hat{\bar \xi},\hat{\bar\xi} \eps,\eps  \big) = i \hat{\bar\xi} \big( \beta \hat{\bar\xi},\beta,\eps,\eps\hat\xi\big) =  i \hat{\bar\xi} u_+^A.
\]
So the following gluing procedure is possible. Let $\Pi_+(\xi)=u_+^4\otimes u_+^4(\xi)$ be the projector where $u_+^4(\xi)$ converges to $u_+^A(\hat\xi)$ as $|\xi|\to\infty$. Let similarly $\Pi_-(\xi)=v_-^3\otimes v_-^3(\xi)$ with $v_-^3(\xi)=\sigma_2\otimes I_2 u_-^3$ converging to $v_-^{A^*}$ as $|\xi|\to\infty$. Since $v_-^{A^*}$ is proportional to $u_+^A$, then $\Pi_+(|\xi|\hat\xi)$ and $\Pi_-(|\xi|\hat\xi)$ have the same (well defined) limit as $|\xi|\to\infty$ for each $\hat\xi\in\mathbb{S}_1$. We can then apply the gluing procedure of \cite{bal3} to obtain that
\[
 -1 = W_+^4-W_\sq{-}^3 = \frac i{2\pi}\dint_{\Rm^2} \sq{\tr} \Pi_+ d\Pi_+ \wedge d\Pi_+ - \sq{\tr} \Pi_- d\Pi_- \wedge d\Pi_\sq{-}\in\Zm
\]
is an integer-valued Chern number integrated over a sphere (the compactification of two planes $\Rm^2$ over which the two projectors $\Pi_\pm$ are defined); see Figure \ref{fig:glue}.


We can similarly glue the $E_+$ branch of $A^*$ with $E_-$ branch of $A$ and get another invariant $W_+^3-W_-^4=1$ using that 
\[
  v_+^{A^*} =  \sigma_2\otimes I_2 u_+^{A^*} = -i (\hat{\bar\xi} \eps ,\eps , -\beta, -\beta \hat\xi) = -i \hat{\bar\xi} u_-^A.
\]
In some sense, the second Hamiltonian with coupling term $A^*$ should be replaced by the equivalent operator $\sigma_2\otimes I_2 \hat H_{A^*} \sigma_2 \otimes I_2$. No matter the basis, we observe that in order to get bulk-difference invariants, we need to glue projectors in the AB and BA sectors corresponding to different eigenvalues. 

\medskip

This concludes the proof of Theorem \ref{thm:tblg-bulk} when $\Omega>0$, $\eps\equiv\lambda>0$ and $\eta=1$. It is reasonably straightforward to observe that the sign of $\eps$ has no influence on the derivation. All we need is $\eps\not=0$ so that a gap opens. The sign of $\Omega$, however, matters since for $\Omega<0$ and $E>0$, it is $(\Omega+E)^{-1}$ that is no longer bounded.} 



\tm{
\sq{To address how $W$ changes depending on system parameters $\Omega, \eps$ and $\eta$, define}
\begin{equation}
H_\pm (\xi; \Omega, \eps, \eta) = \Omega \sigma_3\otimes I_2 + I_2 \otimes (\xi_1\sigma_1 + \eta \xi_2\sigma_2) + \frac{\eps}{2} (\sigma_1\otimes \sigma_1  \pm \sigma_2\otimes \sigma_2).
\end{equation}
We then define the following symmetry operations:
\begin{align*}
S_1 = \sigma_1 \otimes I_2, \qquad S_2\psi(\xi) = \psi(-\xi),\qquad
S_3 = \sigma_1\otimes \sigma_1.
\end{align*}
We let $W(\Omega,\eps,\eta)$ denote the bulk-difference invariant as described above. For $\Omega, \eps, \eta > 0$, by the above calculation we have
\[
W(\Omega,\eps,\eta) = -2.
\]
We have the symmetry relations
\begin{align*}
& S_1 H_\pm(\xi; \Omega,\eps,\eta)S_1 = H_\mp (\xi; -\Omega,\eps,\eta), \\
& S_2 H_\pm(\xi; \Omega,\eps,\eta)S_2 = -H_\pm(\xi; -\Omega,-\eps,\eta),\\
& S_3 H_\pm(\xi; \Omega,\eps,\eta)S_3 =  H_\pm(\xi; -\Omega, \eps,-\eta). 
\end{align*}
Since $(S_j\psi, d(S_j\psi)) = (S_j\psi, S_jd\psi) = (\psi,d\psi)$ for $j \in \{1,3\}$, and likewise $\int (\psi,d\psi) = \int (S\psi,dS_2\psi)$, the connection is invariant to the symmetry operations. However, when the sign of the Hamiltonian changes, i.e. if $S_j H S_j = - \tilde H$, the invariant of $H$ is minus the invariant of $\tilde H$ as the order of the bands is reversed, and the sum of the four connnections (including positive and negative energies) is zero. We thus conclude
\begin{equation*}
     W(\Omega,\eps,\eta) = - W(-\Omega,\eps,\eta) = W(\Omega,-\eps,\eta) = W(-\Omega,\eps,-\eta) = - W(\Omega,\eps,-\eta).
\end{equation*}
This concludes the proof of \eqref{eq:invthm1} in the theorem. 

For \eqref{eq:invthm2}, we can apply the bulk-edge correspondence in \cite{QB} since $H_e$ satisfies the required ellipticity conditions. 
That is, $H_e = \Op (\sigma)$, where $\sigma \in S^{1}_{1,0}$ with all of its singular values bounded below by $\aver{\xi,\zeta}$ whenever $\aver{y,\xi,\zeta}$ is sufficiently large. Moreover, 
the bulk Hamiltonians $H_\pm$ 
have a spectral gap at $0$, as demonstrated by
the above calculations. 
Thus $H_e$ satisfies (H1) from \cite{QB}, 
meaning that \eqref{eq:invthm2} holds.}

\section{Proofs from Section \ref{sec:junctions} and relevant notation}\label{sec:pf}
\subsection{Notation and functional setting}\label{subsec:notation}
We first briefly define the notation used in Section \ref{sec:junctions} regarding pseudo-differential operators. For a more detailed exposition, see \cite[section 2]{QB} and references therein.

Given a parameter $h \in (0,1]$ and a symbol $a(x,\xi;h) \in \mathcal{S}' (\mathbb{R}^d \times \mathbb{R}^d) \otimes \mathbb{M}_n$, we define the Weyl quantization of $a$ as the operator 
\begin{align}\label{eq:weylquant}
    \Op _h (a) \psi (x) :=
    \frac{1}{(2\pi h)^d} \int_{\mathbb{R}^{2d}}
    e^{i(x-y)\cdot \xi/h}
    a(\frac{x+y}{2}, \xi;h) \psi (y) dy d\xi,
    \qquad
    \psi \in \mathcal{S} (\mathbb{R}^d) \otimes \mathbb{C}^n.
\end{align}
Here, $\mathcal{S}$ denotes the Schwartz space \sq{(with $\mathcal{S}'$ its dual)} and $\mathbb{M}_n$ the space of Hermitian $n \times n$ matrices.
\sq{We define $\Op (a) := \Op _1 (a)$.}

A function $u: \mathbb{R}^{2d} \rightarrow [0,\infty)$ is called an order function if there exist constants $C_0 > 0$, $N_0 > 0$ such that
$u(X) \le C_0 \aver{X-Y}^{N_0} u(Y)$ for all $X,Y \in \mathbb{R}^{2d}$. Here we use the notation $\aver{X} := \sqrt{1 + |X|^2}$.
Note that if $u_1$ and $u_2$ are order functions, then so is $u_1 u_2$.

We say that $a \in S(u)$ if
for every $\alpha \in \mathbb{N}^{2d}$, there exists $C_\alpha > 0$ such that $|\partial^\alpha a (X;h)| \le C_\alpha u(X)$
for all $X \in \mathbb{R}^{2d}$ and $h \in (0,1]$.
We write $S(u^{-\infty})$ to denote the intersection over $s \in \mathbb{N}$ of $S(u^{-s})$.
For $\delta \in [0,1]$ and $k \in \mathbb{R}$, we say that $a(X;h) \in S^k_\delta (u)$ if for every $\alpha \in \mathbb{N}^{2d}$, there exists $C_\alpha > 0$ such that
\begin{align}\label{eq:symbolm}
    |\partial^\alpha a (X;h)| \le 
    C_\alpha u(X) h^{-\delta |\alpha| - k},
\end{align}
uniformly in $X \in \mathbb{R}^{2d}$ and $h \in (0,1]$.
If either $k$ or $\delta$  are omitted, they are assumed to be zero.

By \cite[Chapter 7]{sjostrand}, we know that if $a \in S(u_1)$ and $b \in S(u_2)$, then $\Op_h (c) := \Op_h (a) \Op_h(b)$ is a pseudo-differential operator, with
\begin{align*}
    c (x,\xi) =
    (a \sharp_h b) (x,\xi) :=
    \Big( e^{i\frac{h}{2} (\partial_x \cdot \partial_\zeta - \partial_y \cdot \partial_\xi)}
    a(x,\xi) b(y,\zeta) \Big)
    \Big \vert_{y=x,\zeta=\xi}
\end{align*}
and $c \in S(u_1 u_2)$.
We write $A \in \Op_h (S (u))$ to mean that $A = \Op_h (a)$ for some $a\in S(u)$.

Following \cite{Bony2013,Hormander1979,Lerner}, we define the H\"ormander class $S^{m}_{1,0}$ to be the space of symbols $a(x,\xi)$ that satisfy
\begin{align}\label{eq:symbol10}
    |(\partial^\alpha_\xi \partial^\beta_x a) (x,\xi)|
    \le 
    C_{\alpha, \beta} \aver{\xi}^{m - |\alpha|};
    \qquad
    \alpha,\beta \in \mathbb{N}^{d}.
\end{align}
\sq{By \cite{Bony2013,Hormander}, we know that if $H$ satisfies (H1), then $H : \mathcal{D} (H) \rightarrow \mathcal{H}$ is self-adjoint with $\mathcal{D} (H) = \mathcal{H}^m$.}

\sq{By \cite{Bony2013}, we know that if $a \in S^{m}_{1,0}$ is Hermitian-valued and satisfies that the smallest singular value $|a_{\min}(x,\xi)| \ge c\aver{\xi,\zeta}^m - 1$ for some $c > 0$, then
for all $\Im (z) \ne 0$,
$R_z := (z - \Op (a))^{-1}$ is a bijection of $L^2 (\mathbb{R}^d) \otimes \mathbb{C}^n$ onto $\mathcal{H}^m$, and $R_z = \Op (r_z)$ with $r_z \in S^{-m}_{1,0}$.
Defining $z =: \lambda + i\omega$, we apply \cite[Proposition 8.4]{sjostrand} to obtain
\begin{align*}
    |\partial^\alpha_x \partial^\beta_\xi r_z (x,\xi;h)|\le
    C_{Z_0,\alpha, \beta}
    \Big(
    1+
    \frac{\sqrt{h}}{|\omega|}
    \Big)^{2d+1} |\omega|^{-(|\alpha|+|\beta|)-1},
    \qquad
    z \in Z_0
\end{align*}
for any compact set $Z_0 \subset \mathbb{C}$.
To derive the above result, it helps to write
$
    (z-H)^{-1} = (i-H)^{-1} (I + (z-i)(z-H)^{-1}),
$
where $(i-H)^{-1} \in S^{-m}_{1,0}$ is independent of $z$, and $(z-H)^{-1}$ is bounded with $\norm{(z-H)^{-1}} \le \frac{C}{|\omega|}$.
The result from \cite{Bony2013} does not address how symbolic bounds of $r_z$ depend on the semi-classical parameter $h$.
That is, given $\Op (r_{z,h}) = R_{z,h} := (z-\Op_h(a))^{-1}$, we would like bounds on $r_{z,h}$ that are uniform in $h$. This is where we use \cite[Chapter 8]{sjostrand}, which tells us there exists $h_0 > 0$ such that for all $h \in (0,h_0]$, we have $\Op (r_{z,h}) = \Op_h (\tilde{r}_{z,h})$, with $\tilde{r}_{z,h} \in S(\aver{\xi}^{-m})$.
Since the symbolic bounds from \cite{Bony2013} must be continuous in $h$, this means
$\tilde{r}_{z,h} \in S(\aver{\xi}^{-m})$ uniformly in $h \in (0,1]$, as desired.}

Given $\phi \in \mathcal{C}^\infty_0 (\mathbb{R})$, there exists an almost analytic extension $\tilde{\phi} \in \mathcal{C}^\infty_0 (\mathbb{C})$ that satisfies
\begin{align} \label{aae}
    |\bar{\partial} \tilde{\phi}| \le C_N |\Im z|^N, &\quad 
    N \in \{0,1,2,\dots\}; \qquad
    \tilde{\phi} (\lambda) = \phi(\lambda), \quad \lambda \in \mathbb{R}.
\end{align}
We now recall \cite[Theorem 8.1]{sjostrand}.
If $H$ is a self-adjoint operator on a Hilbert space, then
\begin{align} \label{HSformula}
    \phi (H) =
    -\frac{1}{\pi} \int \bar{\partial} \tilde{\phi} (z) (z-H)^{-1} d^2 z,
\end{align}
where $\bar{\partial} := \frac{1}{2} \partial_{\Re z} + \frac{i}{2} \partial_{\Im z}$ and
$d^2 z$ is the Lebesgue measure on $\mathbb{C}$.
(\ref{HSformula}) is known as the Helffer-Sj\"ostrand formula.

\sq{Suppose $\fm \in L^1 (\mathbb{R}^{2d})$, and 
$|\partial^\alpha a(x,\xi;h)| \le C_\alpha \fm(x,\xi)$ for all $\alpha \in \mathbb{N}^{2d}$ and $h \in (0,1]$ (meaning that $a \in S(\fm)$).
Then by \cite[Theorem 9.4]{sjostrand}, $\Op_h (a)$ is trace-class with
$
    \norm{\Op_h (a)}_1 \le C \max_{|\alpha| \le 2d+1} C_\alpha \norm{\fm}_{L^1}
$
and
\begin{align}\label{eq:ophtrace}
    \Tr \Op_h (a) =
    \frac{1}{(2\pi h)^d}
    \int_{\mathbb{R}^{2d}} \tr a(x,\xi;h) dx d\xi,
\end{align}
where $C$ depends only on $d$ and $\tr$ is the standard matrix trace.}

\subsection{Proofs}

\sq{\begin{proof}[Proof of Proposition \ref{prop:H1tBLG}]
    We have
    \begin{align*}
        \sigma = \begin{pmatrix} \Omega I + (\xi,\zeta) \cdot \sigma^{(\eta)} & \lambda U^*(x,y) \\ \lambda U(x,y) & -\Omega I + (\xi,\zeta) \cdot \sigma^{(\eta)}\end{pmatrix},
    \end{align*}
    with $U:\mathbb{R}^2 \rightarrow \mathbb{R}$ smooth and bounded. Thus it is clear that $\sigma$ is Hermitian-valued, with $\sigma \in S^1_{1,0}$ and $|\sigma_{\min} (x, y, \xi, \zeta)| \ge c \aver{\xi, \zeta}- 1$ for some $c > 0$. 
    By assumption, $U(x,y) = A$ (resp. $U(x,y) = A^*$) whenever $f(x,y)$ (resp. $-f(x,y)$) is sufficiently large. Thus $\sigma_\pm$ from (H1) are well defined, 
    with the calculations in Appendix \ref{sec:aa} proving that $\sigma_\pm$ both have a spectral gap at $0$. This completes the proof.
\end{proof}}

\sq{\begin{proof}[Proof of Theorem \ref{thm:idx}]
    We will prove that
    \begin{align}\label{eq:idx}
        2\pi \sigma_I (\genH,P) = \Tr [U,P] U^* = \Tr [U,\projP] U^* = {\rm Index} (\projP U \projP),
    \end{align}
    where we use the shorthand $U := U (\genH)$.

    We begin by proving the first equality in \eqref{eq:idx}.
    Let $\Uminusone := U-I$, and observe that $\Uminusone \in \mathcal{C}^\infty_c (\{\Phi=1\}^\circ)$. We have
    \begin{align*}
        \Tr [U,P]U^* = \Tr [\Uminusone,P]U^* = \Tr [\Uminusone,P]\Uminusone^* + \Tr [\Uminusone,P].
    \end{align*}
    Using that $[(z-\genH)^{-1},P] = (z-\genH)^{-1} [\genH,P] (z-\genH)^{-1}$, the Helffer-Sj\"ostrand formula \eqref{HSformula} and cyclicity of the trace \cite{Kalton} imply that
    \begin{align*}
        \Tr [\Uminusone,P] \Uminusone^* = \Tr \Big(-\frac{1}{\pi} \int_\mathbb{C} \bar{\partial} \tilde{\Uminusone} (z) [\genH,P] (z-\genH)^{-2} d^2 z\Uminusone^* (\genH)\Big).
    \end{align*}
    After integrating by parts in $\partial$
    and using that
$\partial \tilde{\Uminusone} = \tilde{\Uminusone'}$ for some almost analytic extension $\tilde{\Uminusone'}$ of $\Uminusone'$,
    we see that
    \begin{align}\label{eq:traces}
        \Tr [\Uminusone,P] \Uminusone^* = \Tr [\genH,P] \Uminusone ' \Uminusone^* = 2\pi \sigma_I (\genH,P) - \Tr 2\pi i [\genH, P] \varphi 'U
        = 2\pi \sigma_I (\genH,P) -\Tr [\genH,P] \Uminusone '.
    \end{align}
    We have thus shown that
    \begin{align*}
        \Tr [U,P]U^* = 2\pi \sigma_I (\genH,P) -\Tr [\genH,P] \Uminusone '+ \Tr [\Uminusone,P].
    \end{align*}
    By the same logic used in \eqref{eq:traces}, we obtain that
    $\Tr [\genH,P] \Uminusone ' = \Tr [\genH,P] \Uminusone ' \Phi= \Tr [\Uminusone,P] \Phi$, hence
    \begin{align*}
        \Tr [U,P]U^* = 2\pi \sigma_I (\genH,P) +\Tr [\Uminusone,P](1-\Phi).
    \end{align*}
    Again using cyclicity of the trace, this means
    \begin{align*}
        \Tr [U,P]U^* = 2\pi \sigma_I (\genH,P) +\Tr \Psi_1 [\Uminusone,P]\Psi_2, 
    \end{align*}
    for some $\Psi_j = \psi_j (\genH)$, where $\psi_j \in \mathcal{C}^\infty (\mathbb{R})$ vanishes on $\supp (\Uminusone)$. The first equality in \eqref{eq:idx} follows.

    We now prove the second equality of \eqref{eq:idx}.
    Let $\chi (x,y) = \chi_1 (g(x,y))$ where $\chi_1 \in \mathcal{C}_c^\infty$ such that $\chi (\projP-P)= \projP-P$. 
Then 
\begin{align*}
    [U,\projP-P] U^* = [\Uminusone,\projP-P] U^* = \Uminusone\chi (\projP-P) U^* - (\projP-P) \chi \Uminusone U^*
\end{align*}
is trace class, as $\Uminusone \chi$ and $\chi \Uminusone$ are both trace-class by the $\Psi$DO calculus with $P-P_1$ and $U^*$ bounded.
Since we know that $[U,P] U^*$ is trace-class, this proves that $$[U,\projP]U^* = [U,\projP-P] U^* + [U,P] U^*$$ is trace-class.
Now,
\begin{align*}
    \Tr [U,\projP-P] U^* = \Tr \Uminusone \chi (\projP-P) U^* - \Tr (\projP-P) \chi \Uminusone U^*,
\end{align*}
with 
\begin{align*}
    \Tr \Uminusone\chi (\projP-P) U^* = \Tr (\projP-P) U^* \Uminusone\chi
\end{align*}
and
\begin{align*}
    \Tr (\projP-P) \chi \Uminusone U^* = \Tr \chi (\projP-P) \chi \Uminusone U^* =\Tr (\projP-P) \chi \Uminusone U^* \chi =\Tr (\projP-P) \Uminusone U^* \chi.
\end{align*}
Since $[\Uminusone,U^*] = 0$, we have proven that $\Tr [U,\projP-P] U^* = 0$, which verifies the second equality of \eqref{eq:idx}. 

Finally, the last equality of \eqref{eq:idx} follows immediately from \cite[Proposition 2.4]{avron1994charge} and the fact that $[U,\projP]U^*$ is trace-class.
\end{proof}

\begin{proof}[Proof of Theorem \ref{thm:invvarphi}]
    For $\mu \in [0,1]$, define $\varphi_\mu := \varphi + \mu (\varphi_1 - \varphi)$.
    Theorem \ref{thm:idx} implies that $\Tr i [\genH, P] \varphi'_\mu (\genH) = {\rm Index} (\projP U_\mu \projP)$ for all $\mu \in [0,1]$, where $U_\mu := e^{i2\pi \varphi_\mu (\genH)}$. With $\Uminusone_\mu := U_\mu - I$, 
    the Helffer-Sj\"ostrand formula \eqref{HSformula} implies
    \begin{align*}
        U_{\mu_2} - U_{\mu_1} = \Uminusone_{\mu_2} - \Uminusone_{\mu_1} = -\frac{1}{\pi} \int_\mathbb{C} \bar{\partial} (\tilde{\Uminusone}_{\mu_2} (z) -\tilde{\Uminusone}_{\mu_1} (z)) (z-\genH)^{-1} d^2 z.
    \end{align*}
    Since $|\bar{\partial} (\tilde{\Uminusone}_{\mu_2} (z) - \tilde{\Uminusone}_{\mu_1} (z))| \le C |\mu_2 - \mu_1|$ uniformly in $\mu_1, \mu_2 \in [0,1]$ and $z \in \mathbb{C}$,
    it follows that $\norm{U_{\mu_2} - U_{\mu_1}} \rightarrow 0$ as $\mu_2-\mu_1 \rightarrow 0$.
    Therefore, by \cite[Theorem 19.1.5]{Hormander1979}, ${\rm Index} (\projP U_\mu \projP)$ is independent of $\mu \in [0,1]$, and the result is complete.
\end{proof}

We 
now state the following regularity result.}
\begin{proposition} \label{trclass}
Suppose $H = \Op (\sigma)$ satisfies (H1),
define
\begin{equation}\label{eq:Hh}
H_h := \Op _h (\sigma (x,y,\xi,\zeta))
\mbox{ for }\  0 < h \le 1,
\end{equation}
and let 
$\Phi \in \mathcal{C}^\infty_c (E_1, E_2)$.
Then for all $P \in \fs (0,1;g(x,y))$, we have
\begin{align*}
\Phi (H_h) \in \Op_h(S(\langle f(x,y),\xi,\zeta \rangle^{-\infty}))
\qquad \text{and} \qquad
[H_h,P]\Phi (H_h) \in \Op_h(S(\langle x,y,\xi,\zeta \rangle^{-\infty})) .
\end{align*}
\end{proposition}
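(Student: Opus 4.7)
My plan is to invoke the Helffer--Sj\"ostrand formula \eqref{HSformula} to realize $\Phi(H_h)$ as a pseudo-differential operator and then read off the required decay from the symbolic calculus. By the discussion of Section \ref{subsec:notation}, for $z \in \mathbb{C} \setminus \mathbb{R}$ we have $(z - H_h)^{-1} = \Op_h(r_{z,h})$ with $r_{z,h} \in S^{-m}_{1,0}$ uniformly in $h \in (0,1]$, with symbol bounds degenerating only polynomially in $|\Im z|^{-1}$; the rapid vanishing \eqref{aae} of $\bar{\partial}\tilde{\Phi}$ near the real axis absorbs this degeneration. A Moyal-type expansion yields $r_{z,h} \sim \sum_{k \ge 0} h^k r_{z,h}^{(k)}$ modulo an $O(h^\infty)$ remainder, where $r_{z,h}^{(0)} = (z-\sigma)^{-1}$ and each $r_{z,h}^{(k)}$ is a polynomial combination of $(z-\sigma)^{-1}$ and derivatives of $\sigma$. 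Integrating term by term against $\bar{\partial}\tilde{\Phi}$ then produces a formal symbol $\phi_h \sim \sum_k h^k \phi_h^{(k)}$ for $\Phi(H_h)$.

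The first claim splits into two independent decay statements that combine via the elementary bound $\langle f, \xi, \zeta\rangle \le \sqrt{2}\max(\langle f\rangle, \langle\xi, \zeta\rangle)$. Decay in $\langle \xi, \zeta\rangle^{-\infty}$ is a consequence of ellipticity: by (H1) the eigenvalues of $\sigma(x,y,\xi,\zeta)$ satisfy $|\sigma_{\min}| \ge c\langle\xi,\zeta\rangle^m - 1$, so they leave $\supp \Phi \Subset (E_1, E_2)$ once $\langle\xi,\zeta\rangle$ exceeds a fixed threshold; consequently every $\phi_h^{(k)}$, being a matrix polynomial in $\Phi^{(j)}(\sigma)$ and derivatives of $\sigma$, is in fact compactly supported in $(\xi,\zeta)$ uniformly in $(x,y)$. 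The decay in $\langle f(x,y)\rangle^{-\infty}$ is the substantive point. In the bulk region $\{|f|>c_0\}$ for $c_0$ sufficiently large, $\sigma \equiv \sigma_{\pm}(\xi,\zeta)$ is independent of $(x,y)$, so every Poisson bracket and higher-order Moyal contribution in $r_{z,h}^{(k)}$ vanishes for $k \ge 1$, while $\phi_h^{(0)} = \Phi(\sigma_{\pm}) \equiv 0$ by virtue of the spectral gap of $\sigma_{\pm}$ containing $\supp \Phi$. To upgrade this formal vanishing into genuine $\langle f\rangle^{-\infty}$ decay of the true symbol of $\Phi(H_h)$, I would introduce the bulk reference Hamiltonians $H_h^{\pm} := \Op_h(\sigma_{\pm})$, which are gapped on $(E_1, E_2)$, and exploit the resolvent identity $(z-H_h)^{-1} - (z-H_h^{\pm})^{-1} = (z-H_h)^{-1} \Op_h(\sigma - \sigma_{\pm})(z-H_h^{\pm})^{-1}$ together with the fact that $\sigma - \sigma_{\pm}$ vanishes on $\{\pm f > c_0\}$. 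Since $(z-H_h^{\pm})^{-1}$ extends analytically across $\supp \Phi \cap \mathbb{R}$, a Combes--Thomas type argument (deforming the contour of integration in $z$ away from the real axis and controlling the smoothed propagator) then delivers the required $\langle f\rangle^{-N}$ decay in each bulk region.

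For the commutator, $[H_h, P] = \Op_h(c_h)$ with $c_h = \tfrac{h}{i}\{\sigma, P\} + O(h^3)$ to leading order in the Moyal expansion. Since $P = \chi_p(g(x, y))$, the full symbol $c_h$ is supported where $\nabla_{x,y} P \ne 0$, i.e. where $g(x,y)$ lies in a fixed compact interval; thus $c_h \in S(\langle g(x,y)\rangle^{-\infty})$. Composing via $\sharp_h$ with the symbol of $\Phi(H_h)$ just shown to lie in $S(\langle f,\xi,\zeta\rangle^{-\infty})$ produces a symbol in $S(\langle f, g, \xi, \zeta\rangle^{-\infty})$. Invoking the growth condition \eqref{eq:fg} to convert $\langle f, g\rangle$ to $\langle x, y\rangle$ then delivers $[H_h, P]\Phi(H_h) \in \Op_h(S(\langle x, y, \xi, \zeta\rangle^{-\infty}))$. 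The main obstacle is the Combes--Thomas-style step that promotes the formal vanishing of the asymptotic symbol in the bulk to genuine pointwise $\langle f\rangle^{-\infty}$ decay of the true symbol; the remaining steps are routine bookkeeping within the semiclassical calculus of Section \ref{subsec:notation}.
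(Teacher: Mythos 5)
Your proposal is correct in substance and its core mechanism coincides with the paper's: both arguments compare $H_h$ with the bulk operators $H_{\pm,h}=\Op_h(\sigma_\pm)$, whose spectral gap forces $\Phi(H_{\pm,h})=0$, represent the difference $\Phi(H_h)-\Phi(H_{\pm,h})$ through the Helffer--Sj\"ostrand formula and the resolvent identity, exploit that $\sigma-\sigma_\pm$ vanishes where $\pm f$ is large, and patch the two bulk regions with a switch function in $f$. The differences lie in the auxiliary steps. For the $\langle\xi,\zeta\rangle^{-\infty}$ decay the paper factors $\Phi(H_h)=(i-H_h)^{-p}\Phi_p(H_h)$ for arbitrary $p$ and uses $(i-H_h)^{-1}\in\Op(S^{-m}_{1,0})$; this sidesteps the issue that in your Moyal-expansion argument an $O(h^\infty)$ remainder alone does not give membership in $S(\langle\xi,\zeta\rangle^{-\infty})$ (an $h$-uniform class), so you would need remainder estimates that carry the frequency weights. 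For the spatial decay the paper concludes directly from the composition calculus applied to $(z-H_h)^{-1}(H_h-H_{+,h})(z-H_{+,h})^{-1}$, with the $|\Im z|^{-1}$ degeneration absorbed by the rapid vanishing of $\bar\partial\tilde\Phi$; your Combes--Thomas/contour-deformation step is heavier than needed and cannot be a literal deformation, since $(z-H_h)^{-1}$ (unlike the gapped bulk resolvent) is not analytic across $\supp\Phi\cap\mathbb{R}$, so in the end you fall back on the same absorption mechanism. Finally, your claim that the full symbol of $[H_h,P]$ is supported where $\nabla P\neq 0$ is literally false --- composed symbols are only rapidly decaying away from that region; the clean route to $[H_h,P]\in\Op_h(S(\langle g(x,y)\rangle^{-\infty}\langle\xi,\zeta\rangle^{m}))$ is the paper's identity $[H_h,P]=(1-P)H_hP-PH_h(1-P)$, after which composition with $\Phi(H_h)$ and the growth condition \eqref{eq:fg} complete the proof exactly as you indicate.
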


\sq{By the paragraph above \eqref{eq:ophtrace}, Proposition \ref{trclass} immediately implies that $\sigma_I (H_h,P)$ is well defined. 
By Theorem \ref{thm:idx}, Proposition \ref{trclass} also implies that $2\pi \sigma_I (H) = {\rm Index} (\projP U(H) \projP)$ for any $H$ satisfying (H1).

\begin{proof}
    We emulate the proofs of \cite[Lemmas 3.3 and 3.4]{quinn2022asymmetric}.
    
    For any $p>0$, we can write $\Phi (H_h) = (i-H_h)^{-p} \Phi_p (H_h)$ with $\Phi_p \in \mathcal{C}^\infty_c (E_1, E_2)$.
    Since $H$ satisfies (H1), 
    we know that $(i-H_h)^{-1} \in S^{-m}_{1,0}$ (see Appendix \ref{subsec:notation} for more details). It follows from the composition calculus that
$\Phi (H_h) \in \Op_h (S(\aver{\xi,\zeta}^{-\infty}))$.
By assumption on $\sigma_\pm$, we know that
$H_{\pm,h} := \Op_h (\sigma_\pm)$ has a spectral gap in $(E_1, E_2)$ for all $h \in (0,1]$, 
hence $\Phi (H_{\pm,h}) = 0$.
We can thus write $\Phi (H_h) = \phi(x,y) (\Phi (H_h) - \Phi (H_{+,h})) + (1-\phi (x,y)) (\Phi (H_h) - \Phi (H_{-,h}))$, for some $\phi \in \fs (0,1;f)$.
The Helffer-Sj\"ostrand formula \eqref{HSformula} implies that
\begin{align*}
    \Phi (H_h) - \Phi (H_{+,h}) = \frac{1}{\pi}\int_{\mathbb{C}} \bar{\partial} \tilde{\Phi} (z) (z-H_h)^{-1} (H_h - H_{+,h}) (z-H_{+,h})^{-1} d^2 z.
\end{align*}
Since $\sigma - \sigma_+$ vanishes whenever $f(x,y)$ is sufficiently large, it follows that $\Phi (H_h) - \Phi (H_{+,h}) \in \Op_h (S (\aver{f_+(x,y)}^{-\infty}))$, where $f_+ := \max \{f, 0\}$. Since $\phi$ vanishes whenever $-f(x,y)$ is sufficiently large, we conclude that $\phi(x,y) (\Phi (H_h) - \Phi (H_{+,h})) \in \Op_h (S (\aver{f(x,y)}^{-\infty}))$. The same reasoning shows that $(1-\phi (x,y)) (\Phi (H_h) - \Phi (H_{-,h})) \in \Op_h (S (\aver{f(x,y)}^{-\infty}))$.
We have thus shown that
$$\Phi (H_h) \in \Op_h (S(\aver{\xi,\zeta}^{-\infty}) \cap S (\aver{f(x,y)}^{-\infty})) = \Op_h(S(\langle f(x,y),\xi,\zeta \rangle^{-\infty})),$$ which proves the first result. 

For the second claim, observe that $[H_h, P] = (1-P)H_h P - P H_h (1-P) \in \Op_h (S (\aver{g(x,y)}^{-\infty} \aver{\xi,\zeta}^m))$.
The composition calculus and 
\eqref{eq:fg} then imply that
$[H_h,P]\Phi (H_h) \in \Op_h(S(\langle x,y,\xi,\zeta \rangle^{-\infty})),$ and the proof is complete.
\end{proof}

\begin{proof}[Proof of Theorem \ref{thm:h}]
    For $h \in (0,1]$, define $U_h := e^{i2\pi \varphi (H_h)}$. Proposition \ref{trclass} and Theorem \ref{thm:idx} imply that $2\pi\sigma_I (H_h, P) = {\rm Index} (\projP U_h \projP)$ for all $h \in (0,1]$. 
    It thus suffices to show that for any fixed $h \in (0,1]$, ${\rm Index} (\projP U_{h'} \projP)$ is constant over $h'$ in an open neighborhood of $h$. Using the Helffer-Sj\"ostrand formula \eqref{HSformula}, we write
    \begin{align*}
        U_{h'} - U_h = \frac{1}{\pi} \int_\mathbb{R} \bar{\partial} \tilde{\Uminusone} (z) (z-H_{h'})^{-1} (H_{h'}-H_h) (z-H_h)^{-1} d^2 z.
    \end{align*}
    Since $\norm{(z-H_{h'})^{-1}} \le |\Im z|^{-1}$ and $H_{h'}-H_h \in \Op (|h'-h| S (\aver{\xi,\zeta}^m))$ with $(z-H_h)^{-1} \in \Op ( S (\aver{\xi, \zeta}^{-m}))$ for all $\Im z \ne 0$ (with bounds growing at most algebraically in $|\Im z|^{-1}$), the rapid decay of $\bar{\partial} \tilde{\Uminusone} (z)$ near the real axis implies that $\norm{U_{h'} - U_h} \le C |h'-h|$. The result then follows from \cite[Theorem 19.1.5]{Hormander1979}.
\end{proof}

\begin{proof}[Proof of Theorem \ref{thm:bdd}]
We will apply Theorem \ref{thm:idx}. 
First, observe that 
$\Hmu = \Op (\sigmamu)$ with $\sigmamu \in \Op (S^{m}_{1,0})$. Moreover, 
(H1) and the assumption $\pert \in \Op (S^m_{1,0})$ implies that $|\sigmamu_{\min} (x,y,\xi,\zeta)| \ge c \aver{\xi,\zeta}^m-1$ whenever $\mu>0$ is sufficiently small,
where $\sigmamu_{\min}$ denotes the smallest magnitude eigenvalue of $\sigmamu$.
Since $\sigmamu$ is Hermitian-valued, it follows that whenever $\mu$ is sufficiently small, 
$\Hmu$ is self-adjoint with the same domain of definition $\mathcal{D} (\Hmu) = \mathcal{D} (H) = \mathcal{H}^m$, and 
$(z-\Hmu)^{-1} \in \Op (S^{-m}_{1,0})$ for all $\Im z \ne 0$. We refer to \cite{Bony2013} for more details, particularly Corollary 2 and the paragraph following Theorem 3.

Next, we verify that
$\Phi (\Hmu) \in \Op (S (\aver{f(x,y),\xi,\zeta}^{-\infty}))$ for all $\mu > 0$ sufficiently small.
Let $\Phi_0 \in \mathcal{C}^\infty_c (E_1, E_2)$ such that $\Phi \in \mathcal{C}^\infty_c (\{\Phi_0 = 1\}^\circ)$. With
$\Theta := \Phi (\Hmu) - \Phi (H)$ and $\Theta_0 := \Phi_0 (\Hmu) - \Phi_0 (H)$, it follows that
\begin{align*}
    \Phi (\Hmu) = \Phi (H) + \Theta, \qquad \Theta = \Theta \Theta_0 + \Phi (H) \Theta_0 + \Theta \Phi_0 (H).
\end{align*}
By assumption on $\pert$, $1-\Theta_0$ is invertible whenever $\mu>0$ is sufficiently small, with $(1-\Theta_0)^{-1} \in \Op (S(1))$.
By Proposition \ref{trclass}, $\Theta = (\Phi (H) \Theta_0 + \Theta \Phi_0 (H)) (1-\Theta_0)^{-1} \in \Op (S (\aver{f(x,y),\xi,\zeta}^{-\infty}))$ and thus $\Phi (\Hmu) \in \Op (S (\aver{f(x,y),\xi,\zeta}^{-\infty}))$ for all $\mu > 0$ sufficiently small, as desired.

We have shown that $\Hmu$ satisfies the assumptions of Theorem \ref{thm:idx}, hence $2\pi\sigma_I (\Hmu, P) = {\rm Index} (\projP \Umu \projP)$ whenever $\mu$ is sufficiently small, with $\Umu := e^{i2\pi \varphi (\Hmu)}$.
As in the 
proof of Theorem \ref{thm:h} above, we use the Helffer-Sj\"ostrand formula and the property 
\begin{align*}
    (z-\Hmu)^{-1} \in \Op (S (\aver{\xi,\zeta}^{-m})), \qquad \Im z \ne 0
\end{align*}
to verify that $\Umu$ is continuous in $\mu$ (with respect to operator norm). 
The result follows.
\end{proof}

\begin{proof}[Proof of Theorem \ref{thm:compact}]
    We again must show that $\Hmu$ satisfies the assumptions of Theorem \ref{thm:idx}, this time for all $\mu \in [0,1]$.
    The hypothesis (H1) and our growth constraint on $W$ imply that $\Hmu = \Op (\sigmamu)$ with $\sigmamu$ Hermitian-valued and satisfying $\sigmamu \in S^m_{1,0}$ and $|\sigmamu_{\min} (x,y,\xi,\zeta)| \ge c \aver{\xi,\zeta}^m - 1$. 
    As in the proof of Theorem \ref{thm:bdd}, this means $\Hmu$ is self-adjoint with domain of definition $\mathcal{D} (\Hmu) = \mathcal{H}^m$, and $(z-\Hmu)^{-1} \in \Op (S^{-m}_{1,0})$ whenever $\Im z \ne 0$ \cite{Bony2013}.
    

To obtain decay of the symbol of $\Phi (\Hmu)$, we again use the identity
\begin{align}\label{eq:theta}
    \Theta (1-\Theta_0) = \Phi (H) \Theta_0 + \Theta \Phi_0 (H),
\end{align}
with $\Theta$ and $\Theta_0$ defined in the proof of Theorem \ref{thm:bdd}. Proposition \ref{trclass} and \eqref{eq:theta} imply that $\Theta (1-\Theta_0) \in \Op (S (\aver{f(x,y),\xi,\zeta}^{-\infty}))$. By assumption on $\pert$, an application of the Helffer-Sj\"ostrand formula reveals that $\Theta_0 \in \Op (S (\aver{x,y,\xi,\zeta}^{-\delta}))$. Thus there exist $\Theta_{00} \in \mathcal{C}^\infty_c (\mathbb{R}^4)$ and $\Theta_{01} \in S(1)$ as small as necessary such that $\Theta_0 = \Theta_{00} + \Theta_{01}$. It follows that
\begin{align*}
    \Theta (1-\Theta_{01}) = \Theta (1-\Theta_{0}) + \Theta \Theta_{00}\in \Op (S (\aver{f(x,y),\xi,\zeta}^{-\infty})),
\end{align*}
hence
\begin{align*}
    \Theta = (\Theta (1-\Theta_{0}) + \Theta \Theta_{00})(1-\Theta_{01})^{-1}\in \Op (S (\aver{f(x,y),\xi,\zeta}^{-\infty})).
\end{align*}
We have shown that $\Theta$, and therefore $\Phi (\Hmu)$, has symbol with decay required by Theorem \ref{thm:idx}.

As in the proof of Theorem \ref{thm:bdd}, we can now conclude that $2\pi\sigma_I (\Hmu, P) = {\rm Index} (\projP \Umu \projP)$ for all $\mu \in [0,1]$,
and the result follows.
\end{proof}

We now prove Theorem \ref{thm:bic}, which 
states a bulk-interface correspondence. 
The strategy is to use an asymptotic expansion in the semiclassical parameter $h$ and apply Theorem \ref{thm:h} to eliminate terms that are not $O(1)$.
A similar technique is used in \cite{B-higher-dimensional-2021,bal3,QB}.


\begin{proof}[Proof of Theorem \ref{thm:bic}]
By Theorem \ref{thm:invvarphi}, $\sigma_I (H,P)$ is independent of $\varphi \in \fs (0,1;E_1,E_2)$, thus we can take $\varphi ' \in \mathcal{C}^\infty_c (\anot, \alpha)$ for some $\anot > E_1$. We will use the shorthand $\sigma_I := \sigma_I (H,P)$.

Let $\Op_h \nu_h := \varphi ' (H_h)$.
Using \cite[equation (7.19) and the preceding paragraph]{sjostrand} and \cite[Theorems 3.14 \& 4.16 and their proofs]{Zworski} as in \cite{QB}, we verify that
\begin{align*}
        \nu_h + \frac{1}{\pi} \int_\mathbb{C} \bar{\partial} \tilde{\varphi '} (z) \tilde{q}_{z,h} d^2 z \in S^{-2} (\aver{f(x,y),\xi,\zeta}^{-\infty}),
\end{align*}
where
\begin{align*}
    \tilde{q}_{z,h} = \sigma_z^{-1} +\frac{ih}{2} \{\sigma_z^{-1}, \sigma_z\}\sigma_z^{-1}, \qquad \{a,b\} := \partial_\xi a \partial_x b + \partial_\zeta a \partial_y b - \partial_x a \partial_\xi b - \partial_y a \partial_\zeta b
\end{align*}
and $\sigma_z:= z-\sigma$.
With $\Op _h (\kappa_{h}) := [H_h,P]$, we have that
\begin{align*}
    \kappa_{h} + ih \kone - \frac{h^2}{4} \ktwo \in S^{-3} (\aver{g(x,y)}^{-\infty} \aver{\xi, \zeta}^m), 
\end{align*}
where
\begin{align}\label{eq:k12}
\kone := \partial_\xi \sigma \partial_x P
    +\partial_\zeta \sigma \partial_y P,
    \qquad \ktwo := \partial_{\xi \xi} \sigma \partial_{xx} P
    + 2\partial_{\xi \zeta} \sigma \partial_{xy} P
    + \partial_{\zeta \zeta} \sigma \partial_{yy} P.
\end{align}
Since $\nu_h \in S (\aver{f(x,y),\xi,\zeta}^{-\infty})$ and $\kappa_h \in S^{-1} (\aver{g(x,y)}^{-\infty} \aver{\xi,\zeta}^m)$,
the composition calculus implies that
\begin{align*}
    \kappa_h \sharp_h \nu_h - \kappa_h \nu_h + \frac{ih}{2} \{\kappa_h, \nu_h\} \in S^{-3} (\aver{x,y,\xi,\zeta}^{-\infty}),
\end{align*}
with $S^{-3}$ (rather than $S^{-2}$) above because $\kappa_h$ is $O(h)$ in $S(\aver{g(x,y)}^{-\infty} \aver{\xi,\zeta}^m)$.
Therefore,
\begin{align*}
    \sigma_I &= \frac{i}{(2\pi h)^2} \tr \int_{\mathbb{R}^4}\kappa_h \sharp_h \nu_h d R_4 
    = \frac{i}{(2\pi h)^2} \tr \int_{\mathbb{R}^4}\Big(\kappa_h \nu_h - \frac{ih}{2} \{\kappa_h, \nu_h\}\Big)dR_4+o(1)
    \end{align*}
    as $h \rightarrow 0$, with $dR_4:= d x d y d \xi d \zeta$.
    Since
    \begin{align*}
        \kappa_h \nu_h = (ih \kone -\frac{h^2}{4} \ktwo)\frac{1}{\pi}\int_{\mathbb{C}}\bar{\partial} \tilde{\varphi '} (z) \Big(\sigma_z^{-1} +\frac{ih}{2} \{\sigma_z^{-1}, \sigma_z\}\sigma_z^{-1}\Big) d^2z + h^{3}a_h, \qquad a_h \in S (\aver{x,y,\xi,\zeta}^{-\infty})
    \end{align*}
    and
    \begin{align*}
        \{\kappa_h, \nu_h\} = \Big\{ ih \kone, \frac{1}{\pi}\int_{\mathbb{C}}\bar{\partial} \tilde{\varphi '} (z) \sigma_z^{-1} d^2 z\Big\} + h^2 b_h, \qquad b_h \in S (\aver{x,y,\xi,\zeta}^{-\infty}),
    \end{align*}
    it follows that
    \begin{align*}
    \sigma_I
    &=
    \frac{i}{(2\pi h)^2}\frac{1}{\pi} \tr \int_{\mathbb{R}^4}\int_\mathbb{C} \bar{\partial} \tilde{\varphi '} (z) \Big( ih \kone \sigma_z^{-1}
    - \frac{h^2}{2}\kone \{\sigma_z^{-1}, \sigma_z\}\sigma_z^{-1}\\
    &\hspace{3cm}
    - \frac{h^2}{4} \ktwo \sigma_z^{-1} +\frac{h^2}{2} \{ \kone, \sigma_z^{-1} \}\Big) d^2 z d R_4 + o(1)
\end{align*}
as $h \rightarrow 0$. 
Since $\sigma_I$ is independent of $h$, it follows that the $O(h^{-1})$ term above vanishes, and thus
\begin{align}\label{eq:sigmaDiv}
    \sigma_I = \frac{i}{(2\pi)^3}\tr \int_{\mathbb{R}^4} \int_{\mathbb{C}}
    \bar{\partial} \tilde{\varphi '} (z) \Big( -\kone \sigma_z^{-1} \{\sigma_z, \sigma_z^{-1}\}-\frac{1}{2}k_2 \sigma_z^{-1}
    +\{ \kone, \sigma_z^{-1} \}\Big) d^2 z d R_4.
\end{align}
Observe that whenever $\aver{f(x,y), \xi, \zeta}$ is sufficiently large,
$z \mapsto \sigma_z^{-1}$ is holomorphic and thus the above integral over $z$ vanishes (this is verified via an integration by parts in $\bar{\partial}$).
Using that $k_1$ and $k_2$ vanish whenever $\aver{g (x,y)}$ is sufficiently large,
we can replace the above integration limit $\mathbb{R}^4$ by 
a sufficiently large rectangle
$B := B_{xy} \times B_{\xi\zeta}\subset \mathbb{R}^2 \times \mathbb{R}^2$.
We require that $B$ contain all points $(x_0(t),y_0(t),\xi,\zeta)$ for which $\sigma (x_0(t),y_0(t),\xi,\zeta)$ has an eigenvalue of $\alpha$. Moreover, assume that $B_{xy}$ contains $(x_0 (t_0), y_0 (t_0))$ and $(x_0 (-t_0), y_0 (-t_0))$, where we recall the definitions of $t_0$ above Theorem \ref{thm:bic} and $(x_0, y_0)$ below \eqref{eq:fg}.

At this point, $\{ \kone, \sigma_z^{-1} \}$ can be written in divergence form and the corresponding term converted (via integration by parts in $(x,y,\xi,\zeta)$) to an integral over the surface $\partial B$. 
An integration by parts in $\bar{\partial}$ then reveals that the contribution of this term vanishes. Thus we are left with
\begin{align}\label{eq:sigma12}
    \sigma_I = -\frac{i}{(2\pi)^3}\tr \int_{B} \int_{\mathbb{C}}
    \bar{\partial} \tilde{\varphi '} (z) \Big( \kone \sigma_z^{-1} \{\sigma_z, \sigma_z^{-1}\} +\frac{1}{2} \ktwo \sigma_z^{-1} \Big)
    d^2 z d R_4.
\end{align}
We now simplify the first term above.
Integrating by parts in $\bar{\partial}$ with $z =: \lambda+i\omega$, we see that
\begin{align*}
    \sigma_{I,1} := \frac{-i}{(2\pi)^3}\tr \int_{B} \int_{\mathbb{C}}
    \bar{\partial} \tilde{\varphi '} (z) \kone \sigma_z^{-1} \{\sigma_z, \sigma_z^{-1}\} d^2 z dR_4 = \frac{-1}{2 (2\pi)^3} \tr \int_B \int_{\anot}^{\alpha} \varphi ' (\lambda) \kone \sigma_z^{-1} \{\sigma_z, \sigma_z^{-1}\} \Big \vert ^{\omega = 0^+}_{\omega=0^-} d\lambda dR_4.
\end{align*}
Let $\mathcal{L}:= \{x_0 (t), y_0 (t) : t \in \mathbb{R}\}$ be the range of $(x_0, y_0)$. 
For simplicity, assume that $\mathcal{L} \subset g^{-1} (0)$.
By cyclicity of the trace and recalling the definition \eqref{eq:k12} of $k_1$, it follows that 
\begin{align*}
    \tr \kone \{\sigma_z^{-1}, \sigma_z\} \sigma_z^{-1} =
    -\tr (\partial_x P \partial_\xi \sigma_z \{\sigma_z^{-1}, \sigma_z\}_{\zeta,y} + \partial_y P \partial_\zeta \sigma_z\{\sigma_z^{-1}, \sigma_z\}_{\xi,x}) \sigma_z^{-1},
\end{align*}
where we have defined $\{a,b\}_{\alpha, \beta}:=\partial_\alpha a \partial_\beta b-\partial_\beta a \partial_\alpha b$
and used the fact that $(\partial_\xi \sigma, \partial_\zeta \sigma)=-(\partial_\xi \sigma_z, \partial_\zeta \sigma_z)$.
Integrating by parts in $x$ (first term) and $y$ (second term), we obtain
\begin{align*}
    \sigma_{I,1}&=-\frac{1}{2 (2\pi)^3} \tr \int_B \int_{\anot}^{\alpha} \varphi ' (\lambda) P \Big (\partial_x (\partial_\xi \sigma_z \{\sigma_z^{-1}, \sigma_z\}_{\zeta,y}\sigma_z^{-1}) + \partial_y (\partial_\zeta \sigma_z \{\sigma_z^{-1}, \sigma_z\}_{\xi,x} \sigma_z^{-1})\Big) \Big \vert ^{\omega = 0^+}_{\omega=0^-} d\lambda dR_4\\
    &\qquad +\frac{1}{2 (2\pi)^3} \tr \int_{B_{\xi\zeta}}\int_{B_{xy} \cap \mathcal{L}}\int_{\anot}^{\alpha} \varphi ' (\lambda)
    (\partial_\xi \sigma_z \{\sigma_z^{-1}, \sigma_z\}_{\zeta,y}\nu_x + \partial_\zeta \sigma_z\{\sigma_z^{-1}, \sigma_z\}_{\xi,x}\nu_y) \sigma_z^{-1}
    \Big \vert ^{\omega = 0^+}_{\omega=0^-} d\lambda d\ell dR_2\\
    &=: \sigma_{I,10} + \sigma_{I,11},
\end{align*}
where 
$dR_2 := d\xi d\zeta$ and $d\ell$ is the integration measure on $\mathcal{L}$.
Here, $\nu$ is the unit vector (outwardly) normal to the surface $\partial (\{g(x,y) \le 0\} \cap B_{xy})$.
Note that the other surface terms do not contribute; 
over $(\partial (\{g(x,y) \le 0\} \cap B_{xy})) \setminus (B_{xy} \cap \mathcal{L})$, either $P=0$ or the map $z \mapsto \sigma_z^{-1}$ is holomorphic and thus the difference between $\omega=0^+$ and $\omega=0^-$ vanishes.

We now verify that the volume term 
$\sigma_{I,10}$ vanishes.
First, observe that
\begin{align*}
    \int_B \int_{\anot}^{\alpha} \varphi ' (\lambda) P \Big (\partial_\xi (\partial_x \sigma_z \{\sigma_z^{-1}, \sigma_z\}_{\zeta,y}\sigma_z^{-1}) + \partial_\zeta (\partial_y \sigma_z \{\sigma_z^{-1}, \sigma_z\}_{\xi,x} \sigma_z^{-1})\Big) \Big \vert ^{\omega = 0^+}_{\omega=0^-} d\lambda dR_4 = 0,
\end{align*}
as we can use the fact that $P$ is independent of $(\xi,\zeta)$ to integrate the above left-hand side by parts in $\xi$ (first term) and $\zeta$ (second term) to obtain integrals over $\partial B_{\xi \zeta}$, over which $z \mapsto \sigma_z^{-1}$ is holomorphic. Therefore,
\begin{align*}
    \sigma_{I,10} = -\frac{1}{2 (2\pi)^3} \int_B \int_{\anot}^{\alpha} \varphi ' (\lambda) P \tr \Big (\partial_x &(\partial_\xi \sigma_z \{\sigma_z^{-1}, \sigma_z\}_{\zeta,y}\sigma_z^{-1}) + \partial_y (\partial_\zeta \sigma_z \{\sigma_z^{-1}, \sigma_z\}_{\xi,x} \sigma_z^{-1})\\
    &
    -\partial_\xi (\partial_x \sigma_z \{\sigma_z^{-1}, \sigma_z\}_{\zeta,y}\sigma_z^{-1}) - \partial_\zeta (\partial_y \sigma_z \{\sigma_z^{-1}, \sigma_z\}_{\xi,x} \sigma_z^{-1})  
    \Big) \Big \vert ^{\omega = 0^+}_{\omega=0^-} d\lambda dR_4.
\end{align*}
A brute force calculation 
reveals that the above trace vanishes, and thus indeed $\sigma_{I,10} = 0$.

For $\sigma_{I,11}$, we again use cyclicity of the trace to
verify that
\begin{align*}
    \tr (\partial_\xi \sigma_z \{\sigma_z^{-1}, \sigma_z\}_{\zeta,y} \sigma_z^{-1} \nu_x + \partial_\zeta \sigma_z\{\sigma_z^{-1}, \sigma_z\}_{\xi,x} \sigma_z^{-1} \nu_y) =\tr ( (\nu_x \partial_y \sigma_z - \nu_y \partial_x \sigma_z)
    \{\sigma_z^{-1}, \sigma_z\}_{\xi,\zeta} \sigma_z^{-1}).
\end{align*}
We recognize the first factor on the above right-hand side as the
derivative of $\sigma_z$ in the direction of $\mathcal{L}$ (with $t$ increasing).
Recalling the definition of $\tau$ in \eqref{eq:tauz}, we see that 
\begin{align*}
\partial_t \tau (t,\xi,\zeta) &= x_0 ' (t) \partial_x \sigma (x_0 (t), y_0 (t), \xi, \zeta) + y_0 ' (t) \partial_y \sigma (x_0 (t), y_0 (t), \xi, \zeta),\\
\partial_\xi \tau (t,\xi,\zeta) &= \partial_\xi \sigma (x_0 (t), y_0 (t), \xi, \zeta), \qquad \partial_\zeta \tau (t,\xi,\zeta) = \partial_\zeta \sigma (x_0 (t), y_0 (t), \xi, \zeta).
\end{align*}
Defining $N := \sqrt{(x_0'(t))^2 + (y_0'(t))^2}$, we have $(x_0',y_0') = N(-\nu_y,\nu_x)$ and $d\ell = N dt$.
We conclude that
\begin{align}\label{eq:final0}
    \sigma_{I,1} = \frac{1}{16\pi^3} \tr \int_{R} \int_{[\anot,\alpha]} \varphi' (\lambda) \partial_t \tau_z \{\tau_z^{-1}, \tau_z\}_{\xi,\zeta} \tau_z^{-1}\Big \vert^{\omega = 0^+}_{\omega = 0^-}d\lambda dR_3,
\end{align}
where 
$R = (t_1, t_2) \times B_{\xi\zeta}$ with $t_1 = \inf \{t:(x_0(t),y_0(t)) \in B_{xy}\}$ and $t_2 = \sup \{t:(x_0(t),y_0(t)) \in B_{xy}\}$, and
$dR_3 := dt d\xi d\zeta$.

\medskip

We next eliminate $\varphi '$ from \eqref{eq:final0}. Since $\partial_t \tau_z \{\tau_z^{-1}, \tau_z\}_{\xi,\zeta} \tau_z^{-1} \rightarrow 0$ as $|\omega| \rightarrow \infty$, we have
\begin{align*}
    \partial_t \tau_z \{\tau_z^{-1}, \tau_z\}_{\xi,\zeta} \tau_z^{-1} \Big \vert^{\omega=0^+}_{\omega =0^-} = -\int_{- \infty}^{+\infty} \partial_\omega (\partial_t \tau_z \{\tau_z^{-1}, \tau_z\}_{\xi,\zeta} \tau_z^{-1}) d\omega.
\end{align*}
Cyclicity of the trace and the fact that $\partial_\omega \tau_z = i$ imply that
\begin{align}\label{eq:omegaeps}
    \tr \partial_\omega (\partial_t \tau_z \{\tau_z^{-1}, \tau_z\}_{\xi,\zeta} \tau_z^{-1}) = -i \tr \eps_{ijk} \partial_k (\tau_z^{-1} \partial_i \tau_z \tau_z^{-1} \partial_j \tau_z \tau_z^{-1}),
\end{align}
where $\eps_{ijk}$ is the anti-symmetric tensor with $\eps_{123} = 1$, and the variables are identified by $(1,2,3) = (\xi,\zeta,t)$.
Pulling $\partial_k$ out of the integral over $\omega$ and integrating by parts, we get
\begin{align*}
    \tr \int_{R}\partial_t \tau_z \{\tau_z^{-1}, \tau_z\}_{\xi,\zeta} \tau_z^{-1} \Big \vert^{\omega=0^+}_{\omega = 0^-} dR_3 =i \int_{\partial R} \int_{-\infty}^{+\infty} \Theta d\omega d\Sigma, \qquad
    \Theta := \tr \eps_{ijk}\tau_z^{-1} \partial_i \tau_z \tau_z^{-1} \partial_j \tau_z \tau_z^{-1}\nu_k,
\end{align*}
where $\nu$ is the outward unit normal vector to the surface $\partial R$, and $\Sigma$ is the Euclidean surface measure in $\mathbb{R}^3$.
Thus we have shown that
\begin{align*}
    \sigma_{I,1} = \frac{i}{16\pi^3} \int_{[\anot,\alpha]} \varphi ' (\lambda)\int_{\partial R} \int_{-\infty}^{+\infty} \Theta d\omega d\Sigma d\lambda.
\end{align*}
Integrating by parts in $\lambda$, we obtain
\begin{align*}
    \sigma_{I,1} = \frac{i}{16\pi^3} \int_{\partial R} \int_{-\infty}^{+\infty} \Theta d\omega d\Sigma,
\end{align*}
with now $z = \alpha + i\omega$ in the above integrand (and from now on).
The fact that only the boundary term survives follows from
analyticity of $\Theta$ in $z$ over the region of integration (so that $\partial_\lambda \Theta = -i \partial_\omega \Theta$). 

Recall that $R = (-t_0,t_0) \times (-M,M)^2$, 
where $M>0$ can be chosen as large as necessary.
Observe that $|\Theta| \le C \aver{\omega}^{-3}$ and $|\Theta| \le C \aver{\xi,\zeta}^{-m-2}$ uniformly in $(t,\xi,\zeta) \in \partial R$ and $M>0$ sufficiently large, hence $|\Theta| \le C \aver{\omega}^{-3/2} \aver{\xi,\zeta}^{-\frac{m+2}{2}}$
by interpolation. It follows that $\int_{-\infty}^{+\infty} |\Theta|d\omega \le C\aver{\xi,\zeta}^{-\frac{m+2}{2}}$.
Therefore,
sending $M \rightarrow \infty$, we see that
    the contributions to $\sigma_{I,1}$ from the sides of $\partial R$ with normal vector in the $\xi$ and $\zeta$ directions vanish. Indeed, the area of these surfaces is proportional to $M$, with the maximum of the integrand bounded by $CM^{-1-m/2}$.
    Thus we are left with integrals over the sides corresponding to $t = \pm t_0$, over which $\tau (t,\xi,\zeta) = \tau_\pm (\xi, \zeta)$.
    As a consequence, $\sigma_{I,1} = \frac{i}{16\pi^3} (\invariantplus-\invariantminus)$.

\medskip

Recalling \eqref{eq:sigma12}, it remains to show that 
\begin{align} \label{eq:sigma2}
    \sigma_{I,2} :=-\frac{i}{2(2\pi)^3}\tr \int_{\mathbb{R}^4} \ktwo \int_{\mathbb{C}}
    \bar{\partial} \tilde{\varphi '} (z)\sigma_z^{-1}
    d^2 z d R_4 =0.
\end{align}
Define $P_\eps (x,y) := P (\eps (x-\tx, y-\ty))$, where
$(\tx, \ty) \in \mathbb{R}^2$ is chosen such that $P_\eps (x_0 (t), y_0(t)) = 1$ for all $\eps > 0$ and $t \in \mathbb{R}$.
Corollary \ref{cor:invP} implies that $\sigma_I (H,P_\eps)$ is independent of $\eps$, while $\sigma_{I,1}$ is independent of $\eps$ since 
$\invariantpm$ are. It follows that $\sigma_{I,2}$ must also be independent of $\eps>0$.

We will thus replace $P$ by $P_\eps$ and show that $\sigma_{I,2} \rightarrow 0$ as $\eps \downarrow 0$.
As stated in the paragraph below \eqref{eq:sigmaDiv}, the integral over $\mathbb{C}$ in \eqref{eq:sigma2} vanishes whenever $\aver{f(x,y),\xi,\zeta}$ is sufficiently large (uniformly in $\eps$). 
From the definition \eqref{eq:k12}, it follows that
$k_2 = \eps^2 \tilde{k}_{2,\eps}$, where 
$\tilde{k}_{2,\eps}$ vanishes whenever $\aver{\eps g(x-\tx, y-\ty)}$ is sufficiently large.
We conclude that 
there exist positive constants $M_1$ and $M_2$ such that $$|\sigma_{I,2}| \le C \eps^2 {\rm Vol} ( \{\aver{f(x,y)} \le M_1\} \cap \{\aver{\eps g(x-\tx,y-\ty)} \le M_2\}) \le C \eps,$$
with the above volume taken in the $xy$-plane.
We have thus shown that $\sigma_{I,2}=0$, and the proof is complete.
\end{proof}}





\end{document}